\newtheorem*{theorem*}{Theorem}
\newtheorem{lemma}{Lemma}[section]
\newtheorem{theorem}[lemma]{Theorem}
\newtheorem{corollary}[lemma]{Corollary}
\newtheorem{claim}[lemma]{Claim}
\newtheorem{proposition}[lemma]{Proposition}
\theoremstyle{definition}
\newtheorem{remark}[lemma]{Remark}
\newtheorem{definition}[lemma]{Definition}
\def\ceil#1{\lceil {#1} \rceil}
\def\poly{\mathrm{poly} }
\def\polylog{\mathrm{ polylog} }
\def\sol{\mathrm{SOL}}
\def\opt{\mathrm{OPT}}
\def\auglp{\textsc{AugLP}}
\def\eps{\epsilon}
\def\even{\mathrm{even}}
\def\odd{\mathrm{odd}}
\definecolor{gray}{rgb}{0.5,0.5,0.5}
\title{Streaming Algorithms for Connectivity Augmentation}
\author{Ce Jin\thanks{MIT. Work done during an internship at Microsoft Research--Redmond. Email:~\href{mailto:cejin@mit.edu}{cejin@mit.edu}} \and Michael Kapralov\thanks{EPFL. Email:~\href{mailto:michael.kapralov@epfl.ch}{michael.kapralov@epfl.ch}} \and Sepideh Mahabadi\thanks{Microsoft Research--Redmond. Email:~\href{mailto:smahabadi@microsoft.com}{smahabadi@microsoft.com}} \and Ali Vakilian\thanks{Technological Institute at Chicago (TTIC). Email:~\href{mailto:vakilian@ttic.edu}{vakilian@ttic.edu}}}
\date{}
\begin{document}
\maketitle
\begin{abstract}
    We study the $k$-connectivity augmentation problem ($k$-CAP) in the single-pass streaming model. Given a $(k-1)$-edge connected graph $G=(V,E)$ that is stored in memory, and a stream of weighted edges (also called links) $L$ with weights in $\{0,1,\dots,W\}$, the goal is to choose a minimum weight subset $L'\subseteq L$ of the links such that $G'=(V,E\cup L')$ is $k$-edge connected. We give a $(2+\epsilon)$-approximation algorithm for this problem which requires to store $O(\epsilon^{-1} n\log n)$ words. Moreover, we show the tightness of our result: Any algorithm with better than $2$-approximation for the problem requires $\Omega(n^2)$ bits of space even when $k=2$. This establishes a gap between the optimal approximation factor one can obtain in the streaming vs the offline setting for $k$-CAP.
    
    We further consider a natural generalization to the fully streaming model where both $E$ and $L$ arrive in the stream in an arbitrary order. We show that this problem has a space lower bound that matches the best possible size of a spanner of the same approximation ratio. Following this, we give improved results for spanners on weighted graphs: We show a streaming algorithm that finds a $(2t-1+\epsilon)$-approximate weighted spanner of size at most $O(\epsilon^{-1} n^{1+1/t}\log n)$ for integer $t$, whereas the best prior streaming algorithm for spanner on weighted graphs had size depending on $\log W$. We believe that this result is of independent interest. Using our spanner result, we provide an optimal $O(t)$-approximation for $k$-CAP in the fully streaming model with $O(nk + n^{1+1/t})$ words of space. 

    Finally we apply our results to network design problems such as Steiner tree augmentation problem (STAP), $k$-edge connected spanning subgraph ($k$-ECSS) and the general Survivable Network Design problem (SNDP). In particular, we show a single-pass $O(t\log k)$-approximation for SNDP using $O(kn^{1+1/t})$ words of space, where $k$ is the maximum connectivity requirement.
\end{abstract}

\section{Introduction}
In the (weighted) {\em $k$-connectivity augmentation} problem ($k$-CAP), given a $(k-1)$-edge-connected $n$-vertex graph $G = (V, E)$ (possibly with parallel edges) together with a set of weighted candidate edges (also called {\em links}) denoted by $L \subseteq \binom{V}{2}$ and their weights $w\colon L \rightarrow \{0,1,\dots,W\}$, the goal is to find a minimum weight subset $S \subseteq L$ of the links such that $(V, E\cup S)$ is $k$-edge-connected.
Augmenting connectivity is a crucial task for enhancing network reliability which can be used for strengthening the resilience of a network and ensuring uninterrupted access for all users.
$k$-CAP is among the most elementary questions in Network Design, which is an important area of discrete optimization. 
The iterative rounding method of~\citet{Jain01} provides a $2$-approximation for a more general problem of {\em survivable network design} problem (SNDP). Untill very recently, nothing better than $2$ approximation was known even for weighted {\em tree augmentation problem} (TAP). In a recent development, weighted $k$-CAP has witnessed breakthroughs with approximation factors below $2$~\citep{traub2022better,traub2022local,traub20231}. The state-of-the-art for weighted $k$-CAP is $1.5+\eps$ approximation.

In this work, we consider weighted $k$-CAP in the streaming model, which is one of the most common models for processing real-time and large-scale data.  
A graph streaming algorithm operates by processing a sequence of graph edges presented in any order (or in some applications in random order), reading them one by one.
The primary objective is to design algorithms that can process the entire edge sequence and output an approximately efficient solution, making just one (or a few passes), while utilizing limited memory resources. Ideally, the space usage of the algorithm should be significantly smaller than the size of the $n$-vertex input graph (with possibly $O(n^2)$ edges), preferably $O(n\cdot \mathrm{polylog}(n))$ memory, which is referred to as the \emph{semi-streaming} model~\citep{feigenbaum2004graph}.

While graph problems such as minimum spanning tree~\citep{ahn2012graph,sun2015tight,nelson2019optimal}, matching~\citep{DBLP:conf/approx/McGregor05,GoelKK12,assadi2016maximum,assadi2017estimating,DBLP:conf/soda/Kapralov21}, spanners, sparsifiers and shortest paths~\citep{FeigenbaumKMSZ08,DBLP:journals/ipl/Baswana08,DBLP:journals/talg/Elkin11,ahn2012graph,kapralov2014spanners,guruswami2016superlinear,fernandez2020graph,filtser2021graph} have received significant attention in the streaming model, the connectivity augmentation problem, has received comparatively very limited study in this context. Prior to our result, only testing $k$-connectivity in streaming was studied~\citep{zelke2006k,DBLP:conf/esa/CrouchMS13,sun2015tight}, which showed that testing $k$-edge-connectivity in streaming requires $\tilde \Theta(nk)$ space in one pass, and $\tilde \Theta(n)$ space in two passes \citep{DBLP:conf/innovations/RubinsteinSW18,assadi2021simple}. 

\subsection{Our Computational Models}\label{sec:model}
In this work, we study graph augmentation problems in the streaming model of computation. The input to the $k$-CAP problem consists of two pieces of information, namely the $(k-1)$-connected network $G$ and the set of links that can be used to augment connectivity. 

\paragraph{Link arrival streaming.} In the link arrival streaming model the graph $G$ is presented to the algorithm first, and the cost of storing it does not count towards the space complexity of the algorithm. This is akin to the oracle model that is routinely used to study submodular function maximization in the streaming model (e.g., in~\citep{badanidiyuru2014streaming,norouzi2018beyond}): One thinks of having an oracle for the function being maximized. For submodular function maximization it is not always clear how to implement this oracle in small space, but in our case the actual cost of storing a sufficient representation of the graph $G$ can be easily made $O(nk)$, and, with some work, even $O(n)$, as we now explain. 

Note that a minimally $k$-connected graph has size $O(nk)$. So if the graph has larger size, one can process the edges of $G$ (even in a streaming fashion) using a $k$-connectivity certificate of $G$ that preserves all cuts of value at most $k$, and store this compact representation in $O(nk)$ space. Finally, one can apply even a more efficient preprocessing that preserves a similar information via a {\em cactus} graph with $O(n)$ edges. Then the problem becomes streaming {\em cactus augmentation}. The cactus augmentation problem itself is a well-studied problem in particular for designing approximation algorithms for $k$-CAP.
To simplify the notation, throughout the paper, we assume the latter compact representation of size $O(n)$.

\paragraph{Fully streaming.} Besides the most natural link arrival model defined above, we study the more general model where the edges of $G$ and the links that can be used for augmentation may arrive in an interleaved fashion.  This model is quite general: in particular, it allows for the edges of $G$ to arrive {\em after} the links, in which case the algorithm must maintain a compressed representation of the stream of links that allows augmenting any given graph $G$ presented later!

For the other graph problems studied in this paper, namely spanner, SNDP and $k$-edge connected spanning subgraph ($k$-ECSS), we consider the standard edge arrival streams in which edges of the input graph arrives one by one in an arbitrary order stream.

\subsection{Our Results}
In this paper, we focus on insertion-only streams, and provide the first streaming algorithms for $k$-CAP in link arrival streams and fully streaming. Table~\ref{tab:results} summarizes our results.

\paragraph{Graph augmentation in link arrival.} 
We show tight results for weighted $k$-CAP in link arrival streams (see first row in Table~\ref{tab:results}). Note that, while we can achieve a factor $2+\epsilon$ approximation in $O(\frac{n}{\eps}\log n)$ words of space, our lower bound shows that getting better than $2$ approximation requires $\Omega(n^2)$ bits of memory. This establishes a gap between the streaming setting and the offline setting where strictly better than $2$ approximation algorithms are known (e.g., see~\citep{traub20231}). 
An easy argument shows that $\Omega(n)$ bits of space is necessary for achieving any approximation for $k$-CAP  in link arrival streams (Proposition~\ref{prop:lb-trivial-link-arrival}), so our algorithm has nearly-tight space complexity.
If one picks a $k$-connectivity certificate as the compact representation of $G$, the space complexity of the upper bound becomes $O(nk +\frac{n}{\eps}\log n)$.

Further, we study the Steiner tree augmentation problem (STAP) which is a generalization of the tree augmentation problem (TAP) in link arrival streams and provide matching upper and lower bounds (See the second row in Table~\ref{tab:results}). While our lower bound holds for link arrival streams, our algorithm works even in the more general fully streaming too. We remark that, while in the offline setting TAP and STAP admit similar approximations~\citep{ravi2022new}, there is a gap in their complexities in the streaming model.

\begin{table}[!ht]
\centering
{\renewcommand{\arraystretch}{1.2}%
\begin{tabular}{c|c|c|c|c|c}\toprule
{\bf Problem} & {\bf Pass} & {\bf Approx.} & {\bf Space} & {\bf Stream} & {\bf Notes}\\ \midrule\midrule
\multirow{4}{*}{$k$-CAP} & \multirow{4}{*}{$1$} & $2+\eps$ & $O(\frac{n}{\eps}\log n)$ & \multirow{2}{*}{link arrival} & Theorem~\ref{thm:link-arrival-alg}\\
& & $2-\eps$ & $\Omega(n^2)$ bits & & Theorem~\ref{thm:lb-link-arrival}\\
\cmidrule{3-5}
& & \multirow{2}{*}{$O(t)$} & $\tilde{O}(kn + n^{1+\frac{1}{t}})$  & \multirow{2}{*}{fully streaming}& Theorem~\ref{thm:fullystreamingkconn}\\
& & & $\Omega(kn + n^{1+\frac{1}{t}})$ bits & & Theorem~\ref{thm:lowerbound-combined}\\ 
\midrule
\multirow{2}{*}{STAP} & \multirow{2}{*}{$1$} & \multirow{2}{*}{$O(t)$} & $\tilde{O}(n^{1+\frac{1}{t}})$ & fully streaming & Corollary~\ref{cor:steinerfully}\\ 
&  &  & $\Omega(n^{1+\frac{1}{t}})$ bits & link arrival & Corollary~\ref{cor:lowerbound-steiner}\\
\midrule\midrule
\multirow{2}{*}{Spanner} & \multirow{2}{*}{$1$} & \multirow{2}{*}{$O(t)$} & $\tilde{O}(n^{1+\frac{1}{t}})$ & \multirow{2}{*}{edge arrival} & Theorem~\ref{thm:spanner-alg} \\
& & & $\Omega(n^{1+\frac{1}{t}})$ bits & & \footnotesize{Erd\H{o}s' girth conjecture}\\ \midrule\midrule
\multirow{2}{*}{SNDP} & \multirow{2}{*}{1} & $O(t \log k)$ & $\tilde{O}(kn^{1+\frac{1}{t}})$ & \multirow{2}{*}{edge arrival} &  Theorem~\ref{thm:sndp} \\ 
& & $O(t)$ & $\Omega(n^{1+\frac{1}{t}})$ bits & &Corollary~\ref{cor:lowerbound-steiner}
 \\ \midrule
$k$-ECSS & $k$ & $O(\log k)$ & $O(kn\log n)$ & edge arrival & Corollary~\ref{thm:k-ecss-alg}\\ 
\bottomrule
\end{tabular}
}
\caption{Summary of our results for $k$-CAP, STAP, Spanner and SNDP in steaming models.
All our problems are \emph{weighted}.
The space upper bounds are measured in words, while the lower bounds are in bits. We use $\tilde O(f)$ to mean $O(f\cdot \mathrm{polylog} f)$ (it does not hide $\log W$ factors).
All our algorithms are deterministic, whereas all lower bounds hold for randomized algorithms with constant success probability.
}
\label{tab:results}
\end{table}

\paragraph{Graph augmentation in fully streaming.}
We further show matching upper and lower bounds (up to a $\polylog(n)$ factor) for $k$-CAP in the fully streaming setting (see the lower section in the first row of Table~\ref{tab:results}).
The main component in our algorithm for solving $k$-CAP is an improved streaming algorithm for constructing \emph{spanners on weighted graphs}. 
In particular, our upperbound implies that spanner is an optimal ``universal'' augmentation set for $k$-CAP. 

\paragraph{Improved streaming spanner in weighted graphs.} Given an $n$-vertex graph $G = (V, E)$ with a weight function $w\colon E \rightarrow \{0,\dots, W\}$, a subgraph $H \subseteq G$ is a $t$-spanner of $G$ if for every $(u,v)\in E$, the shortest $uv$-path in $H$ has weight at most $t \cdot w(uv)$. In streaming spanner, which is a well-studied problem~
\citep{DBLP:journals/ipl/Baswana08,DBLP:journals/talg/Elkin11,ahn2012graph,kapralov2014spanners,fernandez2020graph}, edges of $E$ arrive in an arbitrary order stream.
While by using the standard weight-based partitioning trick, constructing an $O(t)$-spanner in $O(n^{1 + 1/t} \cdot \log W)$ words of space in one pass over the stream is straightforward (e.g., mentioned in~\citep{filtser2021graph}), it was not known whether the dependence on $\log W$ is crucial.\footnote{We remark that our contribution in removing the dependence on $\log W$ from the number of edges in spanner (and consequently from 
$k$-CAP) is conceptually interesting, as most graph streaming algorithms are mainly designed for unweighted graphs, and extending them to the weighted case typically incurs a $\log W$ loss.} 
Exploiting an even-odd bucketing approach, we provide a streaming algorithm with space complexity $O(n^{1 + 1/t} \cdot \log \min(W,n))$ words which by the well-known Erd\H{o}s girth conjecture is basically the best one can hope for up to logarithmic factors. 
We further apply this even-odd bucketing to the $k$-CAP problem in the link arrival setting, and obtain a (more technical) algorithm (Theorem~\ref{thm:link-arrival-alg}) with no dependence on $\log W$ in its space complexity.

\paragraph{Streaming SNDP.} Finally, we describe an application of our results for designing the {\em first} one-pass streaming algorithms for the problem in insertion only edge arrival streams, where the edges of the input graph arrive in an arbitrary order stream. 

In SNDP, given a graph $G=(V, E)$ with a weight function $w: E \rightarrow \{0, 1, \dots, W\}$ together with a \emph{connectivity requirement} $r \colon V\times V \rightarrow \mathbb{Z}_{\ge 0}$, the goal is to find a minimum weight subgraph $H\subseteq G$ so that for every $s,t\in V$, $H$ contains $r(st)$ edge-disjoint paths connecting $s$ and $t$. A parameter of interest in SNDP is the maximum connectivity requirement $k = \max_{st} r(st)$.
SNDP is a classic problem in combinatorial optimization and generalizes several well-studied problems such as MST, Steiner tree, $k$-edge connected spanning subgraph ($k$-ECSS), and $k$-CAP. 

The fourth row of Table \ref{tab:results} shows our results for SNDP in edge arrival streams. In fact, our streaming algorithm works even for the more general problem of covering proper functions of the form $f: 2^V \rightarrow \{0,1, \dots, k\}$ using the edges of $G$ (see Section~\ref{sec:sndp} for more details).  

$k$-ECSS, which itself is a basic problem in discrete optimization, is a variant of SNDP in which for every $s,t \in V$, $r(st) = k$. As a straightforward application of our algorithm for $k$-CAP in link arrival streams, we get a $k$-pass, $O(\log k)$-approximation for $k$-ECSS using $O(kn\log n)$ words of space. (See last row of Table \ref{tab:results}). 

\paragraph{Unweighted variant.} 
We remark that while we get tight algorithms for weighted $k$-CAP in both link arrival and fully streaming models, our lower bounds for link arrival does not hold for unweighted graphs.
In Corollary \ref{col:lb-unweighted}, by a reduction from bipartite matching and invoking the result of \cite{DBLP:conf/soda/Kapralov21}, we observe a weaker lower bound that no streaming algorithm with $n\polylog n$ space can achieve an approximation factor better than $1.409$.
Therefore, it remains an interesting open question to close the gap between $1.409$ and $2$ for unweighted $k$-CAP.
Again, given that this lower bound is for tree augmentation, and the best known algorithm for (offline) TAP in unweighted graphs achieves an approximation factor of $1.326$ \citep{garg2023improved}, this again shows a gap between the two models for the problem in the unweighted variant.

\subsection{Our Techniques}
Given a streaming algorithm for the unweighted variants of both $k$-CAP and the spanner problems, an easy generalization to the weighted graphs is by partitioning the set of weights into $\log_{1+\eps} W$ number of classes and roughly running the unweighted sparsification on each class, resulting in $\eps^{-1}\log W$ blow up in the space usage. To remove the dependency on $\log W$ from the number of words, we follow an \emph{even-odd bucketing} approach. More precisely, we partition the weights into much larger classes (i.e., \emph{buckets}), such that the minimum and maximum weight in each class differ by $\poly(n)/\epsilon$. This ensures that first, inside each class one can perform the weight-based partitioning to solve the problem while having only $\log n$ dependence in the space. Second, even picking all the edges from the $(i-2)$-th class $E_{i-2}$ is cheaper than picking any edge in the $i$-th class $E_i$ (i.e., it only introduces an extra $(1+\eps)$ multiplicative factor). 
This assumption allows us to infer additional properties about the graph once we are processing the edges in the class $E_i$, and shrink the problem significantly from each level $E_{i-2}$ to $E_i$. Thus our algorithm proceeds by separating the sparsification for the even-indexed buckets $E_{2i}$ and the odd-indexed buckets $E_{2i-1}$, and processes the buckets from smallest to largest weights.

\paragraph{Spanner.} First, consider the spanner problem, and let $\mathcal{C}=\{C_1,\dots,C_r\}$ be the set of connected components created by the edges from the classes upto $E_{i-2}$. The even-odd bucketing ensures that we only need to consider the edges from $E_i$ that are between two different components of $\mathcal{C}$. 
Thus, we shrink each connected component into a super-node and use the standard spanner algorithm with weight-based partitioning on this reduced graph.
Note that the space usage of the algorithm is proportional to the number of \emph{super-nodes with non-zero degree}. However, all such super-nodes will merge into bigger components for the next bucket $E_{i+2}$. Therefore the space usage of the algorithm for processing $E_i$ can be charged to the reduction in the number of super-nodes. 
Since the number of super-nodes starts from $n$ and goes down to $1$, the total space usage of the algorithm can be bounded as a function of $n$.
Finally, we need to perform the above process in a streaming setting: As we receive more edges in the stream, the components in $\mathcal{C}_i$ change but it is easy to maintain all required information in a streaming fashion.

\paragraph{Link arrival $k$-CAP.}
Our algorithm for $k$-CAP is more involved. First, by standard results in the literature, the problem reduces to cycle augmentation: given a cycle $C$, the goal is to augment it with a subset of edges from $L$ such that the resulting graph becomes $3$-edge-connected. Let the nodes on the cycle be indexed $1$ to $n$ in this order with vertex $1$ being called the \emph{root}. Now every cut of size $2$ corresponds to two edges on the cycle. We specify such a cut with the interval $[i,j]$ with $1 < i\leq j\leq n$ that does not include the root. The goal is to cover all such cuts specified by these intervals.

First, using known ideas from~\citep{KhullerT93,KhullerV94}, we present a simple streaming algorithm for the unweighted variant of the problem as follows (See Figure~\ref{fig:cycle-aug}). We replace every link $uv$ by two directed links $\vec{uv}$ and $\vec{vu}$, (this is where the factor $2$ in the approximation comes from), and we say that $\vec{uv}$ covers a cut $[i,j]$ if $v\in [i,j]$ and $u\notin [i,j]$. Now one can show that for $1\leq u < u' < v$, it is always better to keep the edge $\vec{uv}$ than $\vec{u'v}$. Similarly, for $v < u' < u \leq n$, it is always better to keep the edge $\vec{uv}$ than $\vec{u'v}$. As a result, for each vertex, we keep at most two incoming edges. Therefore, the total space usage of the algorithm is only $O(n)$ in this case. Again this algorithm can be generalized to the weighted graphs using a weight-based partitioning, introducing a factor $\log W$.
\begin{figure}
\centering
\includegraphics[width=.4\textwidth]{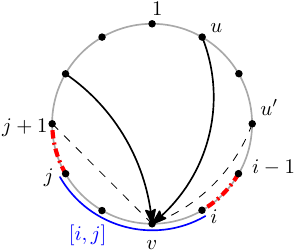}
\caption{This figure illustrates the process of edge sparsification in a cycle for connectivity augmentation.}\label{fig:cycle-aug}
\end{figure}

To remove the dependency on $\log W$, again we consider the even-odd bucketing. This time, for each weight class $E_i$, we consider the $3$-edge-connected components $C_1,\dots, C_r$ formed by the edges in buckets up to $E_{i-2}$. Again using the even-odd bucketing plus the fact that the cycle is already $2$-edge-connected, we can show that shrinking each of the $3$-connected components into a super-node still works. The main challenge is that as opposed to the spanner setting, the problem on the super-node does not reduce to the same problem of cycle augmentation. This is because a single super-node does not necessarily span a consecutive set of vertices on the cycle. However, we note that in this case, the min-cuts on the cycle that do not fully include or fully exclude the vertices in a single super-node do not need to be considered. This allows us to reduce the space usage of the algorithm again to be proportional to the number of super-nodes and thus bound the total space usage of the algorithm as a function of $n$.

\paragraph{Fully streaming $k$-CAP.} Our algorithm in this setting maintains two sketches. 
First, it keeps a $k$-connectivity certificate on the set of edges $E$ using a folklore streaming algorithm that keeps $k$ disjoint forests, which contains the information of all min-cuts of $E$ that need to be augmented in $k$-CAP.
Second, employing our results on weighted spanners, the algorithm maintains a spanner for the set of (weighted) links. This means that every link $\ell$ of weight/length $w$  that we miss, can be replaced with a path of weight at most $O(t)\cdot w$, thus covering all the min-cuts originally covered by $\ell$. We show that this is a near-optimal algorithm one can get in this setting.

\paragraph{Lower bounds.} Most of our lower bounds are via simple reductions from the INDEX problem in a two-party communication model, where we embed the bit-string held by Alice into edges of a graph, where by asking augmentation queries, Bob is able to tell whether edge $(u,v)$ exists in Alice's graph for any pair of vertices $u,v$.
The most interesting one of our lower bounds (Theorem~\ref{thm:lowerbound-t}) shows that, in the fully streaming model, the space complexity for storing a spanner is essentially necessary. In the proof we let Alice hold a subgraph of a high-girth graph, and Bob wants to estimate the distance in this graph between $u,v$ (which is sufficient for telling whether $(u,v)$ is an edge, due to the high girth).
Our proof reduces this problem of estimating the distance between $u,v$ to the problem of augmenting a chain with end points $u,v$ into a $2$-edge-connected graph. However, we also need rule out potential augmentation solutions that do not correspond to a $uv$-path.

\paragraph{Applications.}
Our algorithms for streaming connectivity augmentation also imply streaming algorithms for problems such as STAP, $k$-ECSS and SNDP.
In particular, our one-pass algorithm for SNDP works by running $k$ instances of our streaming spanner algorithm in parallel, which store $k$ disjoint sparse subgraphs of the input graph that satisfy certain approximation guarantee. In particular, we show these $k$ disjoint ``spanner-like'' objects forms a coreset for SNDP instances with maximum connectivity requirement at most $k$.\footnote{In fact, the coreset guarantee holds even for the more general covering proper functions of the form $f: 2^V \rightarrow \{0,1, \dots, k\}$.} 
Our approach follows the augmentation framework of \citep{williamson1993primal,goemans1994improved} to show the existence of an approximately good solution using edges from these $k$ sparse subgraphs.

\subsection{Related Work}\label{app:related-work}
\paragraph{Approximation algorithms of $k$-CAP.} 
The edge-connectivity of a graph plays a central role in a wide range of network design problems, spanning both classical and modern problems. 
While the celebrated iterative rounding technique of~\citep{Jain01} provides a $2$-approximation for most of these problems, any better than $2$-approximation for them are among main open problems within the field of approximation algorithms. 

Significant progress has been made in achieving better than a factor of $2$-approximation for specific instances of the weighted $k$-CAP.\footnote{For this problem, $2$-approximation can be obtained via both standard primal-dual~\citep{goemans1994improved} and iterative rounding~\citep{Jain01} techniques as well as combinatorial approach of~\citep{KhullerT93}.} Notably, extensive research focusing on the well-studied unweighted TAP has led to breakthroughs~\citep{nagamochi2003approximation,even20091,kortsarz2015simplified,grandoni2018improved,cecchetto2021bridging}, culminating in an approximation factor of $1.326$~\citep{garg2023improved}. Remarkably, this same factor has also been achieved for the unweighted $k$-CAP~\citep{cecchetto2021bridging}, a problem that recently saw significant advancements surpassing the 2-approximation barrier~\citep{byrka2020breaching}. Moreover, in a recent development, the weighted TAP and $k$-CAP have witnessed breakthroughs with approximation factors below $2$~\citep{traub2022better,traub2022local,traub20231}. It is noteworthy that these advancements in the weighted variants are relatively recent in the research landscape.

The Steiner tree augmentation problem, in which given a Steiner tree $T \subset G = (V, E)$ over terminals $R\subset V$ the goal is to find a minimum weight set of edges $H\subseteq G\setminus T$ that increases the connectivity of the set $R$ to $2$, has also been studied and recently: \cite{ravi2022new} provides $(1.5+\eps)$-approximation generalizing some of the techniques in~\citep{traub2022local}. More generally, the augmentation framework is among the classical techniques for designing approximation algorithms for general connectivity problems~\citep{williamson1993primal,goemans1994improved,nutov2010approximating}.

\paragraph{SNDP.} Similarly to $k$-ECSS, the augmentation variant of SNDP has been extensively studied and is significant in the development of approximation algorithms for different variations of SNDP.
Notably, the augmentation variant of SNDP generalizes well-studied problems such as TAP, STAP and $k$-CAP. 
The augmentation variant of SNDP was originally studied to analyze the primal-dual methods for SNDP, leading to $k$ and $\log k$ approximations~\citep{williamson1993primal,goemans1994improved}, and compared to the state-of-the-art $2$-approximation iterative rounding technique of~\citep{Jain01} has the advantage of applicability to other variants of SNDP such as node-weighted SNDP~\citep{nutov2010approximating,chekuri2021node,chekuri2012prize} or vertex-connectivity SNDP~\citep{kortsarz2005approximating,fakcharoenphol2008log2,cheriyan2014approximating}.

\paragraph{Spanners and sparsifiers.}
Graph spanners are important tools for graph compression in which the distances between the nodes are preserved. See \citep{DBLP:journals/csr/AhmedBSHJKS20} for a survey on graph spanners in general. Spanners have also been studied extensively in the streaming setting, see e.g., \citep{DBLP:journals/ipl/Baswana08,ahn2012graph,DBLP:journals/talg/Elkin11,kapralov2014spanners,filtser2021graph}. For other notions of graph sparsifiers in the streaming model, see e.g., \citep{DBLP:journals/siamcomp/KapralovLMMS17,DBLP:conf/soda/KapralovMMMNST20}.

\subsection{Organization.}
In Section~\ref{sec:linkarrive}, we present our $(2+\eps)$-approximate algorithms for $k$-CAP in the link arrival model, and present a lower bound showing our approximation ratio is close to optimal.
In Section~\ref{sec:fullystream}, we study $k$-CAP in the fully streaming model, and present matching space lower bounds and upper bounds assuming our weighted spanner result.
In Section~\ref{sec:spanner}, we present our weighed spanner algorithm in the streaming model with better $\log W$ dependence.
Finally in Section~\ref{sec:further}, we present further applications to other network design problems such as $k$-ECSS and SNDP.

\section{Connectivity Augmentation in Link Arrival Streams}
\label{sec:linkarrive}
In this section, we consider $k$-CAP, the problem of augmenting the connectivity of a given graph $G=(V,E)$ from $k-1$ to $k$ using a subset of weighted links $L \subseteq \binom{V}{2}$ in link arrival streams. To recall, in the link arrival model, a cactus representation of the graph $G$, which is of size $O(n)$ (see Definition~\ref{def:cactus} for the formal definition of cactus), is given to us in advance and the set $L$ arrives in the stream (see Section~\ref{sec:model}). 

\begin{theorem}\label{thm:link-arrival-alg}
The $k$-connectivity augmentation problem ($k$-CAP) on ($G =(V, E),L$) in the link arrival model admits a one-pass $(2+\eps)$-approximation algorithm with total memory space $O(\frac{n}{\eps} \log \min(n,W))$ words, where $W= \max_{e\in E} w(e)$. 
\end{theorem}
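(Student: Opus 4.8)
## Proof Proposal

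\textbf{Reduction to cycle augmentation.} The plan is to first reduce $k$-CAP to the canonical \emph{cycle augmentation} problem, following standard arguments in the literature. Using the cactus representation of $G$ given in advance (of size $O(n)$), every minimum cut of value $k-1$ in $G$ corresponds to a structured family of cuts, and it is known that the general cactus-augmentation problem can be reduced — via the ``half-integrality'' / cycle-contraction machinery of prior work — to augmenting a single cycle $C$ on the $n$ cactus-tree nodes to $3$-edge-connectivity using links from $L$. After this reduction, index the cycle vertices $1,\dots,n$ with vertex $1$ as the root, so every $2$-cut of the cycle is identified with an interval $[i,j]$, $1 < i \le j \le n$, not containing the root; the task is to select a minimum-weight set of links covering all such intervals, where link $uv$ (split into directed copies $\vec{uv},\vec{vu}$) covers $[i,j]$ iff exactly one endpoint lies in $[i,j]$.

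\textbf{The warm-up $O(n\log W)$ algorithm.} Next I would record the simple streaming algorithm for the weighted case with a $\log W$ loss: partition links into weight classes by powers of $(1+\eps)$, and within each class run the ``keep at most two incoming directed links per vertex'' sparsification of \citet{KhullerT93,KhullerV94} — for a fixed head $v$, among directed links $\vec{uv}$ with $u<v$ it is always at least as good to keep the one with smallest $u$, and symmetrically among those with $u>v$ keep the one with largest $u$. This yields $O(n)$ links per class and $O(n\log W)$ overall, and a $(2+\eps)$-approximation. This is not yet the claimed bound; it serves to isolate what the even–odd bucketing must improve.

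\textbf{Even–odd bucketing to remove $\log W$.} The heart of the argument is to replace the fine weight classes with coarse \emph{buckets} $E_1,E_2,\dots$ where the max-to-min weight ratio within a bucket is $\poly(n)/\eps$, so that (a) inside a bucket we may still run the power-of-$(1+\eps)$ sub-partitioning at cost only $O(\eps^{-1}n\log n)$ per bucket, and (b) taking \emph{all} links of bucket $E_{i-2}$ is cheaper than a single link of $E_i$ up to a $(1+\eps)$ factor. I would process buckets in increasing weight order, maintaining the $3$-edge-connected components $C_1,\dots,C_r$ of the cycle together with the links already committed from buckets $\le E_{i-2}$ (by property (b) these are essentially ``free'' relative to $E_i$, so we commit them greedily). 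When processing $E_i$, contract each $C_j$ to a super-node: since the cycle is already $2$-edge-connected and the $C_j$ are $3$-edge-connected, any $2$-cut interval $[i,j]$ that neither fully contains nor fully excludes some $C_j$ is already covered and can be ignored — so only links \emph{between} distinct super-nodes matter, and we run the warm-up sparsification on the contracted instance, keeping $O(1)$ directed links per super-node per sub-class, i.e.\ $O(\eps^{-1} (\#\text{active super-nodes})\log n)$ space. The key accounting point: every super-node touched by a kept link of $E_i$ merges into a strictly larger $3$-edge-connected component before bucket $E_{i+2}$ is processed (the even and odd index chains are handled on separate ``tracks'' so this telescopes cleanly), so the space charged across all buckets of a fixed parity telescopes against the drop in the super-node count from $n$ down to $1$, giving $O(\eps^{-1}n\log n)$ total, and $O(\eps^{-1}n\log\min(n,W))$ after noting we never need more than $\min(\log n,\log W)$ effective scales. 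Maintaining the component structure and the per-super-node best links under streaming arrivals is routine union-find bookkeeping.

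\textbf{Correctness and approximation.} Finally I would argue the ratio: the directed-link doubling loses a factor $2$; the power-of-$(1+\eps)$ rounding inside buckets loses $(1+\eps)$; committing all of $E_{i-2}$ when any $E_i$ link is used loses another $(1+\eps)$ via property (b) of the bucketing (the geometric sum over the at most $O(\log_n W)$ relevant shifts is absorbed). The nontrivial correctness obstacle — and the step I expect to be hardest — is showing the contraction is \emph{lossless}: that ignoring every $2$-cut interval that splits a super-node does not discard a cut some optimum solution was relying on, i.e.\ that restricting attention to inter-super-node links in $E_i$ together with the already-committed cheaper links still admits a $(2+\eps)$-competitive solution. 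This requires carefully pairing the structure of $3$-edge-connected components on a $2$-edge-connected cycle with the interval representation of $2$-cuts, and arguing that any interval crossing a super-node boundary is covered by the within-bucket-$\le E_{i-2}$ links that we committed for free; getting the bucket ratio $\poly(n)/\eps$ exactly right so that the ``$\log n$ vs.\ $\log W$'' min comes out and the geometric losses stay within $(1+O(\eps))$ is where the bulk of the technical care goes.
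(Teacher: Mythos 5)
Your proposal follows essentially the same approach as the paper's proof: reduce to cycle augmentation, use the Khuller--Thurimella directed-link sparsification as a warm-up, and eliminate the $\log W$ dependence via even--odd coarse bucketing, contracting $3$-edge-connected components into super-nodes and charging the per-bucket storage against the telescoping drop in the super-node count (your observation that every touched super-node must merge because the base cycle already supplies two edge-disjoint paths is in fact a clean, slightly more elementary route to the paper's key counting lemma, which the paper instead proves via a cactus-to-bipartite-tree conversion and a marked-edge count). The step you explicitly flag as needing care---verifying that the contraction is lossless, i.e.\ that stored arcs can replace arcs of the optimum and that any $2$-cut split by a super-node is already covered by the committed cheap links---is indeed where the paper invests its technical effort, handled there by a three-case analysis over whether a given $2$-cut persists as a $2$-cut of $G_{k^*-2}$ and $G_{k^*-3}$.
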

Note that the augmentation set itself may have size $\Omega(n)$\footnote{As an example, consider a graph $G=(V,E)$ where $V=\{0,1,\dots,n-1\}$ and $E=\{(i,j): j-i\in \{1,2,\dots,k\}\}$ (where indices are modulo $n$), which has edge connectivity $2k$. If the link set is $L=\{(i,i+1):i\in [n]\}$, then at least $\lceil n/2\rceil$ links are necessary to increase the edge connectivity of $G$ to $2k+1$.}, so any algorithm that explicitly stores an augmentation solution in memory must have space complexity $\Omega(n)$. Moreover, we will show in Proposition~\ref{prop:lb-trivial-link-arrival} that just approximating the optimal total weight of the augmentation solution to any factor already requires $\Omega(n)$ bits of space. Hence, the space complexity of our algorithm is tight up to a poly-logarithmic factor. 
\textcolor{blue}{
}
Next, we describe our algorithm for $k$-CAP.

\subsection{Preliminaries}
\paragraph{Cactus representation of min-cuts.}
To increase the edge-connectivity of a $(k-1)$-connected graph $G$ to $k$, we need to add links to {\em cover} all min-cuts of size $k-1$. That is, for each cut $S$ of size $k-1$ (i.e., $|\delta_{G}(S)| = k-1$), we must add a link $e\in L$ such that $e\in \delta(S)$. \citet{DinitsKL76} showed there is a compact representation of all min-cuts of an undirected graph by a {\em cactus graph}.

\begin{definition}[Cactus Graph]\label{def:cactus}
A cactus graph is a $2$-edge-connected graph $C=(V_C, E_C)$ where each edge in $E_C$ belongs to exactly one simple cycle. Note that we allow cycles of length $1$ or $2$ too.
\end{definition} 

\begin{lemma}[\cite{DinitsKL76}]\label{lem:cactus-rep}
Let $G=(V,E)$ be an undirected graph. There is a loopless cactus $C=(V_C,E_C)$ of size at most $2n-1$ and a mapping $\varphi: V\rightarrow V_C$ so that a subset $S \in V$ is a min-cut of $G$ if and only if $\varphi(S)$ is a min-cut of C. 

Moreover, when the min-cut size of $G$ is an odd integer, the cactus representation of $G$ is a spanning tree (we may still treat it as a cactus by duplicating each tree edge).
\end{lemma}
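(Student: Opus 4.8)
The plan is to prove this classical fact of \citet{DinitsKL76} by uncrossing. Write $\lambda$ for the edge-connectivity of $G$ and, for $S\subseteq V$, let $d(S):=|\delta_G(S)|$; a \emph{min-cut} is a set $\emptyset\neq S\subsetneq V$ with $d(S)=\lambda$, and we identify $S$ with $V\setminus S$. Call min-cuts $A,B$ \emph{crossing} if the four ``quadrants'' $A\cap B$, $A\setminus B$, $B\setminus A$, $V\setminus(A\cup B)$ are all nonempty. The first step is the standard uncrossing computation: counting edges across the quadrants gives $d(A)+d(B)=d(A\cap B)+d(A\cup B)+2e(A\setminus B,B\setminus A)$ and $d(A)+d(B)=d(A\setminus B)+d(B\setminus A)+2e(A\cap B,V\setminus(A\cup B))$. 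For crossing min-cuts the left side is $2\lambda$ and every term on the right is at least $\lambda$ (the $2e(\cdot,\cdot)$ terms being at least $0$), so all the quadrant-unions and -differences are again min-cuts and both ``diagonal'' edge sets $e(A\cap B,V\setminus(A\cup B))$ and $e(A\setminus B,B\setminus A)$ are empty. A short further computation (using $d(A)=\lambda$, the vanishing of the diagonals, and $d(A\cap B)=d(A\setminus B)=\lambda$) then shows $G$ has edges only between ``cyclically adjacent'' quadrants, each such bundle having exactly $\lambda/2$ edges; in particular, \emph{if any two min-cuts cross then $\lambda$ is even}. This already settles the last sentence of the lemma: when $\lambda$ is odd no two min-cuts cross, so they form a laminar family, which is represented in the usual way by a tree on at most $2n-1$ nodes (regarded as a cactus by doubling every edge).

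For even $\lambda$ I would build the cactus by recursion on $|V|$. Say two min-cuts are \emph{linked} if joined by a chain of consecutively-crossing min-cuts; a linked class $\mathcal F$ containing a crossing pair has, by iterated uncrossing, a circular shape: there is a partition $V=P_1\sqcup\dots\sqcup P_m$ with $m\ge3$ such that the members of $\mathcal F$ are \emph{exactly} the unions of cyclically consecutive blocks of length $1,\dots,m-1$, and $G$ has edges only between cyclically consecutive blocks (each bundle of size $\lambda/2$). Represent $\mathcal F$ by a cycle $z_1\cdots z_m$. Because $\mathcal F$ is a maximal linked class, every min-cut of $G$ that is not a circular interval of $\mathcal F$ must, for each $i$, fail to cross $P_i$ (note $d(P_i)=\lambda$, so $P_i\in\mathcal F$), hence either is contained in a single block or its complement is; this classification is exactly what makes the recursion work. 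For each $i$ let $G_i$ be $G$ with $V\setminus P_i$ contracted to a vertex $x_i$: one checks $\lambda(G_i)=\lambda$ and that the min-cuts of $G_i$ are $\{x_i\}$ together with the min-cuts of $G$ lying inside $P_i$. Recursively obtain a cactus $C_i$ and map $\varphi_i$ for $G_i$, glue all the $C_i$ by identifying each $\varphi_i(x_i)$ with the cycle node $z_i$, and set $\varphi(v):=\varphi_i(v)$ for $v\in P_i$. By the classification, the min-cuts of $G$ are in bijection with the min-cuts of the resulting cactus (whose min-cuts are exactly the pairs of edges lying on a common cycle). When $\lambda$ is odd the recursion never leaves the laminar case, so the output is a tree.

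It remains to bound the number of nodes. First put the cactus in canonical form: repeatedly contract a $2$-cycle one of whose endpoints is an empty node, and more generally eliminate any empty node lying on at most two cycles — none of these operations change the represented family, and afterwards the correspondence $S\mapsto\varphi(S)$ is injective and every empty node lies on at least $3$ cycles (equivalently has cactus-degree at least $3$; an empty node on one cycle would give a cactus-min-cut mapping to $\emptyset$, and an empty node on two cycles is redundant). Form the auxiliary tree $\widehat T$ whose vertices are the cactus nodes together with the cactus cycles, joining a node to each cycle through it. Cycle-vertices have $\widehat T$-degree $\ge2$ (cycles are loopless, hence of length $\ge2$) and empty nodes have $\widehat T$-degree $\ge3$, so every leaf of $\widehat T$ is a nonempty cactus node; since distinct nonempty nodes have disjoint preimages there are at most $n$ of them, hence $\widehat T$ has at most $n$ leaves. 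The identity $\sum_v(\deg_{\widehat T}(v)-2)=-2$ then gives $\#\{v:\deg_{\widehat T}(v)\ge3\}\le(\#\text{leaves})-2\le n-2$, so there are at most $n-2$ empty nodes, for a total of at most $n+(n-2)=2n-2\le 2n-1$ cactus nodes.

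The main obstacle is the circular-structure claim in the second paragraph: the crossing relation is not transitive, so one must show that a maximal \emph{pairwise}-crossing subfamily of a linked class already pins down a cyclic order on suitably defined atoms, and then that every remaining member of the class slots consistently into that cyclic order and into the $\lambda/2$-bundle picture — this is where all the uncrossing identities of the first step get used together. A secondary but genuine source of care is the bookkeeping around the recursive gluing: verifying that contracting $V\setminus P_i$ neither destroys a min-cut of $G$ inside $P_i$ nor creates a spurious one, that no new min-cuts appear in the glued cactus, and that the canonicalization step restores injectivity without disturbing any represented cut.
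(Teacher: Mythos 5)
This lemma is not proved in the paper at all: it is imported verbatim from \citet{DinitsKL76} (with the algorithmic remark deferring to \citet{NagamochiI92} and \citet{soda/KargerP09}), so there is no in-paper argument to compare yours against; what you have written is an attempt to reprove the classical cactus representation theorem from scratch. As such an attempt, your outline does follow the standard route (the one taken, e.g., in Fleiner--Frank's short proof): the uncrossing identities, the conclusion that crossing min-cuts force all four quadrants to be min-cuts with empty diagonals and $\lambda/2$-bundles between cyclically adjacent quadrants (hence $\lambda$ even), the cross-free/tree case for odd $\lambda$, and the recursion over a circular partition for even $\lambda$ are all the right ingredients, and your uncrossing computations are correct. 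But the proposal is an outline rather than a proof: the circular-partition claim --- that a maximal linked class determines a partition $V=P_1\sqcup\dots\sqcup P_m$ whose members are \emph{exactly} the unions of cyclically consecutive blocks, with edges only between consecutive blocks --- is the entire content of the Dinits--Karzanov--Lomonosov theorem, and you explicitly leave it unproved ("the main obstacle"). The same goes for the bookkeeping you flag around the contractions $G_i$ and the gluing (no min-cut of $G$ inside $P_i$ is lost or created, and no spurious cactus min-cut appears); also, for odd $\lambda$ the family is cross-free rather than laminar, so the tree representation needs the standard cross-free-to-tree step, which you should at least invoke.

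A second, independent gap is the size bound. Your leaf-counting argument in the auxiliary tree $\widehat T$ is fine \emph{provided} every empty node lies on at least three cycles, but the canonicalization you invoke to get there --- "eliminate any empty node lying on at most two cycles; none of these operations change the represented family" --- is asserted without an operation that achieves it. The naive move (merging the two cycles through such a node into one) strictly enlarges the represented family, since pairs consisting of one edge from each old cycle now also yield cactus min-cuts; and contracting a $2$-cycle at an empty endpoint can destroy a genuinely represented min-cut unless that cut is represented elsewhere. So either you must prove a real normalization lemma (which for min-cut families of graphs does hold, but uses the circular-partition structure again), or you must argue directly that your recursive construction only creates empty nodes at gluing points incident to at least three cycles. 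Until one of these is supplied, the $2n-1$ bound --- which the paper's statement includes --- does not follow from your argument.
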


The cactus representation is particularly useful for connectivity augmentation problems:
\begin{corollary}\label{cor:cactus}
Let $G=(V,E)$ be an undirected graph. Let $C$ denote the cactus representation of min-cuts in $G$. Then a link $(u,v)\in E\setminus E_H$ crosses a min-cut $S$ in $G$ if and only if the corresponding link $(\varphi(u),\varphi(v))$ crosses $\varphi(S)$ in $C$.
\end{corollary}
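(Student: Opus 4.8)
The statement to prove is Corollary~\ref{cor:cactus}, which is essentially an immediate consequence of Lemma~\ref{lem:cactus-rep} (the Dinitz–Karzanov–Lomonosov cactus representation theorem). Let me sketch the proof plan.

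\bigskip

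The plan is to deduce Corollary~\ref{cor:cactus} directly from the structural correspondence given by Lemma~\ref{lem:cactus-rep}. Recall that a link $(u,v)$ crosses a cut $S \subseteq V$ exactly when $|\{u,v\} \cap S| = 1$, i.e. exactly one of $u,v$ lies in $S$. First I would fix a min-cut $S$ of $G$ together with its image $\varphi(S) \subseteq V_C$, which by Lemma~\ref{lem:cactus-rep} is a min-cut of $C$. The key observation is that the crossing relation is preserved under $\varphi$ in both directions: since $S$ is a cut of $G$, the vertex sets $S$ and $V \setminus S$ are exactly the preimages under $\varphi$ of $\varphi(S)$ and $V_C \setminus \varphi(S)$ respectively on the support of $\varphi$ — more precisely, $u \in S \iff \varphi(u) \in \varphi(S)$ for any vertex $u$, because a genuine cut $S$ does not ``split'' a fiber $\varphi^{-1}(x)$ (all vertices mapping to the same cactus node are on the same side of every min-cut, as otherwise $\varphi(S)$ would not be well-defined as a cut of $C$).

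\bigskip

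With that observation in hand, the argument is a one-line chain of equivalences: $(u,v)$ crosses $S$ in $G$ $\iff$ exactly one of $u,v$ is in $S$ $\iff$ exactly one of $\varphi(u),\varphi(v)$ is in $\varphi(S)$ $\iff$ $(\varphi(u),\varphi(v))$ crosses $\varphi(S)$ in $C$. One should also note the degenerate possibility $\varphi(u) = \varphi(v)$: in that case the ``link'' $(\varphi(u),\varphi(v))$ is a self-loop in $C$ and crosses no cut, which is consistent because then $u$ and $v$ lie in the same fiber and hence on the same side of every min-cut $S$, so $(u,v)$ crosses no min-cut of $G$ either. (This case is also why the statement restricts to links not already in the relevant edge set, though the correspondence holds regardless.)

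\bigskip

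Finally, I would remark that one direction of the statement — min-cuts of $G$ map to min-cuts of $C$ — is exactly the content of Lemma~\ref{lem:cactus-rep}, so the only content added here is the crossing/coverage translation, which as shown above is purely set-theoretic once the fiber-preservation property is recorded. I do not anticipate a genuine obstacle: the only subtlety worth stating carefully is the claim that $\varphi$ does not separate a fiber across any min-cut, which follows from the fact that $\varphi(S)$ being a well-defined cut of $C$ (as asserted by the ``if and only if'' in Lemma~\ref{lem:cactus-rep}) forces $S$ to be a union of fibers. Everything else is bookkeeping.
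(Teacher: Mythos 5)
The paper states Corollary~\ref{cor:cactus} without proof, treating it as an immediate consequence of Lemma~\ref{lem:cactus-rep}; your proposal fills in exactly the routine verification that is being elided, and the chain of equivalences you give is correct. The one small imprecision is the parenthetical justification of fiber-preservation (``otherwise $\varphi(S)$ would not be well-defined as a cut of $C$''): the image $\varphi(S)$ is always a well-defined subset of $V_C$, so this phrasing does not by itself yield a contradiction. The cleaner way to state the needed fact is that the Dinitz--Karzanov--Lomonosov theorem, in its standard form, produces $\varphi$ so that the min-cuts of $G$ are \emph{exactly} the preimages $\varphi^{-1}(T)$ of min-cuts $T$ of $C$; hence every min-cut of $G$ is automatically a union of fibers of $\varphi$, from which your equivalence $u \in S \iff \varphi(u) \in \varphi(S)$ follows immediately. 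With that substitution the argument is complete and matches the intended deduction.
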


\begin{remark}
We remark that there is a simple streaming algorithm for constructing the cactus representation with space complexity $\tilde O(kn)$: First, construct a $k$-connectivity certificate $H$ of $G$ (recall that a $k$-connectivity certificate for a graph $G$ is a subgraph $H$ of $G$ that contains all edges crossing cuts of size $k$ or less in $G$~\citep{NagamochiI92}) with $O(kn)$ edges with space complexity $O(kn)$ words in polynomial time, using a simple algorithm by~\citet{NagamochiI92} (see also Lemma~\ref{lem:kcert}). Then, we apply the algorithm of~\citep{soda/KargerP09} for computing the cactus representation of the subgraph $H$ in $\tilde O(|E(H)|) = \tilde O(kn)$ time and space. It is straightforward to verify that the constructed cactus is a cactus representation of $G$, given $G$ is $(k-1)$-connected graph.

We then get the following as a corollary of Theorem \ref{thm:link-arrival-alg}:
 If the algorithm receives a $k$-connectivity certificate as a representation of $G$ or the edges of $G$ arrive in the stream before any link arrives, we can construct a cactus representation of $G$ in $O(kn)$ space first and then run our algorithm in this section for cactus augmentation and the overall space complexity will be $O(nk + \frac{n}{\eps}\log n)$.    
\end{remark}

\paragraph{Transforming cactus to cycle.} In the (weighted) {\em cactus augmentation} problem, without loss of generality, we can assume the cactus is a single cycle. The latter problem is known as weighted {\em cycle augmentation}. To reduce an instance on a general cactus to the single cycle case (without losing approximation factor), we apply the technique observed in~\citep{galvez2021cycle,traub20231}: Unfold the cactus into its Eulerian circuit, then add additional zero-weight edges (which we can use to augment at no cost) to connect the nodes corresponding to the same junction node in the cactus. See Section~3 in~\citep{traub20231} for a detailed description. 
\begin{lemma}[Theorem~3 in~\citep{galvez2021cycle}; see also Lemma~2.2 in~\citep{traub20231}]
Let $\alpha > 1$. If there is an $\alpha$-approximation algorithm for the weighted cycle augmentation problem, then the weighted cactus augmentation problem admits an $\alpha$-approximation.
\end{lemma}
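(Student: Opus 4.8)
The statement is the cactus-to-cycle reduction of \citet{galvez2021cycle}, and I sketch how I would derive it. The plan is to build, from the cactus instance, a cycle-augmentation instance of the same optimum together with projection maps that convert any $\alpha$-approximate solution of the cycle instance into an $\alpha$-approximate solution of the cactus instance. First I would observe that a loopless cactus is Eulerian: at every vertex $v$ the incident edges are partitioned among the simple cycles through $v$, each such cycle contributing exactly two of them, so every degree is even, and $2$-edge-connectivity gives connectivity; hence I fix an Eulerian circuit $W = (v_0, e_1, v_1, \dots, e_m, v_m{=}v_0)$ with $m = |E_C|$. I then take a cycle $\hat C$ on vertices $x_0, \dots, x_{m-1}$ with edges $x_i x_{i+1}$ (indices mod $m$), set $\psi(x_i) = v_i$, and note that a cactus vertex $v$ with $\deg_C(v) = 2t$ has exactly $t$ preimages under $\psi$. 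To form $\hat L$, for every cactus vertex $v$ I add $t-1$ \emph{zero-weight} links spanning the fibre $\psi^{-1}(v)$ (so copies of one original vertex may be identified at no cost), and for every $\ell = (u,w) \in L$ of weight $\omega$ I add a link $\hat\ell$ of weight $\omega$ between one fixed copy of $u$ and one fixed copy of $w$.

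The core claim I would then establish is that the $2$-edge cuts of $\hat C$ that a feasible augmentation must cover correspond, after contracting the zero-weight links, exactly to the min-cuts of $C$ from Lemma~\ref{lem:cactus-rep}. Concretely, deleting two cycle edges $x_a x_{a+1}$ and $x_b x_{b+1}$ of $\hat C$ splits it into arcs $A$ and $\bar A$; if some original vertex has copies in both arcs then, because the zero-weight links span its fibre, one of them crosses this cut and covers it, and otherwise $(\psi(A),\psi(\bar A))$ is precisely the min-cut of $C$ obtained by deleting the cactus edges $e_{a+1}, e_{b+1}$ (which by the Eulerian structure lie on one cactus cycle and bound the arc). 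From this I get both inclusions: given $L' \subseteq L$ that augments $C$ to $3$-edge-connectivity, the links $\{\hat\ell : \ell \in L'\}$ together with all zero-weight links augment $\hat C$ to $3$-edge-connectivity with the same total weight; conversely, projecting each non-zero link of any augmentation $\hat L'$ of $\hat C$ back to its original $\ell \in L$ yields an augmentation of $C$ whose weight is at most $w(\hat L')$. Hence the two optima coincide, and running the assumed $\alpha$-approximation for weighted cycle augmentation on $(\hat C,\hat L)$ and projecting back produces an $\alpha$-approximate weighted cactus augmentation.

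The step I expect to be the main obstacle is the correctness of the cut correspondence in the direction ``every min-cut of $C$ shows up as a $2$-cut of $\hat C$ that the solution is forced to cover'' — that is, ruling out augmentations of $\hat C$ that look cheap only because they exploit $2$-cuts splitting the copies of a single cactus vertex while failing to cover the genuine cactus min-cuts. This rests on the precise interleaving of the two edges of each cactus cycle along the Eulerian circuit and on the choice of zero-weight links spanning each fibre $\psi^{-1}(v)$; the odd-min-cut case, where Lemma~\ref{lem:cactus-rep} returns a spanning tree rather than a genuine cactus, is absorbed by the stated convention of doubling each tree edge so the same analysis applies. Since this bookkeeping is carried out in detail in \citet{galvez2021cycle} and in Section~3 of \citet{traub20231}, for the purposes of this paper I would simply invoke their lemma rather than reproduce it.
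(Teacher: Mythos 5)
Your reduction via an Eulerian circuit with zero-weight links spanning each fibre $\psi^{-1}(v)$ is exactly the construction the paper attributes to the cited references, and like the paper you ultimately defer the detailed bookkeeping to them rather than reproving the lemma. One small reassurance: the step you flag as the ``main obstacle'' is actually immediate for \emph{any} Eulerian circuit, since a closed walk crosses a $2$-edge cut $\{e,e'\}$ of the cactus exactly twice, so removing the corresponding edges splits $\hat C$ into two arcs each lying entirely on one side of the cut, and hence no fibre $\psi^{-1}(v)$ is split and only genuine link-images can cover that cut.
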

Note that this reduction only produces $O(n)$ extra zero-weight edges, so it does not affect the space complexity of the streaming algorithm. We can apply the unfolding technique in the preprocessing step and in the rest of this section, we assume that the cactus is simply a single cycle.

\subsection{Main Step: Cycle Augmentation in Link Arrival Streams}
We arbitrarily assign a root node on the cycle, and let its index be $0$.  Then let the vertices of the cycle be $V=\{0,1,\dots,n-1\}$, with edges $C=\{e_1,e_2,\dots,e_n\}$ where $e_i = (i-1,i)$ (with indices modulo $n$).
We first describe a $2$-approximation for the unweighted case, using an idea from \citep{KhullerT93,KhullerV94}. 

\begin{theorem}\label{thm:unweighted}
There exists a one-pass $2$-approximation algorithm for the cycle
augmentation problem on unweighted graphs with total memory space $O(n)$ edges.
\end{theorem}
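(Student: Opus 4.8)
The plan is to set up coordinates on the cycle and reduce the problem to \emph{interval covering}. With the root indexed $0$, every $2$-cut of the cycle is determined by removing two edges $e_i,e_j$, splitting $V$ into two arcs; the arc not containing the root corresponds to a contiguous interval $[i,j]$ with $1\le i\le j\le n-1$. Augmenting the cycle to be $3$-edge-connected is exactly covering all such intervals by chosen links, where a link $(u,v)$ covers $[i,j]$ iff exactly one of $u,v$ lies in $\{i,i+1,\dots,j\}$. Following \citep{KhullerT93,KhullerV94}, the first step is to \emph{bidirect} each undirected link $uv$ into two directed links $\vec{uv}$ and $\vec{vu}$, declaring that $\vec{uv}$ covers $[i,j]$ iff $v\in[i,j]$ and $u\notin[i,j]$. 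Then a set of undirected links covers all cuts iff its bidirected version does, and conversely any directed cover yields an undirected cover of no greater size; picking the directed version of an optimal undirected solution shows the directed optimum is at most $2\,\opt$. So it suffices to $1$-approximate (i.e., solve optimally, or within the factor we need) the directed interval-cover instance — but we cannot store all directed links, so the point is to prune greedily in a stream.

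The key structural observation is a \textbf{domination rule} for directed links sharing the same head $v$: among links $\vec{uv}$ with tails $u$ on the ``left'' side (i.e.\ $1\le u<v$, reading the relevant arc), the one with the \emph{smallest} $u$ dominates, because the set of intervals $[i,j]$ it covers is a superset; symmetrically, among tails $u$ on the ``right'' side ($v<u\le n-1$), the one with the \emph{largest} $u$ dominates. I would prove the superset claim directly: if $u'\le u<v$ then $\vec{uv}$ covers $[i,j]$ (meaning $i\le v\le j$ and $u\notin[i,j]$), and since $u'\le u$, either $u'<i$ as well or $u'$ is still outside $[i,j]$ on the left, so $\vec{u'v}$ also covers it — and an analogous check handles the wrap-around interval containing index $n$. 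Consequently, for the streaming algorithm I keep, for each vertex $v$, at most \emph{two} incoming directed links: the one with the leftmost tail and the one with the rightmost tail seen so far. This is $O(1)$ words per vertex, hence $O(n)$ words total; updating on each arriving link $uv$ (which produces $\vec{uv}$ and $\vec{vu}$) is a constant-time comparison against the two stored candidates at $v$ and at $u$. At the end of the stream, the retained set of $\le 2n$ directed links contains a directed cover of minimum cardinality whenever one exists (since replacing any link in an optimal directed cover by its dominating retained link preserves coverage), so we can run any offline optimal algorithm for directed interval cover on it and then drop directions to get an undirected augmentation set of size at most $2\,\opt$.

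The remaining piece is to solve the offline directed interval-cover problem optimally on the surviving $O(n)$ links. I would argue this is polynomial-time solvable — e.g.\ it is an instance of covering a laminar-like / interval structure: the constraints are ``for every $[i,j]$, some retained link crosses it,'' and because of the cycle structure this is equivalent to a tree/path edge-cover (cover every edge of the cycle-cut tree), which is exactly the classical cycle-augmentation-to-$3$-connectivity combinatorial problem and is known to be solvable optimally in polynomial time via the $s$–$t$ numbering / ear-decomposition argument of \citep{KhullerT93,KhullerV94}; alternatively one can phrase it as a shortest-path / interval-point cover computation. So the streaming algorithm just stores the $O(n)$ pruned links and postprocesses.

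The main obstacle I anticipate is getting the domination rule exactly right at the \emph{wrap-around}: the cut-intervals come in two shapes relative to the root — those entirely inside $\{1,\dots,n-1\}$ and, depending on how one bookkeeps, the ``complementary'' arc that wraps past index $n$ to $0$ — and one must verify the leftmost/rightmost-tail domination holds uniformly for both the link direction $\vec{uv}$ and $\vec{vu}$, and for both orientations of which endpoint is ``inside.'' Careless case analysis here can make the pruned set fail to contain an optimal cover. I would handle this by fixing, once and for all, the convention that $[i,j]$ always denotes the root-avoiding arc, noting that every link has exactly one endpoint in each of the two arcs of any cut it covers, and then checking the superset property separately for the two possible positions of the head relative to the arc — a short but delicate verification. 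The factor $2$ loss is then transparently attributed solely to the bidirection step, and the $O(n)$ space to the two-candidates-per-vertex bookkeeping.
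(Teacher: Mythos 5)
Your proposal is correct and follows essentially the same route as the paper: reduce to the directed cut-covering problem via bidirection (incurring the factor $2$), observe the domination rule that lets you store only the leftmost- and rightmost-tail incoming arc per head vertex, and finish with an offline exact solver (Gabow / Khuller--Vishkin) on the surviving $O(n)$ arcs. The wrap-around worry you flag is in fact a non-issue once the convention that every cut $L=\{l,\dots,r\}$ avoids the root is fixed, since then both endpoints of any arc are simply compared against a contiguous subinterval of $\{1,\dots,n-1\}$ -- which is exactly the convention you propose and the paper uses.
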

\begin{proof}
Following~\citep{KhullerT93,KhullerV94}, we consider a directed version of the problem defined as follows: given a set $E$ of \emph{directed} edges, augment a minimum size subset $E'\subseteq E$  to the cycle, such that for every $2$-cut $(L,V\setminus L)$ of the cycle where $0\in V\setminus L$ (i.e., $L =\{l,l+1,\dots,r\}$ for some $1\le l\le r\le n-1$), there exists $\vec{xy}\in E'$ with $y\in L$ and $x\in V\setminus L$ (we say $\vec{xy}$ covers $L$ in this case).
To reduce the original (undirected) cycle augmentation instance to this directed problem, simply replace each input edge $(u,v)$ by two arcs $\vec{uv},\vec{vu}$, incurring a $2$-factor approximation: any directed solution $\{\vec{xy}\}$ implies an undirected solution $\{(x,y)\}$ of the same cost, and any undirected solution $\{(x,y)\}$ implies a directed solution $\{\vec{xy}\}\cup \{\vec{yx}\}$ of twice the cost.
 
Now we solve the directed instance exactly by an $O(n)$-space streaming  algorithm. For each $v\in V$, we only need to keep the input arc $\vec{uv}$ with minimum indexed $u$, and keep the input arc $\vec{uv}$  with maximum indexed $u$.
In this way we store only $O(n)$ arcs in total, and finally we run an offline exact algorithm (e.g.,~\citep{Gabow95}, which was also used by~\cite{KhullerV94}) for the directed problem  
on these stored arcs.
This does not affect optimality, because when $0\le u<u'<v$, any $2$-cut, $U=\{l,l+1,\dots,r\}$ covered by $\vec{u'v}$ is also covered by $\vec{uv}$, so we can discard $\vec{u'v}$ if we already have $\vec{uv}$ (a similar argument applies to the $v<u'<u\le n-1$ case).
\end{proof}
By a simple scaling, this algorithm can be modified into a $(2+\eps)$-approximate algorithm for the weighted case with total space $O(\frac{n}{\eps}\log W)$ edges. Now we improve this $\log W$ dependency.

\begin{theorem}
The cycle augmentation problem on weighted graphs admits a one-pass $(2+\varepsilon)$-approximation streaming algorithm with total memory space $O(\frac{n}{\eps}\log \min(W, n))$ edges.
\end{theorem}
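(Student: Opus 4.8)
The plan is to combine the directed sparsification of Theorem~\ref{thm:unweighted} with an \emph{even--odd bucketing} of the link weights, as sketched in the introduction. Assume WLOG $\eps\ge 1/n$ (otherwise the claimed budget exceeds $n^2$ and one can store all links and solve offline), fix a sufficiently large $\Lambda=\poly(n/\eps)$, and put a link of weight $w$ into bucket $E_j$ if $w\in(\Lambda^{j-1},\Lambda^{j}]$; weight-$0$ links are handled separately by the $O(n)$-space unweighted sparsifier of Theorem~\ref{thm:unweighted}. There are $O(\log_\Lambda W)$ nonempty buckets, which is $O(1)$ when $W\le n$. As in Theorem~\ref{thm:unweighted} I replace every link $(u,v)$ by the arcs $\vec{uv},\vec{vu}$ and work with the directed cycle-augmentation problem (losing the factor $2$), and at the very end run an offline polynomial-time exact algorithm for the \emph{weighted} directed problem (a routine extension of the one cited in Theorem~\ref{thm:unweighted}; its natural LP is integral) on the retained arcs and undirect its output. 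Thus it suffices to build a small set $\hat L$ of retained links such that (i) the arcs of $\hat L$ contain a directed-feasible solution of weight $\le(2+O(\eps))\opt$, and (ii) $|\hat L|=O(\eps^{-1}n\log\min(W,n))$.

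To build $\hat L$ I run two sparsifiers in parallel, one over the odd-indexed buckets and one over the even-indexed buckets, and take the union of what they retain; consider the odd one. When processing an odd bucket $E_j$, let $\mathcal P_j$ be the partition of the cycle vertices into connected components of $(V,\text{retained links of }E_{j-2}\cup E_{j-4}\cup\cdots)$, and contract each part to a \emph{super-node}. Two facts make this lossless. First, if a $2$-cut $[l,r]$ \emph{splits} a part of $\mathcal P_j$, some retained link from an earlier odd bucket already crosses it, so it needs no attention at level $E_j$; hence only cuts respecting $\mathcal P_j$ matter, and we keep only links between distinct super-nodes. Second --- and this is where the ``contracted instance is not a cycle'' issue is resolved --- for a respecting cut $[l,r]$ the exchange argument of Theorem~\ref{thm:unweighted} still goes through even though a super-node may occupy non-contiguous cycle positions: if $(u,v)\in E_j$ with $v$ in a part $s\subseteq[l,r]$ and $u\notin[l,r]$ covers $[l,r]$, then so does the arc with head in $s$ whose tail has smallest (if $u<l$) or largest (if $u>r$) index. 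So for each super-node $s$ and each of the $O(\eps^{-1}\log n)$ geometric weight levels inside $E_j$ it suffices to keep the two between-super-node arcs with head in $s$ and extreme tail index. In the arbitrary-order stream this is maintained online: each link knows its bucket, $\mathcal P_j$ is just the current component structure of the relevant retained links, and whenever it coarsens we re-sparsify the few affected arcs back down to two; a single pass suffices.

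For the space bound, fix the odd sparsifier and let $r_j=|\mathcal P_j|$ at the end of the stream, so $n=r_1\ge r_3\ge\cdots\ge1$. The links retained while processing $E_j$ touch some $a_j$ super-nodes of $\mathcal P_j$, contributing $O(\eps^{-1}a_j\log n)$ links. The key point is that $\mathcal P_{j+2}$ is obtained from $\mathcal P_j$ by adding exactly these retained $E_j$-links, and since we keep only between-super-node links, every touched super-node gets merged with another touched one; the $a_j$ touched super-nodes therefore collapse into at most $a_j/2$ components, giving $r_{j+2}\le r_j-a_j/2$. Telescoping, $\sum_j a_j\le 2r_1=2n$, so each sparsifier stores $O(\eps^{-1}n\log n)$ links and $|\hat L|=O(\eps^{-1}n\log\min(W,n))$ (the $O(\log_\Lambda W)$ buckets are absorbed by the telescoping, and when $W\le n$ the naive $\log W$ bound already suffices).

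For the approximation, feasibility is immediate from the two facts above: any $2$-cut is covered by some $\opt$-link $\ell\in E_j$, and at the moment the relevant sparsifier processed $E_j$ the cut was either already covered by a retained earlier link (if it splits $\mathcal P_j$) or covered by a retained extreme-tail arc (if it respects $\mathcal P_j$), so the offline directed algorithm returns a feasible solution, which stays feasible after undirecting. For the weight, start from the arc-doubling $D_0$ of $\opt$ (directed-feasible, weight $2\opt$) and transform it into a directed-feasible $D'\subseteq\mathrm{arcs}(\hat L)$: replace each arc of $D_0$ in $E_j$ by the corresponding retained extreme-tail arc (same head super-node of $\mathcal P_j$, same weight level), paying a $(1+\eps)$ factor and still covering every respecting cut it covered; and add to $D'$ \emph{all} retained links from buckets strictly below the top bucket of $D_0$ in each of the two parities, which covers the remaining non-respecting cuts. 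This last step is exactly what the even--odd split buys: within one parity consecutive relevant buckets differ by $\Lambda^2$, while there are only $\poly(n/\eps)$ retained links below $E_j$ in that parity, each of weight $\le\Lambda^{j-2}$, so their total weight is $\le\eps\Lambda^{j-1}\le\eps\,w(\ell^*)$ for any link $\ell^*$ of the top parity-$j$ bucket of $D_0$, i.e.\ only $O(\eps)\opt$ extra. Hence $w(D')\le(1+\eps)\cdot2\opt+O(\eps)\opt=(2+O(\eps))\opt$; the offline optimum over $\hat L$ is no larger and undirecting only decreases weight, so rescaling $\eps$ gives the claimed $(2+\eps)$-approximation. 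I expect the main obstacle to be precisely this interplay: making the contraction simultaneously \emph{sound} (cuts splitting a part are free, so the non-cycle contracted instance is still governed by head-indexed extreme arcs) and \emph{lossless in weight} (the $\Lambda$-gap lets one discard all sub-dominant buckets essentially for free, which forces the odd/even separation so the relevant gap is $\Lambda^2$), while keeping the online recomputation of the $\mathcal P_j$'s within the $O(\eps^{-1}n\log n)$ budget under adversarial arrivals.
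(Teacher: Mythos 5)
Your high-level plan is the same as the paper's: even--odd bucketing with a $\poly(n/\eps)$ weight gap, contracting vertices into super-nodes, storing two extreme-tail arcs per (head super-node, fine weight level), telescoping the per-bucket storage against the decrease in the number of super-nodes, and paying $O(\eps)\cdot\opt$ for the whole lower-bucket sparsifier. The resolution of the ``contracted instance is not a cycle'' issue (cuts that split a super-node are free; cuts that respect the partition are handled by head-indexed extreme arcs) is also the paper's.

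There is, however, a gap in what you use as the contracting partition, and you flag it yourself as ``the main obstacle.'' You define $\mathcal P_j$ as the connected components of the \emph{currently retained} extreme arcs from lower same-parity buckets. These retained arcs get deleted during re-sparsification whenever they stop being a min/max tail for their head super-node, and the replacement arc's tail can lie in a \emph{different} super-node. Consequently $\mathcal P_{j+2}$ is not monotone: a component can split when a lower-bucket arc that was its only bridge gets evicted, and at that point the stored extreme arcs for the old larger super-node are stale --- the true extreme-tail arc with head in one of the resulting pieces may already have been discarded because it was dominated when both pieces were merged. Your soundness/losslessness argument is phrased at the moment the sparsifier ``processed $E_j$,'' but the state it needs is the final one, and the two need not agree. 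The paper sidesteps exactly this by \emph{separating} the two roles you conflate: it maintains a small undirected set $F_k$ whose only job is to preserve the $3$-edge-connected components of $C\cup\bigcup_{i\ge0}E_{k-2i}$ (a quantity that depends on \emph{all} links seen so far, not on what is retained, and hence only coarsens), and stores the extreme arcs $S_k$ relative to that partition $Q_k$. Since $Q_k$ is monotone, merges are the only events, and the usual ``merge and re-take extremes'' rule keeps $S_k$ correct; the clean-up of $F_k$ removes only redundant edges so it cannot change $Q_k$. You could repair your argument by also maintaining a never-shrinking spanning forest (or an $F_k$-like sparsifier) of the lower-bucket links per parity and defining $\mathcal P_j$ from that rather than from the extreme arcs; without such a component the online maintenance you describe does not establish the invariant your approximation proof relies on. A secondary cosmetic difference: the paper's $Q_k$ are $3$-edge-connected components of cycle-plus-links, which are in general coarser than connected components of the links alone; both work, but the paper's choice is what makes the ``not a $2$-cut of $G_{k^*-2}$'' / ``not a $2$-cut of $G_{k^*-3}$'' case split clean.
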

\begin{proof}
We assume $\eps>1/n$; otherwise use the trivial $O(n^2)$-space algorithm that stores the cheapest edge between every pair of vertices.
    
   Define weight intervals $I_k = [(n/\eps)^k, (n/\eps)^{k+1})$.
   Let $E_k$ be the set of input edges $e$ that have arrived so far with weights $w(e)\in I_k$.
   Note that\footnote{This inequality is meaningful only if both $E_k$ and $E_{k+2}$ are nonempty. This issue does not affect our overall argument since our algorithm can simply ignore the empty weight classes.} 
   \begin{equation}
     \label{eqn:gap}
  \frac{\min_{e\in E_{k+2}}w(e)}{\max_{e\in E_k}w(e)}> n/\eps.
   \end{equation}
These weight intervals do not contain zero, so we separately use a zero-weight class $E_{-1}$ to hold edges of zero weight. But for notational simplicity, we will not specially mention this zero weight class in later description. One can check that this does not affect the correctness of the algorithm.
   
Recall $C$ is the base cycle of length $n$. For each $k\in \{0,1,\dots,\lceil \log_{n/\eps}W\rceil\}$, define graph 
   \begin{equation}
        G_k:= C \cup \bigcup_{i\ge 0}E_{k-2i}.
   \end{equation}
Let $Q_k$ denote the collection of 3-edge-connected components\footnote{Recall that vertices $u,v$ belong to the same 3-edge-connected component if and only if there exist 3 edge-disjoint paths between $u$ and $v$.} of $G_k$, which form a partition of  the $n$ vertices. See Figure~\ref{fig:3edge} for an illustration. Let $1\le |Q_k|\le n$ denote the number of components. Since $G_{k} \subseteq G_{k+2}$,  $Q_{k}$ refines $Q_{k+2}$, and $|Q_{k}|\ge |Q_{k+2}|$.

\begin{figure}[!h]
    \centering \includegraphics[width=.25\textwidth]{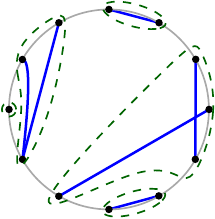}
    \caption{An example of 3-edge-connected components $Q_k$ of the graph $G_k$. Thin black edges denote the base cycle, and thick blue edges denote the links from the set $\bigcup_{i\ge 0}E_{k-2i}$; together they form $G_k$. The dashed green lines describe the 3-edge-connected components of graph $G_k$.}
    \label{fig:3edge}
\end{figure}

\paragraph{Algorithm description}  
At any point, our streaming algorithm always stores a subset of the input edges $E =\bigcup_{k} E_k$, which includes the following:
 \begin{enumerate}[leftmargin=*]
  \item \textbf{Undirected edges $F_k$:} We store edge subsets $F_k\subseteq E_k$, such that for all $k$ the subgraph $C \cup \bigcup_{i\ge 0}F_{k-2i} \subseteq G_k$ has the same 3-edge-connected components as $Q_k$. 
    \label{item:qk}
    \item \textbf{Directed arcs $S_k$:} For each $k$ and 3-edge-connected component $U\in Q_k$, and every weight interval $J_i= [(1+\eps)^{i},(1+\eps)^{i+1}) \subseteq I_{k+2}$, 
    we store the arc $\vec{xy}$ with minimum (and maximum) indexed $x$ where $(x,y)\in E_{k+2}$, $y\in U,x\notin U$, and $w(x,y)\in J_i$. The set of these arcs is denoted by $S_k$.
    \label{item:leftright}
 \end{enumerate}
   
Now we describe how to maintain this information when a new input edge $(u,v) \in E_{k'}$ arrives.
\begin{itemize}[leftmargin=*]
    	\item \textbf{Maintain Item~\ref{item:qk}:}
Note that adding this edge could potentially cause the components in $Q_{k'+2i}$ $(i=0,1,2,\dots)$ to merge.
To maintain Item~\ref{item:qk} (and hence the knowledge of all $Q_k$), we insert $(u,v)$ into the current $F_{k'}$, and then run a clean up procedure to remove redundant edges: Start from the graph $H\gets C \cup \bigcup_{j\ge 1}F_{k'-2j}$ which encodes the 3-connectivity information of the graph formed using edges prior to $E_{k'}$, 
and iterate over the edges $e\in F_{k' + 2i}$ (in increasing order of $i=0,1,2,\dots$). If adding $e$ to $H$ does not change the $3$-edge-connected components of $H$, then remove $e$ from $F_{k'+2i}$. Otherwise add $e$ to $H$.
It is clear that this clean up procedure preserves all the $3$-connectivity information, since we start from the base graph $C$ which is already 2-edge-connected.

	\item \textbf{Maintain Item~\ref{item:leftright}:}
 To maintain Item~\ref{item:leftright}, we simply use arcs $\vec{uv}$ and $\vec{vu}$ to replace the existing ones that become dominated. When two 3-edge-connected components $U,U' \in Q_k$ merge, we also merge the stored information for $U,U'$ (compare the best arcs stored for these two components and keep the better one).

	\item \textbf{Offline step:} In the end, we run 
an offline exact algorithm (such as \citep{Gabow95})
that solves the directed problem (see proof of Theorem~\ref{thm:unweighted}) on the stored arcs in Item~\ref{item:leftright} and directed versions of the stored edges in Item~\ref{item:qk}.
\end{itemize}

\paragraph{Space complexity.}
For Item~\ref{item:qk} the total space is $\sum_{k}|F_k| = \sum_{j}|F_{2j}|+\sum_{j}|F_{2j+1}|$ edges. We bound both terms separately. 
Due to our clean up procedure, there should be no redundant edges in $F_{\even}=\bigcup_j F_{2j}$: starting from the base cycle $H\gets C$, we can iterate over the edges $e\in F_{\even}$ in certain order so that adding edge $e$ to $H$ always strictly decreases the number of 3-edge-connected components of $H$. Hence $|F_{\even}|\le n-1$, and similarly $|F_{\odd}|\le n-1$, so 
\begin{equation}
  \label{eqn:boundf}
  \sum_k |F_k|\le 2(n-1).
\end{equation}

Define $c_{k+2}$ to be the number of 3-edge-connected components $U \in Q_{k}$ for which there exists $(u,v)\in E_{k+2}$ with $u\in U$ and $v\notin U$. Then the space for Item~\ref{item:leftright} is $\sum_{k}2c_{k+2}\cdot \log_{1+\eps}\frac{\max_{e\in E_{k+2}}w(e)}{\min_{e\in E_{k+2}}w(e)} \le \sum_{k}c_{k+2}\cdot O(\log(n/\eps)/\eps) \le \sum_{k}c_{k+2}\cdot O(\log(n)/\eps) $ edges.
We need the following lemma.
   \begin{lemma}
     \label{lem:diff}
 $c_{k+2} \le 2(|Q_{k}|-|Q_{k+2}|)$.
   \end{lemma}
   Using this lemma, the space complexity for Item~\ref{item:leftright} is
\begin{align*}
O(\frac{\log n}{\eps})\sum_{k}c_{k+2}  
&\le O(\frac{\log n}{\eps})\sum_{k}(|Q_k|-|Q_{k+2}|) \\
&\le O(\frac{\log n}{\eps})\cdot 2(n-1) \\
&\le O(\frac{n\log n}{\eps}) &&\hspace{-1.7cm}\rhd\text{by summing over even and odd $k$ separately}
\end{align*}
So the total space complexity is $O(\eps^{-1}n\log n)$ edges.

\begin{proof}[Proof of Lemma~\ref{lem:diff}]
We first shrink the graph $G_{k} = C \cup \bigcup_{i\ge 0}E_{k-2i}$ into graph $H_{k}$. Let each node of $H_{k}$ represent a 3-edge-connected component $U\in Q_{k}$, and for every $(u,v)\in C$ (recall $C$ is the set of edges on the base cycle) with $u\in U\in Q_{k}$ and $v\in V\in Q_{k}$, we connect $U,V$  in $H_{k}$ by an edge (allowing self-loops and parallel edges). As a standard fact, $H_{k}$ is a cactus (allowing self loops), and $C$ corresponds to an Eulerian circuit of $H_k$.

Let $\mathcal{C}_k$ denote the collection of simple cycles (cycles with distinct vertices; we view a self loop as a simple cycle as well) of the cactus $H_k$. 
Then $\mathcal{C}_k$ can be viewed as a partition of the $n$ edges on the base cycle $C$, where $e,e'\in C$ belong to the same partition if and only if $\{e,e'\}$ is a $2$-cut of $G_k$. 
Observe that $|\mathcal{C}_k| = n+1 - |Q_k|$.\footnote{To see this equality, consider removing one arbitrary edge from each simple cycle of the cactus, and the remaining edges should form a tree. As there are $|Q_k|$ vertices, the number of edges in the remaining tree is $|Q_k|-1$, so the number of edges $n$ in the original cactus equals $|Q_k|-1+|\mathcal{C}_k|$, since we removed $|\mathcal{C}_k|$ edges in the removal step.
} Hence, in the following it suffices to prove $c_{k+2} \le 2\cdot (|\mathcal{C}_{k+2}| - |\mathcal{C}_{k}|)$. Note that $\mathcal{C}_{k+2}$ is a finer partition of $C$ than $\mathcal{C}_k$.

\begin{figure}[!ht]
    \centering \includegraphics[width=.9\textwidth]{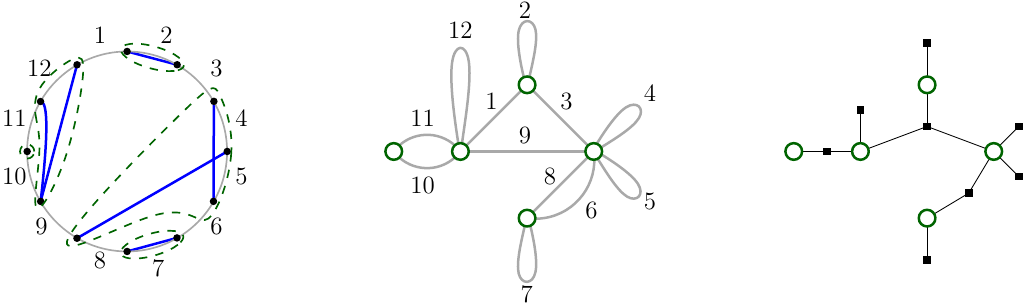}\caption{A picture of the cactus $H_k$ (middle) produced by shrinking $G_k$ (left). The tree $T_k$ (right) is produced from cactus $H_k$.}
    \label{3edge-reduction}
\end{figure} 
    
\begin{figure}[!h]
    \centering \includegraphics[width=.9\textwidth]{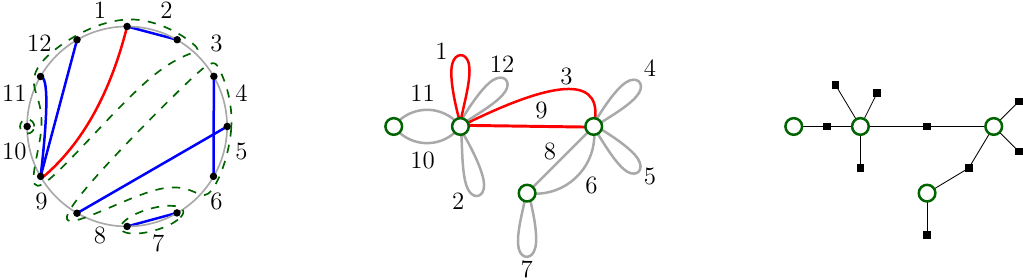}\caption{After adding an edge from $E_{k+2}$ (depicted in red), the partition $\mathcal{C}_{k+2}$ refines the old partition $\mathcal{C}_{k}$: $\{1,3,9\}$ breaks into $\{1\}$ and $\{3,9\}$.}
    \label{fig:3edge-after}
\end{figure}

We now consider how adding edges $E_{k+2}$ into $G_k$ can refine $\mathcal{C}_k$.
Convert the cactus $H_{k}$ into a tree $T_{k}$ as follows.
 Let $T_{k}$ be a bipartite graph with vertex bipartition $(\mathcal{C}_k,Q_k)$, in which $D\in \mathcal{C}_k$ is connected to every $U\in Q_k$ that lies on the simple cycle $D$ in cactus $H_{k}$.
Observe this bipartite graph $T_{k}$ is indeed a tree.  For each $(u,v)\in E_{k+2}$, let $u\in U\in Q_k$ and $v\in V\in Q_k$, and we mark all the tree-edges on the unique path connecting $U$ and $V$ in $T_k$.  For each $D\in \mathcal{C}_k$, let $d(D)$ denote the number of marked tree-edges incident to the tree-node $D$ in $T_k$. Then, observe that $d(D)\in \{0\}\cup \{2,3,4,\dots\}$, and $D \in \mathcal{C}_k$ (viewed as a subset of $C$) breaks into $\max\{d(D),1\}$ subsets in the partition $\mathcal{C}_{k+2}$.

By assumption, there are at least $c_{k+2}$ many tree-nodes $U\in Q_k$ that are incident to at least one marked tree-edge in $T_k$, so $T_k$ contains at least $c_{k+2}$ marked tree-edges. Hence, 
\begin{align*}
  |\mathcal{C}_{k+2}| &= \sum_{D\in \mathcal{C}_k}\max\{d(D),1\}\\
  & \ge \sum_{D\in \mathcal{C}_k}(1 + d(D)/2) &&\rhd\text{since $d(D)\in \{0\}\cup \{2,3,4,\dots\}$}\\
  & = |\mathcal{C}_k| + \frac{1}{2} \sum_{D\in \mathcal{C}_k}d(D)
  \ge |\mathcal{C}_k| + \frac{1}{2} c_{k+2},
\end{align*}
which completes the proof. 
\end{proof}
 
\paragraph{Approximation factor.}
Let $\opt\subseteq E=\bigcup_{k=-\infty}^{+\infty} E_k$ denote the optimal solution for the (undirected) cycle augmentation problem. Let $k^*$ be the maximum $k^*$ such that $\opt \cap E_{k^*} \neq \varnothing$. Then by \eqref{eqn:gap} we have 
 \begin{equation}
   \label{eqn:cheap}
  w(\opt)> (n/\eps) \cdot\max_{e\in E_{k^*-2}}w(e).
 \end{equation}
Bidirecting $\opt$ gives a solution $\opt'$ for the directed problem with total cost $w(\opt')= 2w(\opt)$. In the following we convert $\opt'$ into a solution $\sol$ for the directed problem that only uses arcs stored by the streaming algorithm,  with total cost $w(\sol)\le (1+O(\epsilon)) w(\opt') \le (2+O(\eps))w(\opt)$. This establishes that our streaming algorithm achieves $2+O(\eps)$ approximation ratio for the (undirected) cycle augmentation problem.
   
In $\sol$ we first include both directed versions of all $(u,v)\in \bigcup_{k\le k^*-2} F_{k}$, with total cost at most
 \begin{align*}
    \sum_{k\le k^*-2} 2|F_k|\cdot \max_{e\in F_k}w(e)
    & \le \sum_{k} 2|F_k| \cdot \max_{e\in E_{k^*-2}}w(e)\\
    & \le 4(n-1) \cdot \frac{\eps}{n} w(\opt) &&\rhd\text{by \eqref{eqn:boundf} and \eqref{eqn:cheap}}\\
    & \le 4\eps w(\opt).
 \end{align*}
Then, for every arc $\vec{xy}\in \opt'$ with weight $w(\vec{xy})\in I_k$ where $k\in \{k^*-1,k^*\}$,  we will find a replacement arc $\vec{x'y'}\in S_k$ stored by 
Item~\ref{item:leftright}:
Let $y\in U\in Q_k$. If $x\notin U$, then by Item~\ref{item:leftright} we can pick a stored arc $\vec{x'y'}\in S_k$ with $y'\in U$ and $w(\vec{x'y'}) < (1+\eps)w(\vec{xy})$, such that $x'\le x$ (if $x<y$) or $x'\ge x$ (if $x>y$). We include $\vec{x'y'}$ in $\sol$. (in the case of $x\in U$ we do not need to do anything)

By definition we immediately have $w(\sol) \le 4\eps w(\opt) + (1+\eps)w(\opt') = (2+6\eps)w(\opt)$.

To show $\sol$ is a feasible solution for the directed problem, we verify that each 2-cut $L = \{l,l+1,\dots,r\}$ (where $1\le l\le r\le n-1$) is covered. There are three cases:
\begin{itemize}[leftmargin=*]
  \item {\bf Case 1: $(L,V\setminus L)$ is not a $2$-cut of $G_{k^*-2}$.} By Item~\ref{item:qk}, $G_{k^*-2}$ and $C\cup \bigcup_{i\ge 0}F_{k^*-2-2i}$ have the same $3$-edge-connected components, and hence have the same $2$-cuts, so $(L,V\setminus L)$ is also not a $2$-cut of $C\cup \bigcup_{i\ge 0}F_{k^*-2-2i}$. Hence, there exists $(u',v')\in \bigcup_{i\ge 0}F_{k^*-2-2i}$ such that $u'\in V\setminus L, v'\in L$. Then,  $\vec{u'v'}$ covers $L$, and by construction we have $\vec{u'v'}\in \sol$.
  \item {\bf Case 2: $(L,V\setminus L)$ is not a $2$-cut of $G_{k^*-3}$.} This case is similar to case 1.
  
  \item {\bf Case 3: Otherwise.} In this case, $(L,V\setminus L)$ is a $2$-cut of both $G_{k^*-2}$ and $G_{k^*-3}$.
  
From the feasibility of $\opt$, we know there must exist arc $\vec{xy} \in \opt'$ that covers $L$ (i.e., $y\in L, x\in V\setminus L$) with weight $w(\vec{xy})\in I_k$ where $k\in \{k^*-1,k^*\}$. Let $y\in U\in Q_{k-2}$. Since $(L,V\setminus L)$ is a 2-cut of $G_{k-2}$, we know $x,y$ cannot be in the same $3$-edge-connected component of $G_{k-2}$, so $x\notin U$. Now let $\vec{x'y'}\in \sol$ be the replacement arc we found for $\vec{xy}$. By definition, $y'\in U$. We consider the case of $x<y$ (the other case $y<x$ is similar), and hence $x'\le x$. In this case we must have $x<l \le y \le r$, so $x'<l$ and hence $x'\notin L$.
Suppose for contradiction that $\vec{x'y'}$ does not cover $L$. Then we must have $y'\notin L$. But this would mean $(L,V\setminus L)$ is a 2-cut in $G_{k-2}$ separating $y$ and $y'$, contradicting the assumption that $y,y'\in U$ belong to the same 3-edge-connected component of $G_{k-2}$. This proves that the replacement arc $\vec{x'y'}\in \sol$ indeed covers $L$.
\end{itemize}
\end{proof}

\subsection{Lower Bound for Approximating Optimal \texorpdfstring{$k$}{k}-CAP Weight}
\label{subsec:lblinkarrival}
Now we show that the approximation factor of our streaming algorithm is close to optimal.
\begin{theorem}\label{thm:lb-link-arrival}
    Any streaming algorithm that solves the weighted TAP in the link arrival model with better than $2$-approximation needs $\Omega(n^2)$ bits of space.
\end{theorem}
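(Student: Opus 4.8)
The plan is to reduce from the one-way (Alice-to-Bob) randomized communication complexity of \textsc{Index}, which on a universe of $N$ bits is $\Omega(N)$; we take $N=\binom{n-1}{2}=\Theta(n^2)$. Fix the tree $T$ (the graph handed to the algorithm for free at the start of the link-arrival stream) to be the star with center $c$ and leaves $v_1,\dots,v_m$, where $m=n-1$. Given her bit-string $x\in\{0,1\}^{\binom{[m]}{2}}$, Alice streams the link $(v_i,v_j)$ with weight $m^2$ for every pair $\{i,j\}$ with $x_{\{i,j\}}=1$. Given his query pair $\{i^*,j^*\}$, Bob then continues the stream with (i) a unit-weight link $(v_i,c)$ for every $i\notin\{i^*,j^*\}$ (such a link covers precisely the pendant edge $(c,v_i)$ of $T$), and (ii) a single link $(v_{i^*},v_{j^*})$ of weight $2m^2$, inserted solely to guarantee feasibility. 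Alice runs the streaming algorithm on her prefix of the stream, sends its memory to Bob, who resumes the algorithm on his suffix and reads off the weight of the returned augmentation.

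The crux is a factor-$2$ gap in $\opt$ depending on the answer bit $x_{\{i^*,j^*\}}$. If $x_{\{i^*,j^*\}}=1$, then Alice's link $(v_{i^*},v_{j^*})$ of weight $m^2$ covers both pendant edges $(c,v_{i^*})$ and $(c,v_{j^*})$ (its tree path is $v_{i^*}\!-\!c\!-\!v_{j^*}$), so together with Bob's $m-2$ unit-weight links it forms a feasible solution, giving $\opt\le m^2+(m-2)=(1+o(1))m^2$. If $x_{\{i^*,j^*\}}=0$, then covering $(c,v_{i^*})$ forces the solution to include a link incident to $v_{i^*}$, whose only candidates are an Alice link of weight $m^2$ or Bob's feasibility link of weight $2m^2$; symmetrically for $(c,v_{j^*})$ and $v_{j^*}$; and the unique single link incident to both $v_{i^*}$ and $v_{j^*}$ is Bob's weight-$2m^2$ link, which is the only one available since $(v_{i^*},v_{j^*})$ is absent from Alice's stream. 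Hence any feasible solution either uses Bob's weight-$2m^2$ link or uses two distinct weight-$m^2$ links, so $\opt\ge 2m^2$. Consequently any streaming algorithm with approximation ratio better than $2$ (say, ratio at most $2-\delta$ for a constant $\delta>0$; the argument is in fact robust to sub-constant gaps) returns a value $<2m^2$ in the first case and $\ge\opt\ge 2m^2$ in the second, so Bob recovers $x_{\{i^*,j^*\}}$. Since the algorithm's working memory is the entire Alice-to-Bob message, it must use $\Omega(N)=\Omega(n^2)$ bits; this holds for randomized algorithms with constant success probability, as the \textsc{Index} lower bound does. Rescaling all weights by $m^{-2}$ shows $W$ may be taken $\mathrm{poly}(n)$, which the model permits.

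The only places where care is needed — and where a naive construction would fail — are the two design choices behind the gadget. First, the feasibility link in (ii) must be \emph{expensive} (weight $\ge 2m^2$): a cheap feasibility link would make the NO instance cheap and collapse the gap. Second, one must rule out any ``sneaky'' cheap solution in the NO case; this is exactly where it matters that Bob's unit-weight links touch only leaves outside $\{v_{i^*},v_{j^*}\}$ (so they do not cover the two critical pendant edges) and that, once $(v_{i^*},v_{j^*})$ is deleted, no single available link can cover both $(c,v_{i^*})$ and $(c,v_{j^*})$. Both facts are immediate from the star structure, so the real work is just arranging the gadget so that these invariants hold; the communication-complexity step is then routine. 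Finally, since the produced instance is a genuine weighted TAP instance (with the star as the tree), the bound applies to weighted TAP, and a fortiori to $k$-CAP.
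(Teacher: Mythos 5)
Your proof is correct and takes essentially the same route as the paper: a reduction from \textsc{Index} in which Alice encodes her bit-string as links, Bob appends a small cleanup gadget that leaves exactly two critical tree edges uncovered, and the presence or absence of the single link $(v_{i^*},v_{j^*})$ controls a factor-$2$ gap in $\opt$. The difference is in the gadget design. The paper uses a depth-two tree ($r$ with children $x_i$ and grandchildren $y_i$) so that Bob's cleanup links $(y_a,y_b)$ and $(x_i,y_i)$ can be given weight $0$, yielding $\opt=1$ vs.\ $\opt\ge 2$ — an \emph{exact} gap of $2$ that rules out any approximation ratio $\alpha<2$, even one depending on $n$. You instead use a star and make the cleanup links parallel to tree edges, which forces you to give them positive weight and to introduce a scale separation ($m^2$ for Alice's links vs.\ $1$ for Bob's cleanup links vs.\ $2m^2$ for the feasibility link). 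This yields $\opt\le m^2+m-2$ vs.\ $\opt\ge 2m^2$, a gap that only \emph{approaches} $2$, so the reduction rules out ratios $\alpha<2-\Theta(1/m)$. For constant $\delta$ this is identical in force to the paper's theorem; your parenthetical claim that the argument is ``robust to sub-constant gaps'' is a slight overstatement, as it fails for $\delta=O(1/n)$ (the paper's zero-weight construction does not have this limitation). One smaller nit: the closing ``rescaling by $m^{-2}$'' remark would leave non-integer weights, so it is not needed — your weights $\{1,m^2,2m^2\}$ already fit the model with $W=2m^2=\mathrm{poly}(n)$. Finally, your feasibility link (a single heavy link Bob adds) is a cleaner device than the paper's footnote of assuming Alice's string contains $1$'s at positions $(1,i)$; both ensure feasibility without affecting the gap.
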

\begin{proof}
    Let the base tree contain $(2n+1)$ vertices $r, x_1,y_1,\dots,x_n,y_n$, and edges $(r,x_i),(x_i,y_i)$ for all $1\le i\le n$, where $n$ is an even number.
We reduce from the INDEX problem, where Alice's bit string is from $\{0,1\}^{\binom{[n]}{2}}$. For $1\le i<j\le n$,  she adds link $(x_i,x_j)$ to $L$ if and only if the $(i,j)$-th bit in her bit string is $1$. Each of these links has weight $1$. Then she sends the memory content of the streaming algorithm to Bob.

Suppose Bob wants to find out whether the $(i,j)$-th bit in Alice's bit string is $1$. He arbitrarily groups $[n]\setminus \{i,j\}$ into pairs $(a_1,b_1),(a_2,b_2),\dots,(a_{n/2-1},b_{n/2-1})$, and adds links $(y_{a_k},y_{b_k})$ of zero weight to $L$ for all $1\le k\le n/2-1$. Then he adds two links $(x_i,y_i),(x_j,y_j)$ of zero weight. Then Bob asks the streaming algorithm to output a better-than-$2$ approximation of the  optimal solution weight.

Observe that, in any feasible tree augmentation solution, in order to augment the tree edges $(x_i,y_i)$ for all $i$, one has to include all the $n/2+1$ links that Bob added (which have zero total weight). Then, it remains to cover the tree edges $(r,x_i)$ and $(r,x_j)$. If $(x_i,x_j)\in L$, then one can use link $(x_i,x_j)$ of weight $1$ to cover them. Otherwise, one needs to include two other links of the form $(x_i,x_{k'}),(x_j,x_{k''})$ where $k',k''\notin\{i,j\}$ to cover them.\footnote{To let the tree augmentation instance always have a feasible solution, we can assume Alice's bit string contains $1$ on positions $(1,i)$ for all $i$ (which does not affect the space bound asymptotically), so that we can always choose $k'=k''=1$ here to get a feasible solution of size $2$.} Hence, the optimal solution has total weight is $1$ if the $(i,j)$-th bit in Alice's bit string is $1$, and total weight at least $2$ otherwise.
Hence, a better-than-2-approximation streaming algorithm for TAP can be used to solve the INDEX problem, which requires at least $\Omega(n^2)$ space.
\end{proof}
\begin{figure}[!h]
    \centering \includegraphics[width=.5\textwidth]{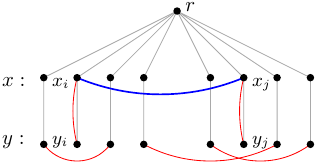}\caption{A lower bound instance of tree augmentation with weighted edges.
    The black edges form the base tree. The red links are added by Bob with zero weight. The blue link is added by Alice with weight $1$.}     \label{fig:weighted-augmentation-lb}
\end{figure} 

Note that the proof above also extends to $k$-connectivity augmentation for any value of $k$, by replacing each edge of the base tree by $k-1$ parallel edges.

We remark that the proof above crucially uses the fact that Bob's links have zero weight. In the unweighted setting, the same argument can only yield a lower bound for $1+O(1/n)$-approximate algorithms.  Nevertheless, we observe that a trivial reduction from the bipartite matching problem can yield a (non-tight) lower bound for approximating unweighted tree augmentation.
\begin{corollary}\label{col:lb-unweighted}
    There is no streaming algorithm with $O(n(\log n)^{O(1)})$ space that outputs a solution to the unweighted TAP in the link arrival model with better than $1.409$-approximation.
\end{corollary}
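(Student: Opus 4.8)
The plan is to reduce from streaming bipartite maximum matching and invoke the space lower bound of \citet{DBLP:conf/soda/Kapralov21}. The structural fact I would use is that when the base tree is a \emph{star} with center $r$ and leaf set $A\cup B$, a collection of links (each joining two leaves) makes the star $2$-edge-connected precisely when, viewed as an edge set on $A\cup B$, it is an \emph{edge cover}: the tree edge $(r,v)$ forms a $2$-cut isolating $v$, so it must be crossed by a link incident to $v$, and conversely covering every leaf covers every tree edge. Thus the minimum-size augmentation of the star equals the minimum edge cover of the ``link graph'' $H$ on $A\cup B$, which by Gallai's identity equals $n-\nu(H)$, where $n=|A|+|B|$ and $\nu(H)$ is the size of a maximum matching of $H$ (assuming $H$ has no isolated vertex, which we enforce below).

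Given a hard bipartite matching instance on parts $A,B$ with $|A|=|B|=n/2$ from \citet{DBLP:conf/soda/Kapralov21} in the edge-partitioned two-party model (Alice holds part of the edge set, Bob the rest), I would instantiate the TAP problem on the fixed star with leaves $A\cup B$ ($n+1$ vertices), make the link stream consist of exactly the edges of the matching instance (Alice's first, Bob's second), and then append a few ``default'' links making one leaf universal; this guarantees feasibility and perturbs $\nu(H)$, hence the optimum, by at most $1$ (negligible). Running an $\alpha$-approximate streaming TAP algorithm on this stream outputs a feasible augmentation $S$ with $n-\nu(H)\le|S|\le\alpha\,(n-\nu(H))$, which determines $\nu(H)$ within a multiplicative factor $\alpha$. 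Since Kapralov's construction makes ``yes'' instances have a perfect matching ($\nu(H)=n/2$) while ``no'' instances have $\nu(H)\le\frac{n/2}{1+\ln2}$, the two cases are told apart as soon as $\alpha<\frac{n-n/(2(1+\ln2))}{n/2}=2-\frac{1}{1+\ln2}\approx1.409$; as this distinguishing task needs $n^{1+\Omega(1/\log\log n)}$ space, any TAP algorithm with approximation ratio below $1.409$ must use more than $O(n(\log n)^{O(1)})$ space.

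Concretely the steps are: (i) establish the star-augmentation $=$ edge-cover equivalence and invoke Gallai's identity; (ii) build the reduction, add the feasibility-restoring default links, and check the $\pm O(1)$ slack is harmless; (iii) convert an $\alpha$-approximate augmentation into a two-sided estimate of $\nu(H)$ and solve for the largest $\alpha$ keeping the two cases separated, namely $2-\frac{1}{1+\ln2}$; (iv) conclude by \citet{DBLP:conf/soda/Kapralov21}. The crux is conceptual rather than technical: because edge cover is the \emph{complement} of matching, the matching gap $\nu_{\mathrm{yes}}/\nu_{\mathrm{no}}=1+\ln2$ shrinks to $\frac{n-\nu_{\mathrm{no}}}{n-\nu_{\mathrm{yes}}}\approx1.409$ after the reduction, which is exactly why the argument yields only this non-tight constant, far from the $2$ obtainable in the weighted case. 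The remaining care points are ensuring the hard instances can be assumed free of isolated vertices (or relying on the default links for that), and using the gap (distinguishing) form of the lower bound rather than merely the ``no small-space algorithm outputs a large matching'' form; both are available from Kapralov's construction.
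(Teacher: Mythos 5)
Your reduction is the same one the paper uses (star-shaped base tree, $\mathrm{OPT}_{\mathrm{TAP}}=n-\nu(H)$, Kapralov's $\frac{1}{1+\ln 2}$ bound, and the resulting threshold $2-\frac{1}{1+\ln 2}\approx 1.409$), and the numerics are right. But the final step as you state it has a real gap. You close the argument by treating Kapralov's result as a \emph{decision/estimation} lower bound — ``yes'' instances with $\nu(H)=n/2$ vs.\ ``no'' instances with $\nu(H)\le \frac{n/2}{1+\ln 2}$ — and then using only the \emph{size} $|S|$ of the TAP solution to tell them apart. Kapralov's theorem is not of that form: it is a \emph{search} lower bound, saying that on a hard (single) distribution with $\nu(H)=(1/2-o(1))n$ a small-space one-pass algorithm cannot \emph{output} a matching larger than $(\frac{1}{2(1+\ln 2)}+\eps)n$. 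There is no built-in pair of instances with a matching-size gap, and estimating matching size in streaming is in general known to be easier than finding a large matching, so the ``gap form'' you invoke is not a consequence of his construction and cannot simply be assumed.

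The fix is exactly what the paper does and costs nothing: since the hypothesis is that the algorithm \emph{outputs} a feasible augmentation $S$, convert $S$ back into a matching rather than into an estimate. If $S$ contains $a$ leaf--leaf links and $b$ root--leaf links, the leaf--leaf links cover at least $n-b$ leaves, and any maximal vertex-disjoint subset of them is a matching of size at least $(n-b)-a=n-|S|$. An $\alpha$-approximate TAP algorithm with $\alpha<2-\frac{1}{1+\ln2}$ therefore yields, on Kapralov's hard instance with $\nu(H)=(1/2-O(\eps))n$, an output matching of size $>\big(\frac{1}{2(1+\ln 2)}+\eps\big)n$, contradicting the search lower bound directly. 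Your Gallai/edge-cover framing, the universal-leaf trick for feasibility (the paper instead adds parallel $(r,x_i)$ links, which is slightly cleaner since it does not perturb $\nu(H)$ at all), and the arithmetic are all fine; only the appeal to a decision form of Kapralov needs to be replaced by this matching-extraction step.
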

\begin{proof}
   We invoke a lower bound result by~\cite{DBLP:conf/soda/Kapralov21}, which states that in a bipartite graph $G$ 
   with $n$ vertices and maximum matching size $(1/2-O(\eps))n$, a single-pass semi-streaming algorithm cannot output a matching of $G$ of size larger than $(\frac{1}{2(1+\ln 2)}+\eps)n$.

   To reduce from bipartite maximum matching on a $n$-vertex graph to unweighted TAP on a $(n+1)$-vertex tree in the link arrival model, we assume the base tree is a star graph with edges $(r,x_1), \cdots, (r,x_n)$.  
   For every edge $(i,j)$ in the matching input instance, we create a link $(x_i,x_j)$ for the unweighted TAP instance. Then, we add links $(r,x_i)$ for all $i\in [n]$. Then, a bipartite matching $M=\{(i,j)\}$ implies a tree augmentation solution $\{(x_i,x_j)\}_{(i,j)\in M}\cup \{(r,x_k): \text{$k$ is unmatched in $M$}\}$, which consists of $|M|+n-2|M|=n-|M|$ links. Observe that the converse direction also holds. Hence, plugging in the result of \cite{GoelKK12}, when the optimal tree augmentation solution has size $(1/2+O(\eps))n$, a single-pass semi-streaming algorithm cannot output a solution of size smaller than $n-
   (\frac{1}{2(1+\ln 2)}+\eps)n$. This means the approximation ratio for unweighted TAP cannot be better than $\frac{n - (\frac{1}{2(1+\ln 2)}+\eps) n}{(1/2+O(\eps))n} \approx 1.409$.
\end{proof}

Now we show that the $n$-dependency of our streaming algorithm is also necessary.
\begin{proposition}\label{prop:lb-trivial-link-arrival}
    Any streaming algorithm that outputs an multiplicative approximation (to any factor) of the optimal total weight for the weighted TAP in the link arrival model needs $\Omega(n)$ bits of space.
\end{proposition}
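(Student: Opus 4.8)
The plan is to reduce from a two-party communication problem (say \texttt{INDEX} or, even more simply, \texttt{EQUALITY}/\texttt{GREATER-THAN}) so that the memory content of the streaming algorithm after Alice's portion of the stream must encode $\Omega(n)$ bits. First I would have Alice encode an $n$-bit string $a\in\{0,1\}^n$ into weighted links on a fixed base tree. A clean way: let the base tree be a star with center $r$ and leaves $x_1,\dots,x_n$, edges $(r,x_i)$. For each $i$, Alice adds the link $(r,x_i)$ with weight $2^i$ if $a_i=1$ and does not add it otherwise (or, to keep weights in $\{0,\dots,W\}$ with $W=\poly(n)$, she uses weight $n^i$, or a more careful geometric spacing so that the distinct subset-sums are all separable). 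She then passes the algorithm's state to Bob, who must recover $a$ by issuing augmentation queries — but since Bob has no further useful links to add here, the cleaner formulation is: Bob simply asks for the approximate optimal weight once, and we argue that a single scalar value cannot in general recover $n$ bits; so instead Bob's side of the stream should supply links that let him "probe" individual coordinates.

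Here is the probing version I would actually carry out. Use the same star base tree. Alice, for each $i\in[n]$, adds the link $(r,x_i)$ of weight $M^i$ (for a large $M=\poly(n)$) iff $a_i=1$; she also always adds a "backup" link $(r,x_i)$ of weight $M^{n+1}$ for every $i$ (so the instance is always feasible, and this does not hurt the $\Omega(n)$ bound). She sends the state to Bob. To decode coordinate $i$, Bob adds nothing and asks for a multiplicative $\alpha$-approximation $V$ of the optimum. The optimum is $\sum_{i:a_i=1}M^i + M^{n+1}\cdot(\#\{i:a_i=0\})$; choosing $M$ larger than $\alpha\cdot n\cdot M^{n+1}$... which is impossible since $M^{n+1}>M$. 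So this naive single-query approach fails, confirming that the reduction must let Bob interact per-coordinate. The right move is for Bob to add, for the coordinate $i$ he cares about, a cheap alternative path for $(r,x_i)$: e.g. introduce an extra leaf $z$ attached to $r$ (put it in the base tree) and let Bob add zero-weight links $(x_j,z)$ for all $j\ne i$ and the tree edge $(r,z)$; then covering $(r,x_i)$ can be done for free using a path through $z$ unless... this collapses the instance.

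Given these complications, the simplest correct route — and the one I would commit to — is a direct reduction from \texttt{INDEX} on $n$ bits, mirroring the structure of the proof of Theorem~\ref{thm:lb-link-arrival} but scaled down. Let the base tree have vertices $r,x_1,\dots,x_n$ with edges $(r,x_i)$ (a star). Alice holds $a\in\{0,1\}^n$ and, for each $i$ with $a_i=1$, adds the link $(r,x_i)$ of weight $1$; for each $i$ with $a_i=0$ she adds nothing. She sends the algorithm state. Bob wants bit $i_0$; he adds the single zero-weight link $(r,x_{i_0})$... no: then the answer is always the same. Instead: Bob contracts away all other leaves by adding zero-weight links $(r,x_j)$ for $j\neq i_0$; then the only uncovered tree edge potentially left is $(r,x_{i_0})$, which is covered at weight $1$ iff Alice included $(r,x_{i_0})$, i.e. iff $a_{i_0}=1$, and is uncoverable (infeasible) otherwise. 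To restore feasibility, pad the base tree with one extra leaf $x_0$ and a guaranteed Alice-link, or let Bob also add $(r,x_{i_0})$ of weight $n$: then the optimum is $1$ if $a_{i_0}=1$ and $n$ if $a_{i_0}=0$. Any finite multiplicative approximation distinguishes $1$ from $n$, so Bob recovers $a_{i_0}$; hence the state must carry $\Omega(n)$ bits, giving the claimed lower bound. The main obstacle — and the only subtle point — is ensuring the instance is always feasible while keeping the gap between the two cases a genuine multiplicative (not additive) gap independent of the approximation factor; the weight-$n$ backup link resolves exactly this, since $n/1=n$ exceeds any fixed constant as $n\to\infty$ (and for a specific claimed factor $\alpha$ one just takes the backup weight $>\alpha$). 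One should also note the reduction extends to $k$-CAP by replacing each base-tree edge with $k-1$ parallel edges, as in the remark following Theorem~\ref{thm:lb-link-arrival}.
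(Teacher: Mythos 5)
Your final construction is essentially the paper's: a star base tree on $r, x_1, \dots, x_n$, a reduction from INDEX, Alice encoding bit $a_i$ via the presence or absence of a link $(r,x_i)$, and Bob closing all tree edges $(r,x_j)$ for $j\neq i_0$ with zero-weight links while supplying a backup link $(r,x_{i_0})$. The only substantive difference is the weight assignment: you give Alice's links weight $1$ and Bob's backup weight $n$ (or, parametrically, $>\alpha$), so the gap in $\opt$ is $1$ versus at least $\alpha$; the paper instead gives Alice's links weight $0$ and Bob's backup weight $1$, so the gap is $0$ versus $1$. The paper's choice is cleaner for a ``to any factor'' claim: a multiplicative $\alpha$-approximation of $\opt=0$ must itself be $0$, so the reduction works uniformly over every $\alpha$ (including $\alpha$ growing with $n$) using only weights in $\{0,1\}$. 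Your version also works, but requires fixing $\alpha$ in advance and letting the backup weight (hence $W$) scale above it; and your intermediate sentence ``any finite multiplicative approximation distinguishes $1$ from $n$'' is not literally true (an $n$-approximation need not) --- you repair this in the parenthetical, but it should simply be stated correctly. Finally, the abandoned attempts in the first half (geometric weights $M^i$, the extra vertex $z$) are noise; the last few sentences are the actual proof and should stand alone.
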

\begin{proof}
    Let the base tree contain $(n+1)$ vertices $r, x_1,\dots,x_n$, and edges $(r,x_i)$ for all $1\le i\le n$.
We reduce from the INDEX problem, where Alice's bit string is from $\{0,1\}^{n}$. For $1\le i\le n$,  she adds link $(r,x_i)$ to $L$ if and only if the $i$-th bit in her bit string is $1$. Each of these links has weight $0$. Then she sends the memory content of the streaming algorithm to Bob.

Suppose Bob wants to find out whether the $i$-th bit in Alice's bit string is $1$. He adds links $(r,x_j)$ of zero weight to $L$ for all $j\neq i$. Then he adds link $(r,x_i)$ of weight $1$. Then Bob asks the streaming algorithm to output an approximation of the optimal solution weight. 

Observe that the optimal solution to augment the tree has $0$ weight if the $i$-th bit of Alice's bit string is $1$, and has weight $1$ otherwise. 
Hence, any multiplicative approximation streaming algorithm for weighted TAP can be used to solve the INDEX problem, which requires at least $\Omega(n)$ bits of space.
\end{proof}

\section{Connectivity Augmentation in the Fully Streaming Setting}
\label{sec:fullystream}

In this section, we first prove a space lower bound for $k$-CAP in the fully streaming model. Then, we show a streaming algorithm with nearly matching space complexity.

\subsection{Lowerbound for Estimating Connectivity Augmentation Cost}
Our main lower bound statement is the following.
\begin{theorem}\label{thm:lowerbound-combined}
For any constant integer $t\ge 1$,  the (unweighted) $k$-CAP (even when $k$ is known) in the fully streaming model requires space complexity $\Omega(kn + n^{1+1/t})$ bits (assuming the Erd\H{o}s's girth conjecture) to approximate the solution size to a factor better than $2t+1$.
\end{theorem}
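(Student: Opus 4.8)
The plan is to prove the two terms in the lower bound $\Omega(kn + n^{1+1/t})$ separately, since a space lower bound of $\Omega(\max(a,b))$ follows from separate lower bounds of $\Omega(a)$ and $\Omega(b)$. The $\Omega(kn)$ term should be the easier part: it should follow from (or closely mirror) the known $\tilde\Omega(kn)$ lower bound for testing $k$-edge-connectivity in one pass~\citep{sun2015tight}, which can be embedded into a $k$-CAP instance — if the algorithm could approximate the augmentation cost, it could in particular distinguish the case where the input is already $k$-connected (cost $0$) from the case where it is not (cost $\ge 1$), and since any finite approximation ratio distinguishes $0$ from positive cost, even a $(2t+1)$-approximation suffices here. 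So the real work is the $\Omega(n^{1+1/t})$ term.

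For the $\Omega(n^{1+1/t})$ bound, I would reduce from INDEX in the two-party model, following the sketch in the introduction (and presumably the content of Theorem~\ref{thm:lowerbound-t}). Let $\Gamma$ be an $n$-vertex graph of girth $> 2t+1$ with $\Omega(n^{1+1/t})$ edges, which exists under Erd\H{o}s's girth conjecture. Alice holds a subset $E_A \subseteq E(\Gamma)$, encoded as her INDEX bit-string over $E(\Gamma)$, and streams these edges (with appropriate weights/roles in the $k$-CAP instance) first; Bob, who wants to recover one bit, i.e. whether a fixed edge $(u,v) \in E(\Gamma)$ is in $E_A$, streams a gadget afterwards. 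The high-girth property means that $(u,v)\in E_A$ iff $\mathrm{dist}_{E_A}(u,v) = 1$, and otherwise $\mathrm{dist}_{E_A}(u,v) \ge 2t+1$ (any $u$–$v$ path avoiding the direct edge has length $\ge$ girth $-1 > 2t$). So it suffices to distinguish $\mathrm{dist}_{E_A}(u,v)=1$ from $\mathrm{dist}_{E_A}(u,v)\ge 2t+1$, and a $(2t+1-\delta)$-approximation of something proportional to this distance would do it.

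The core gadget reduces estimating $\mathrm{dist}_{E_A}(u,v)$ to estimating the optimal cost of augmenting a suitable graph to be $k$-edge-connected. The idea (matching the introduction's description for the $t=1$, $k=2$ prototype) is: Bob sets up a base graph $G$ that is $(k-1)$-edge-connected and has exactly one "deficient" cut whose only crossing links are (i) the candidate edges coming from $E_A$ — specifically, edges of $\Gamma$ can serve as unit-weight links — arranged so that a feasible augmentation forces choosing links that trace out a $u$–$v$ path in $E_A$, plus possibly (ii) some "default" expensive fallback so the instance is always feasible. Concretely, one can make the base graph a chain (path-like $(k-1)$-connected structure) whose endpoints are identified with $u$ and $v$: augmenting a chain to $2$-edge-connectivity (for $k=2$; tensor with $k-1$ parallel edges for general $k$) requires covering every "gap" cut, and the cheapest way to do this using only $\Gamma$-edges as links is essentially to pick a $u$–$v$ path. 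Then $\opt \approx \mathrm{dist}_{E_A}(u,v)$, so a better-than-$(2t+1)$ approximation separates cost $1$ from cost $\ge 2t+1$, solving INDEX and forcing $\Omega(|E(\Gamma)|) = \Omega(n^{1+1/t})$ bits. One must be careful that Alice's edges are all presented before Bob's — this is exactly the "fully streaming" (interleaved, adversarial order) model, which permits $E$ and $L$ to arrive in any order, so the reduction is valid there; it would not be valid in link-arrival.

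The main obstacle, as the introduction itself flags, is the need to \emph{rule out augmentation solutions that do not correspond to a $u$–$v$ path}: a priori the augmentation algorithm might cover the deficient cuts of the chain using links that form several disjoint path fragments, or use the chain's own structure, or exploit zero-weight helper links in unintended ways, yielding a cheap solution even when $\mathrm{dist}_{E_A}(u,v)$ is large. Handling this requires designing the base graph (the chain and any connectors) so that the structure of min-cuts to be covered is "serial" — every feasible augmentation must, cut by cut, hand off connectivity from $u$ toward $v$, so the used $\Gamma$-links must literally form a connected $u$–$v$ walk of length $\ge \mathrm{dist}_{E_A}(u,v)$ — and verifying this combinatorially is the delicate step. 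I would spend most of the proof carefully specifying the gadget and proving the two-sided bound $\opt = \Theta(\mathrm{dist}_{E_A}(u,v))$ (or exactly $\opt = \mathrm{dist}_{E_A}(u,v) + O(1)$), then invoke the $\Omega(|E(\Gamma)|)$ INDEX lower bound; the $\Omega(kn)$ term is appended via the separate $k$-connectivity-testing argument.
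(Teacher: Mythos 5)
Your decomposition into an $\Omega(n^{1+1/t})$ lower bound (via high-girth INDEX) and an $\Omega(kn)$ lower bound is exactly the structure of the paper's proof (which cites two separate theorems and takes the max). Your high-level framing of the $\Omega(n^{1+1/t})$ part --- Alice holds a subgraph of a high-girth graph $\Gamma$, Bob builds a chain from $u$ to $v$ so that $\mathrm{dist}_{E_A}(u,v)$ governs the augmentation cost --- is also the right one. However, you explicitly defer the ``delicate step'' of ruling out augmentations that don't correspond to a $u$--$v$ path, and this is precisely where the proof lives; without it the argument has a genuine gap. The paper's trick is specific: Bob does not build an arbitrary chain. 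He sets $x_1 = u$, $x_{|V|} = v$, and sorts the intermediate vertices $x_2,\ldots,x_{|V|-1}$ in increasing order of $d_H(u,x_j)$ where $H = \Gamma \setminus \{(u,v)\}$. A feasible augmentation $\{(x_{i_1},x_{j_1}),\ldots,(x_{i_s},x_{j_s})\}$ (after taking a minimal sub-cover) corresponds to a staircase of intervals with $1 = i_1 < i_2 < \cdots < i_s$, $j_1 < \cdots < j_s = |V|$, and $j_{k-1} \ge i_k$; the monotone ordering of the $x_j$'s then lets one induct to show $d_H(u,x_{j_k}) \le k$, hence $d_H(u,v) \le s$, hence $s \ge 2t+1$ by the girth bound. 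Crucially the augmentation links need not form a literal $u$--$v$ walk (your phrasing), and without the distance ordering there is no way to convert ``intervals covering $[1,|V|-1]$'' into a bound on $d_H(u,v)$; a bad vertex ordering would let a small augmentation exist even when $u,v$ are far apart in $H$. So the missing ordering idea is not a detail but the crux.

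For the $\Omega(kn)$ term, your route (embed $k$-connectivity testing into $k$-CAP and invoke \citep{sun2015tight}) is genuinely different from the paper's, which adapts Zelke's min-cut construction directly: Alice streams a subgraph of a $k'$-regular graph, Bob attaches two cliques $K,K'$ of size $3k'$ and a vertex $c$ of degree $k-1$ so that $G''$ has exactly one or two min-cuts depending on whether $(u,v)$ is present, then adds two links so the augmentation cost is $1$ or $2$. Your reduction sketch is plausible in spirit, but has unchecked subtleties the paper's construction avoids: the $k$-CAP base graph must be $(k-1)$-edge-connected, and the link set must admit a feasible solution (distinguishing cost $0$ from cost $\ge 1$ is only meaningful if the instance is well-posed), so you would need to verify that the hard instances of the $k$-connectivity testing lower bound can be packaged this way. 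The paper sidesteps this by hand-building $G''$ so that its min-cut size is known to be $k-1$ in both cases.

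In short: right decomposition and right top-level ideas, but the key construction step (ordering the chain by distance from $u$) is missing from the $\Omega(n^{1+1/t})$ argument, and the $\Omega(kn)$ reduction is only sketched and leaves a feasibility/$(k-1)$-connectivity gap.
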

It follows from combining two lower bound results Theorem~\ref{thm:lowerbound-t} and Theorem~\ref{thm:lb-fully-nk}.
\paragraph{Lower bound in terms of approximation factor ($t$).}
We first describe the space lower bound in terms of the approximation factor. 
As is standard in the spanner literature, the proof is based on high-girth graphs, but here we need to be more careful to make the connection between tree-augmentation and shortest paths.

\begin{theorem}\label{thm:lowerbound-t}
Consider the (unweighted) TAP where $E$ is the base tree and $L$ is the set of edges to augment, and $E \cup L$ arrive as a stream in an arbitrary order. 

For any constant integer $t\ge 1$, any (randomized) streaming algorithm $\mathcal{A}$ that can output the size of a better than $(2t+1)$-approximate solution requires $\Omega(\gamma(n,2t+1))$ bits of space, where $\gamma(n,2t+1)$ denotes the maximum possible number of edges in an $n$–vertex graph with girth $> 2t+1$.
\end{theorem}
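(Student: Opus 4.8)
The plan is to reduce from the two-party INDEX problem on $\gamma(n, 2t+1)$ bits, where Alice holds a bit string indexed by the edges of a fixed extremal graph $H^*$ on $n$ vertices with girth $> 2t+1$ (i.e., $|E(H^*)| = \gamma(n,2t+1)$), and Bob wants to recover a single bit, i.e. whether a particular pair $\{u,v\}$ is an edge of Alice's subgraph $H \subseteq H^*$. Alice streams the links of $L$ corresponding to her edges; Bob then appends a base tree $E$ and a few more links tailored to the query pair $(u,v)$, and feeds everything to $\mathcal{A}$ in a single interleaved pass. The key geometric point is that, because $H^*$ has girth $> 2t+1$, if $(u,v) \in E(H)$ then $\mathrm{dist}_H(u,v) = 1$, while if $(u,v) \notin E(H)$ then either $u,v$ lie in different components of $H$ or $\mathrm{dist}_H(u,v) \ge 2t+1$ (any $uv$-path of length $\le 2t$ together with the edge $uv$ would close a cycle of length $\le 2t+1$). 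So a $(2t+1-\epsilon)$-approximate answer to a suitable augmentation instance lets Bob distinguish these cases and solve INDEX, giving the $\Omega(\gamma(n,2t+1))$ lower bound.

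First I would set up the gadget so that the optimal TAP value encodes $1 + \mathrm{dist}_H(u,v)$ (up to an additive constant absorbed by zero-weight links, analogous to the construction in Theorem~\ref{thm:lb-link-arrival}). Concretely, Bob builds a path (chain) whose two endpoints are identified with $u$ and $v$; augmenting this chain to be $2$-edge-connected forces the solution to ``close a cycle'' through the available links, and the cheapest way to do so that uses links from $L$ costs essentially the length of the shortest $uv$-path available among Alice's links (each of Alice's links has weight $1$), plus $O(1)$ controllable via zero-weight links Bob adds for the chain edges themselves. Then: if $(u,v) \in E(H)$, Bob's chain can be closed at cost $\approx 1$; if not, the cheapest closure costs $\ge 2t+1$ (or the instance is handled by a default expensive option Bob always includes so feasibility is preserved and the value is $\ge 2t+1$). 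Hence $\mathrm{OPT}_{\mathrm{yes}} \le c$ and $\mathrm{OPT}_{\mathrm{no}} \ge (2t+1)c$ for a suitable normalization $c$, and any better-than-$(2t+1)$ approximation separates the two.

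The main obstacle — and the part the excerpt explicitly flags (``we also need rule out potential augmentation solutions that do not correspond to a $uv$-path'') — is ensuring that the only cheap way to augment Bob's chain is via an actual short $uv$-path in $H$, rather than via some clever combination of links that covers all the $2$-cuts of the chain without tracing a single path. I would handle this by choosing the chain structure and the auxiliary (zero-weight) links so that every $2$-cut of the chain is a ``nested interval'' cut (as in the cycle-augmentation setup), and arguing that a minimal feasible set of links covering all these nested cuts must project down to a connected $uv$-walk in $H$; any link not lying on such a walk is redundant and can be removed without losing feasibility, so the optimum is exactly supported on a shortest $uv$-path. A careful but routine case analysis (which cuts each link covers, and why a ``witness path'' must exist) closes this gap. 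Finally, combining this with Theorem~\ref{thm:lb-fully-nk} (the $\Omega(kn)$ bound) and plugging in the extremal girth bound $\gamma(n,2t+1) = \Omega(n^{1+1/t})$ under Erdős's girth conjecture yields Theorem~\ref{thm:lowerbound-combined}.
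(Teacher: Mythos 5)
Your high-level setup matches the paper's: reduce from INDEX over a fixed extremal girth-$>2t+1$ graph, have Alice stream her subgraph $H$ as links, have Bob stream a chain as the base tree with endpoints $u$ and $v$, and argue that the augmentation cost separates the cases $(u,v)\in H$ and $(u,v)\notin H$. You also correctly isolate the crux: ruling out feasible augmentations that cover all the chain's cuts without tracing a short $uv$-path in $H$. But your proposed resolution---that ``a minimal feasible set of links covering all these nested cuts must project down to a connected $uv$-walk in $H$''---is false. Covering $\{1,\dots,|V|-1\}$ by a minimal family of intervals $[i_1,j_1-1],\dots,[i_s,j_s-1]$ with $1=i_1<\dots<i_s$, $j_1<\dots<j_s=|V|$ only forces the overlap condition $i_k\le j_{k-1}$; it does \emph{not} force $i_k=j_{k-1}$, so consecutive links $(x_{i_{k-1}},x_{j_{k-1}})$ and $(x_{i_k},x_{j_k})$ need not share an endpoint, and the selected links generally do \emph{not} form a walk in $H$. (E.g.\ intervals $[1,5],[3,8],[6,10]$ form a minimal cover but the three links are pairwise vertex-disjoint.) Minimality does not repair this, and without it your bound $\mathrm{OPT}\ge d_H(u,v)\ge 2t+1$ in the ``no'' case does not follow.

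The missing idea in the paper is how Bob orders the interior chain vertices: he sorts $x_2,\dots,x_{|V|-1}$ so that $d_H(u,x_2)\le d_H(u,x_3)\le\cdots\le d_H(u,x_{|V|-1})$, where $H$ is the host graph minus the edge $(u,v)$. With this monotone-distance ordering, even though the links need not form a walk, the overlap $i_k\le j_{k-1}$ gives $d_H(u,x_{i_k})\le d_H(u,x_{j_{k-1}})$, and an induction yields $d_H(u,x_{j_k})\le k$ for all $k$, hence $d_H(u,v)\le s$; the girth condition then forces $s\ge 2t+1$. Your proposal leaves the chain ordering unspecified (``a path whose two endpoints are identified with $u$ and $v$''), which is exactly where the argument lives. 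Two smaller points: the statement is for \emph{unweighted} TAP, so ``zero-weight links'' are not available here (you are likely carrying over machinery from the weighted link-arrival lower bound), and in the paper Bob contributes only the chain as tree edges---no extra links are needed since $u,v$ already sit at the chain's endpoints.
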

\begin{remark}
It is known that $\gamma(n,2t+1) = \Omega(n^{1+\frac{1}{t}})$
for $t\in \{1,2,3,5\}$ \citep{wenger1991extremal}, and for all integers $t$ assuming Erd\H{o}s' girth conjecture \citep{erdosgirth}. Moreover, for all integers $t$, $\gamma(n,2t+1) = \Omega(n^{1+2/(3t-2-(t\bmod 2))})$
unconditionally~\citep{lazebnik1995new}. 
\end{remark}

\begin{proof}
Let $G$ be a fixed graph on $|V|=n$ vertices with girth $>2t+1$ and $|E(G)| = \gamma(n,2t+1)$ edges.  Consider the INDEX  problem: Alice has a bit string from $\{0,1\}^{E(G)}$, viewed as a subgraph $G'\subseteq G$. Alice sends a message to Bob. Then, Bob needs to recover the $i$-th bit of the string for a given index $i$, or equivalently, decide whether $(u,v)\in G'$ for a given edge $(u,v) \in G$.
It is well known \citep{jcss/MiltersenNSW98} that any bounded-error randomized protocol for this task requires message size $\Omega(|E(G)|)$ bits. 

Now we use the streaming algorithm $\mathcal{A}$ for TAP to design a protocol for the INDEX problem:
Alice and Bob together will construct a TAP instance $(E,L)$ for $\mathcal{A}$.
First, Alice feeds $L:=E(G')$ to $\mathcal{A}$, and then sends the current memory content of $\mathcal{A}$ to Bob. In order to determine whether $(u,v)\in G'$, Bob constructs a chain $E:=\{(x_1,x_2),(x_2,x_3),\dots,(x_{|V|-1},x_{|V|})\}$ and feeds $E$ to $\mathcal{A}$,  where the endpoints are $x_1:=u, x_{|V|}:= v$, and the remaining vertices $\{x_2,\dots,x_{|V|-1}\}=V\setminus \{u,v\}$ are sorted so that $d_{H}(u,x_{j-1}) \le d_{H}(u,x_{j})$ $(2\le j\le |V|-1)$, where graph $H$ is defined to be $G$ with edge $(u,v)$ removed. Then, Bob decides $(u,v)\in G'$ if and only if $\mathcal{A}$ reports an approximate answer $< 2t+1$ on the instance $(E,L)$. Now we show the correctness of this protocol.
\begin{itemize}[leftmargin=*]
  \item {\bf Case $(u,v)\in E(G')$:} In this case, the optimal solution for augmenting the chain 
  \[E=\{(u,x_2),(x_2,x_3),\dots,(x_{|V|-1},v)\}\] 
  is to include the single edge $(u,v)\in L=E(G')$ which completes a cycle. Hence, $\mathcal{A}$ should report an approximate answer $<2t+1$. So Bob correctly decides $(u,v)\in E(G')$.

\item {\bf Case $(u,v)\notin E(G')$:} In this case, in order to show Bob correctly decides $(u,v)\notin E(G')$, it suffices to show that any feasible augmentation solution $S=\{(x_{i_1},x_{j_1}),(x_{i_2},x_{j_2}),\dots,$ $(x_{i_s},x_{j_s})\}\subseteq L=E(G')$ must have size at least $s\ge 2t+1$.

We assume $i_k<j_k$ for all $1\le k\le s$. Since $\{(x_1,x_2),(x_2,x_3),\dots,$ $(x_{|V|-1},x_{|V|})\}\cup S$ is $2$-edge-connected, we know the intervals $[i_1,j_1-1],\dots,[i_s,j_s-1]$ covers all $\{1,2,\dots,|V|-1\}$.  We can assume $1=i_1<i_2<\dots < i_s$ and $j_1<\dots < j_s=|V|$ and $j_{k-1}\ge i_{k}$ without loss of generality (by keeping a minimal feasible subset of $S$). 

Note that $S\subseteq E(G') \subseteq E(G) \setminus \{(u,v)\} = E(H)$.
Now we inductively prove for every $1\le k \le s$ that $d_H(u,x_{j_k})\le k$.   The base case $k=1$ is immediate: $d_H(u,x_{j_1}) = d_H(x_{i_1},x_{j_1}) = 1$. For the inductive step $2\le k\le s$, we have
\begin{align*}
    d_H(u,x_{j_k}) 
    &\le d_H(u,x_{i_{k}})+d_H(x_{i_k},x_{j_k})\\
    & = d_H(u,x_{i_{k}}) + 1\\
    & \le d_H(u,x_{j_{k-1}}) +1  &&\hspace{-1.1cm}\rhd\text{by $i_k\le j_{k-1}<|V|$ and monotonicity of $d_H(u,x_{*})$}\\
    & \le k &&\hspace{-1.1cm}\rhd\text{by induction hypothesis}
\end{align*}
Hence, this establishes that $d_H(u,v) = d_{H}(u,x_{j_s})\le s$. So $G = H \cup \{(u,v)\}$ contains a cycle of length $\le s+1$.  Since $G$ has girth $\ge 2t+2$, we conclude $s \ge 2t+1$, which finishes the proof.
\end{itemize}
\end{proof}
\begin{remark}
    The same lower bound  of Theorem~\ref{thm:lowerbound-t} also generalizes to $k$-CAP for higher values $k> 2$ (if we allow the base graph $E$ to have parallel edges): we simply replace each edge in Bob's chain $E$ by $(k-1)$ parallel edges, so that it becomes a $(k-1)$-edge-connected graph, and the rest of the proof works similarly.
\end{remark}

\paragraph{Lower bound in terms of connectivity parameter ($k$).}~\citet{zelke2011intractability} gave a simple proof that computing the size of the  minimum cut of an (unweighted) undirected graph requires $\Omega(n^2)$ bits of space for any one-pass streaming algorithm. In Zelke's construction the input graph has minimum cut size as large as $\Theta(n)$. Here we observe that Zelke's proof can be adapted to graphs with minimum cut size $\Theta(k)$, and show lower bounds for the connectivity augmentation problem.

We remark that \citep{sun2015tight} also obtained an $\Omega(kn)$-bit randomized lower bound and an $\Omega(kn\log n)$-bit deterministic lower bound for the $k$-CAP using a different proof.

\begin{theorem}\label{thm:lb-fully-nk}
 The $k$-CAP (where $k$ is known) in the fully streaming model (with unweighted links) requires $\Omega(nk)$ bits of space to approximate to any finite factor.
\end{theorem}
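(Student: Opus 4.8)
The plan is to adapt Zelke's $\Omega(n^2)$-bit lower bound for min-cut size \citep{zelke2011intractability} to the regime where the min-cut is $\Theta(k)$, and then observe that in the fully streaming model for $k$-CAP, both $E$ and $L$ arrive in the stream, so Alice can encode a bit string into the edge set while Bob queries connectivity. Concretely, I would reduce from the two-party INDEX problem with Alice holding a string of length $\Omega(nk)$: the idea is to set up a base configuration in which the edge-connectivity of the relevant cut is exactly $k-1$ or exactly $k$ depending on the target bit, so that the optimal $k$-CAP solution has size $0$ in one case and size $\ge 1$ in the other. Since distinguishing ``cost $0$'' from ``cost $\ge 1$'' is exactly what any finite-factor multiplicative approximation must do, this yields the claimed bound.

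The construction I have in mind: take a bipartite-like gadget on $\Theta(n)$ vertices where one designated cut $(A, V\setminus A)$ is crossed by a bundle of potential edges of size $\Theta(k)$; Alice, holding a bit string in $\{0,1\}^{\Theta(nk)}$ (think of it as $\Theta(n)$ blocks of $\Theta(k)$ bits each), feeds into the stream the edges indicated by her bits, arranged so that the total number of crossing edges of cut $(A,V\setminus A)$ together with some auxiliary ``filler'' edges always sits at exactly $k-1$ plus the value of one particular target bit. Then Bob, to read the $(i)$-th bit, appends to the stream a small set of zero/unit-cost links and a few $(k-1)$-parallel-edge gadgets that force the only freedom in the $k$-CAP instance to be whether that one cut is already $k$-connected. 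He reads off the bit from whether the reported approximate augmentation cost is $0$ or positive. Each such query recovers one bit, and by a standard round-robin / direct-sum argument over the $\Theta(n)$ blocks (or simply by noting INDEX on a string of length $m$ requires $\Omega(m)$ communication, hence $\Omega(m)$ memory for the one-pass streaming algorithm) we get the $\Omega(nk)$ bound.

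The main obstacle I anticipate is the same one the authors flag for their other reductions: ruling out ``spurious'' augmentation solutions, i.e., cheap ways to $k$-connect the graph that do not correspond to the intended bit. In Zelke's original argument the global min-cut is the unique bottleneck, but here one must ensure that after Bob's gadget edges are added, no other cut of the combined graph $(V, E\cup S)$ has size $<k$ unless the target bit dictates it — this requires the filler/auxiliary structure to be $k$-edge-connected internally and attached in a way that creates no new small cuts. I would handle this by using $(k-1)$-fold parallel edges along a spanning tree of the auxiliary part (exactly the trick used in the remark after Theorem~\ref{thm:lowerbound-t} and in the $k$ generalization of Theorem~\ref{thm:lb-link-arrival}) so that every cut except the designated one is automatically of size $\ge k$, and then verify that the designated cut's size is governed solely by Alice's target bit. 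Once that invariant is established, correctness of the reduction is immediate and the communication lower bound transfers. A secondary, more routine point is bookkeeping the vertex and edge counts so that the graph has $O(n)$ vertices while the encoded string has length $\Omega(nk)$; this just needs $k = O(n)$, which we may assume since otherwise the statement is about graphs with $\Omega(n^2)$ edges and the bound is trivial.
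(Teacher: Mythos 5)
Your high-level plan — reduce from INDEX, adapt Zelke's min-cut lower bound to the regime where the min-cut is $\Theta(k)$, and use the $\Omega(nk)$ communication bound for INDEX — is exactly the route the paper takes. But there is a genuine gap in the step where you claim the designated cut $(A,V\setminus A)$ can be "arranged so that [its size] always sits at exactly $k-1$ plus the value of one particular target bit." The crossing edges of any single fixed cut are contributed by \emph{many} of Alice's bits at once (all edges in the relevant block), so the cut size depends on their \emph{sum}, and no filler edges that Bob adds can isolate one bit, because Bob does not know the other bits. This is not a bookkeeping detail but the crux of the reduction.

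The paper's construction handles this with two ideas you don't have. First, Alice sends, alongside the streaming sketch, the degree sequence $\deg_{G'}(u)$ of her subgraph of a $k'$-regular $G$; this costs only $O(n\log k')$ extra bits, which is subtracted at the end of the counting. Knowing $\deg_{G'}(u)+\deg_{G'}(v)$ tells Bob the aggregate contribution of all the non-target bits. Second, the target bit is extracted by a double-counting identity: the cut $C_1 = \delta(K\cup\{u,v\})$ has size $\deg_{G'}(u)+\deg_{G'}(v)-2\cdot[(u,v)\in E(G')]$, while Bob manufactures a second calibration cut $C_2 = \delta(\{c\})$ of size exactly $\deg_{G'}(u)+\deg_{G'}(v)-2$. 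Thus $|C_1|-|C_2|$ is $0$ or $2$ depending solely on the queried bit, and both $C_1,C_2$ are min-cuts iff the bit is $1$. Bob's two links $(c,r')$ and $(r',r)$ each cover exactly one of $C_2,C_1$, so the augmentation cost is $1$ vs.\ $2$ — not $0$ vs.\ $\ge 1$ as you propose (with the paper's gadget, the graph is never $k$-edge-connected because of $c$). The role your $(k-1)$-parallel spanning-tree filler is meant to play — preventing spurious small cuts — is served in the paper by two cliques of size $3k'$, which guarantee that $K\cup\{u,v\}$, $K'\cup(V\setminus\{u,v\})$, and $\{c\}$ cannot be split by a min-cut. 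Your sketch would need to (a) specify how Bob learns the aggregate of the non-target bits, (b) identify the two-cut comparison (or an equivalent bit-isolation mechanism), and (c) actually verify the no-spurious-cuts claim rather than assert it.
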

\begin{proof}
The proof is a straightforward adaptation of \citep{zelke2011intractability}.
Let $G$ be a fixed $k'$-regular graph on $|V|=n$ vertices and $|E(G)|=k'n/2$ edges. Following \citep{zelke2011intractability}, we give a reduction from the following INDEX problem:  
Alice has a bit string from $\{0,1\}^{E(G)}$, viewed as a subgraph $G'\subseteq G$. Alice sends a message to Bob. Then, Bob needs to recover the $i$-th bit of the string for a given index $i$, or equivalently, decide whether $(u,v)\in G'$ for a given edge $(u,v) \in G$.
It is well known \citep{jcss/MiltersenNSW98} that any bounded-error randomized protocol for this task requires message size $\Omega(|E(G)|) = \Omega(nk')$ bits. 

Now we use the streaming  algorithm $\mathcal{A}$ for $k$-CAP to design a protocol for the INDEX problem.
Alice and Bob together will construct a $k$-CAP instance $(G'',L)$ for $\mathcal{A}$, as follows: 
\begin{itemize}[leftmargin=*]
    \item Alice feeds her input graph $G'$ to the streaming algorithm $\mathcal{A}$, and sends the memory content of $\mathcal{A}$ as well as all degrees $deg_{G'}(u)$ for all $u\in V(G')$ to Bob. The latter has $O(n\log k')$ total bit length as $G'$ has maximum degree at most $k'$.
    \item Then, Bob builds graph $G''$ by adding vertices and edges to $G'$ (and feeding them to $\mathcal{A}$), as follows: First, adds two disjoint cliques  $K,K'$ each of size $3 k'$ (on vertices disjoint from $V(G')$). Then, adds edges between $(u,q),(v,q)$ for all $q\in K$, and edges $(w,q')$ for all $w\in V(G')\setminus \{u,v\}$ and all $q'\in K'$. Then, adds a special new vertex $c$, and connect it to $k-1:=deg_{G'}(u)+deg_{G'}(v)-2 \le 2k'-2$ vertices of $K'$.
    This finishes the construction of the graph $G''$.
    \item For two arbitrary vertices $r\in K, r'\in K'$, Bob adds two links $L:=\{(c,r'),(r',r)\}$.  Then Bob obtains the solution  of algorithm $\mathcal{A}$ on instance $(G'',L)$, where the connectivity should be increased from $k-1=deg_{G'}(u)+deg_{G'}(v)-2$ to $k$. 
    If the result is $1$, then Bob decides $(u,v)\notin G'$. Otherwise, Bob decides $(u,v)\in G'$. 
\end{itemize}

\begin{figure}[!h]
    \centering \includegraphics[width=.55\textwidth]{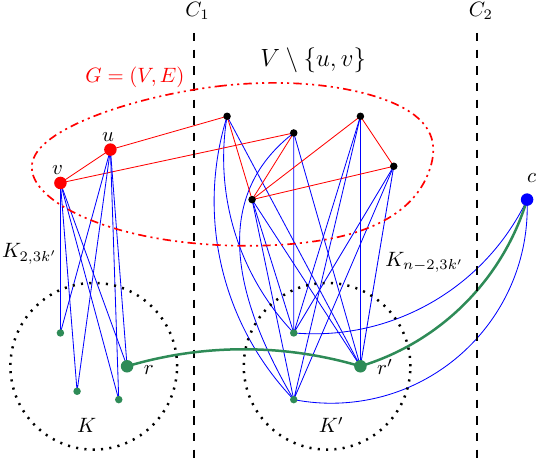}
    \caption{Illustration for the proof of Theorem~\ref{thm:lb-fully-nk}. The red edges are from Alice. The blue edges are from Bob. Then, the two green links are added by Bob.}
    \label{fig:nk-lb}
\end{figure}

To see the correctness of this protocol, we first analyze the minimum cuts of $G''$. Since $deg_{G''}(c) = k-1$, the edge connectivity of $G''$ is at most $k-1$. 
Then, any cut of $G''$ that separates the clique $K$ (or $K'$) must have size at least $\min_{1\le j< |K|}j(|K|-j) = 3k'-1> k-1$, and is therefore not a minimum cut.
Furthermore, any cut that separates $K$ from either of $\{u,v\}$ (or separates $K'$ from any of $V\setminus \{u,v\}$) must have cut size at least $|K|=3k'>k-1$, and is therefore not a minimum cut. This means $V(G'')$ is partitioned to three groups: $K\cup \{u,v\}$, $K'\cup V(G')\setminus \{u,v\}$, and $\{c\}$, such that none of the groups can be separated by any minimum cut. Since $K\cup \{u,v\}$ is not adjacent to $c$, we see that there are only two candidates $C_1=(V_1,V(G'')\setminus V_1),C_2=(V_2,V(G'')\setminus V_2)$ for minimum cut of $G''$, where $V_1= K\cup \{u,v\}$ and $V_2 = \{c\}$. By inspecting the construction of $G''$ we can see $|C_2|=deg_{G''}(c) = k-1$, and
$|C_1| = |E(K\cup \{u,v\}, K'\cup V(G')\setminus \{u,v\})| = |E(\{u,v\}, V(G')\setminus \{u,v\})| = deg_{G'}(u)+deg_{G'}(v) -2|E(\{u\},\{v\})| = \begin{cases} k+1 & (u,v)\notin E(G')\\k-1 & (u,v)\in E(G') \end{cases}$.
Hence, the edge connectivity of $G''$ is $k-1$, and there are two minimum cuts $C_1,C_2$ if $(u,v)\in E(G')$, and exactly one minimum cut $C_2$ if $(u,v)\notin E(G')$.

Now, observe that the link $(c,r')\in L$ can cover $C_2$ but not $C_1$, and the link $(r',r)\in L$ can cover $C_1$ but not $C_2$. So the connectivity augmentation solution has size $1$ if $(u,v)\notin E(G')$, and has size $2$ if $(u,v)\in E(G')$. Hence, any if $\mathcal{A}$ has approximation better than $2$, then this protocol correctly solves the INDEX problem. The message size of the protocol is the space complexity of $\mathcal{A}$ plus $O(n\log k')$ bits. Hence the space complexity of $\mathcal{A}$ must be at least $\Omega(k'n) - O(n\log k') \ge \Omega(k'n) = \Omega(kn)$, provided that $k'$ is larger than some constant.  The proof can also be adapted to any finite approximation ratio, by adding multiple links of the form $(r',r)\in K'\times K$ instead of just one.

It still remains to show the theorem for small constant $k$. Note that the $1$-connectivity augmentation problem (increasing connectivity from $0$ to $1$) is at least as hard as the problem of deciding whether a $n$-vertex graph is connected, and hence requires $\Omega(n\log n)$ bits of space by \cite{sun2015tight}. 
   \end{proof}
   \begin{remark}
      We remark that the same $\Omega(nk)$ lower bound also holds for the task of constructing a cactus representation of  a graph (Lemma~\ref{lem:cactus-rep}), even assuming the edge connectivity value $k$ is known. This is because the cactus representation immediately allows to distinguish between the cases of having two minimum cuts $C_1,C_2$ or one minimum cut $C_2$, and thus the proof above still applies.
   \end{remark}

\subsection{Tight Algorithm} \label{subsec:algo-fully}
 Next, we describe our single-pass algorithm that outputs a $(2t-1+\eps)$-approximate solution using $O(nk + \eps^{-1}n^{1+1/t}\log n)$ words of space, nearly matching the lower bounds of Theorem~\ref{thm:lowerbound-t} and~\ref{thm:lb-fully-nk}.

\paragraph{Compact $k$-connectivity certificate for edges}
A $k$-connectivity certificate, or simply a $k$-certificate, for an $n$-vertex graph $G$ is a subgraph $H$ of $G$ that contains all edges crossing cuts of size $k$ or less in $G$. Such a certificate always exists with $O(kn)$ edges, and moreover, there are graphs where $\Omega(kn)$ edges are necessary~\citep{goel2010graph}.
The following observation is folklore.
\begin{lemma}[Follows from~\citep{NagamochiI92}]
\label{lem:kcert}
   There is a one-pass streaming algorithm that computes a $k$-connectivity certificate of an $n$-vertex graph with at most $k(n-1)$ edges with space complexity $O(nk)$ words.
\end{lemma}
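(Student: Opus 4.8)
The plan is to adapt the classical sparse-certificate construction of Nagamochi and Ibaraki to the streaming setting, maintaining $k$ edge-disjoint spanning forests of the graph seen so far. Concretely, I would initialize $k$ empty forests $F_1, \dots, F_k$. When an edge $e = (u,v)$ arrives in the stream, I scan $F_1, F_2, \dots$ in order and add $e$ to the first forest $F_i$ in which $u$ and $v$ lie in different components; if no such $F_i$ exists (i.e., $u$ and $v$ are already connected in all of $F_1, \dots, F_k$), I discard $e$. The output is $H = F_1 \cup \dots \cup F_k$. Since each $F_i$ is a forest on $n$ vertices it has at most $n-1$ edges, so $|E(H)| \le k(n-1)$, and each forest together with the union-find structure needed to test connectivity occupies $O(n)$ words, giving $O(nk)$ words total. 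Each edge is processed with $O(k)$ connectivity queries, so the algorithm is a genuine one-pass streaming algorithm.

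The heart of the argument is correctness: I must show that $H$ contains every edge crossing any cut of size at most $k$ in $G$. First I would record the \emph{scan-first} (or "forest-decomposition") invariant: after processing any prefix of the stream, for every $i$ the forest $F_i$ spans a superset of the components spanned by $F_{i+1}$, and more importantly, for any two vertices $x, y$ that are connected in the graph processed so far using only edges that landed in $F_1 \cup \dots \cup F_i$, the forests $F_1, \dots, F_i$ each connect $x$ and $y$. The key consequence, which I would prove by a direct argument, is that for any cut $(S, V \setminus S)$, each of the forests $F_1, \dots, F_k$ contains a crossing edge provided $G$ has at least that many crossing edges; equivalently, $\min(k, |\delta_G(S)|)$ of the forests cross the cut. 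Given this, suppose $e = (u,v) \in \delta_G(S)$ with $|\delta_G(S)| \le k$ but $e \notin H$. Then $e$ was discarded, meaning at the time $e$ arrived, $u$ and $v$ were already connected in every $F_i$; in particular they are connected in $F_{|\delta_G(S)|}$ using edges that do not cross $(S, V\setminus S)$ — but then $F_{|\delta_G(S)|}$ plus $e$ would not be the relevant obstruction... I need to be slightly more careful here and instead argue via the standard fact that the Nagamochi–Ibaraki decomposition produces a $k$-certificate: each $F_i$ is a "maximal spanning forest" in the residual graph $G \setminus (F_1 \cup \dots \cup F_{i-1})$, and a theorem of Nagamochi–Ibaraki states that $\bigcup_{i=1}^k F_i$ preserves all cuts of value $\le k$. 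The streaming implementation computes exactly this decomposition online, because the greedy "first forest that accepts it" rule is precisely what builds maximal spanning forests incrementally.

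The main obstacle I anticipate is verifying that the \emph{online}, edge-at-a-time greedy rule genuinely reproduces the Nagamochi–Ibaraki forest decomposition, rather than just some $k$ forests. The subtlety is that Nagamochi–Ibaraki define $F_i$ as a maximal spanning forest of $G \setminus (F_1 \cup \dots \cup F_{i-1})$ computed on the \emph{whole} graph at once, whereas the streaming algorithm commits edges to forests irrevocably as they arrive. I would resolve this by observing that adding edges one at a time and always routing each edge to the lowest-index forest that can accept it yields, at the end, forests $F_i$ that are each \emph{maximal} in the appropriate residual graph: if some edge $e$ in residual graph $G \setminus (F_1 \cup \dots \cup F_{i-1})$ connected two distinct components of $F_i$, then at the moment $e$ arrived it would have been placed into $F_i$ (or an earlier forest), contradiction. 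This maximality is exactly the hypothesis of the Nagamochi–Ibaraki theorem, so their cut-preservation guarantee applies verbatim, and Lemma~\ref{lem:kcert} follows. All remaining points — the edge count $k(n-1)$, the $O(nk)$-word space bound, and the single-pass guarantee — are immediate from the construction.
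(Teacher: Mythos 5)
Your proposal is correct and takes essentially the same approach as the paper: maintain $k$ forests, route each arriving edge to the lowest-index forest that can accept it without creating a cycle, and invoke the Nagamochi--Ibaraki theorem that the union of $k$ successive maximal spanning forests is a $k$-certificate. The only difference is that you explicitly verify the online greedy rule yields maximal forests in each residual graph, a step the paper treats as immediate.
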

\begin{proof}
 Nagamochi and Ibaraki showed that the following procedure produces a $k$-edge connectivity certificate of $G=(V,E)$:  Iteratively for $i\in \{1,2,\dots,k\}$, let $F_i$ be any maximal forest of $E\setminus (F_1\cup \dots \cup F_{i-1})$, and finally output $F_1\cup \dots \cup F_k$ as a $k$-edge connectivity certificate of size at most $k(n-1)$. 
 
 This procedure can be easily implemented in the streaming model: maintain $k$ forests $F_1,\dots,F_k$ initially empty. For each arriving edge $e$, add $e$ to $F_i$ where $i$ is the smallest index such that $F_i\cup\{e\}$ is still a forest (discard $e$ if such $i$ does not exist).
\end{proof}
   
Next, we will prove the following theorem which gives a single-pass algorithm that outputs a $(2t-1+\eps)$-approximate solution using $O(nk + \eps^{-1}n^{1+1/t}\log n)$ words of space, nearly matching the lower bounds of Theorem~\ref{thm:lowerbound-t} and~\ref{thm:lb-fully-nk}.
\begin{theorem}
\label{thm:fullystreamingkconn}
The $k$-CAP in the fully-streaming model can be solved by a single-pass streaming algorithm with approximation ratio $(2t-1+\eps)$ and space complexity $O(nk + \eps^{-1}n^{1+1/t}\log n)$ words.
\end{theorem}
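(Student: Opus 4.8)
The plan is to maintain two independent sketches in parallel during the single pass over the interleaved stream of base-graph edges and links. First, I would run the folklore $k$-connectivity certificate algorithm of Lemma~\ref{lem:kcert} on the stream of base-graph edges $E$, producing a subgraph $H\subseteq G$ of at most $k(n-1)$ edges using $O(nk)$ words; this $H$ preserves all cuts of $G$ of value at most $k$, hence preserves exactly the set of min-cuts that a feasible $k$-CAP solution must cover. Second, I would run the weighted streaming spanner algorithm (Theorem~\ref{thm:spanner-alg}, applied with stretch parameter $t$) on the stream of links $L$, obtaining a $(2t-1+\eps)$-spanner $L^{\mathrm{sp}}\subseteq L$ with $O(\eps^{-1}n^{1+1/t}\log n)$ edges. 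Both sketches are updated online as edges arrive (routing each stream element to whichever sub-algorithm owns it), so the total space is $O(nk+\eps^{-1}n^{1+1/t}\log n)$ words. At the end of the stream I would compute a cactus representation of $H$ (which equals the cactus of $G$ since $H$ preserves all cuts of value $\le k$, as noted in the Remark after Corollary~\ref{cor:cactus}) and then invoke an offline $(1+\eps)$-approximation — or even an exact algorithm, or the $2$-approximation suffices up to constants — for weighted cactus/cycle augmentation restricted to the link set $L^{\mathrm{sp}}$, outputting the resulting augmentation.

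The correctness argument has two halves. Feasibility: any augmentation of $H$ using links from $L^{\mathrm{sp}}$ covers all min-cuts of $H$, which are the same as the min-cuts of $G$, so the output is feasible for the original $k$-CAP instance. Approximation: let $\opt\subseteq L$ be an optimal solution of weight $w(\opt)$. For each link $\ell=(a,b)\in\opt$ of weight $w(\ell)$, the spanner guarantee gives an $a$–$b$ path $P_\ell$ in $L^{\mathrm{sp}}$ of total weight at most $(2t-1+\eps)\,w(\ell)$. I claim $\bigcup_{\ell\in\opt}P_\ell$ is a feasible augmentation: every min-cut $S$ of $G$ crossed by some $\ell=(a,b)\in\opt$ has $a,b$ on opposite sides of $S$, so the path $P_\ell$ must use at least one edge crossing $S$; hence every min-cut covered by $\opt$ is covered by this union of paths. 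Its total weight is at most $(2t-1+\eps)\,w(\opt)$. Therefore the optimal augmentation using only $L^{\mathrm{sp}}$ has weight at most $(2t-1+\eps)\,w(\opt)$, and running the offline $(1+\eps')$-approximation on $(H,L^{\mathrm{sp}})$ yields a solution of weight at most $(1+\eps')(2t-1+\eps)\,w(\opt)$, which is $(2t-1+O(\eps))\,w(\opt)$ after rescaling $\eps$; if instead we only have a $2$-approximate offline algorithm for cactus augmentation available in the cited form, we compose it and absorb the constant, but since exact offline algorithms for cactus/cycle augmentation exist (e.g.,~\citep{Gabow95} via the reduction to the directed problem in the proof of Theorem~\ref{thm:unweighted}) we can keep the ratio at $(2t-1+\eps)$.

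The main obstacle is the "path covers the cut" step together with making sure the two sketches compose cleanly despite the interleaved arrival order. The cut-covering claim is the heart of the matter: it is exactly the statement that a spanner is a valid *universal* augmentation set, and it relies only on the fact that a path between the two endpoints of a crossing link must itself cross the cut — a purely topological fact that needs no coordination between which edges of $G$ versus $L$ have arrived, so the interleaving causes no difficulty for either sub-algorithm (the certificate algorithm only ever sees $E$-edges, the spanner algorithm only ever sees $L$-edges, and neither needs the other's input). A secondary point to handle carefully is that Theorem~\ref{thm:spanner-alg}'s spanner guarantee is stated for edges $(u,v)\in E$ of the link graph, i.e.\ the bound $(2t-1+\eps)w(uv)$ holds for links $(u,v)$ that are present in the stream; since every $\ell\in\opt$ is by definition a link in $L$, the guarantee applies verbatim, so this is just a matter of citing the spanner theorem with the correct quantifier. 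Finally one checks the cactus of $H$ can be built offline in $\tilde O(kn)$ time and space as in the Remark after Corollary~\ref{cor:cactus}, so the post-processing fits the claimed budget.
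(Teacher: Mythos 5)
Your high-level approach is the same as the paper's: run the $k$-connectivity certificate sketch on the $E$-edges and the weighted spanner sketch on the $L$-links in parallel, then solve the offline instance $(E',L')$, using the ``a path between the endpoints of a crossing link must itself cross the cut'' argument to bound the approximation factor. The feasibility and approximation arguments you give match the paper's verbatim in substance.

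There is, however, a genuine error in the last step. You claim that an \emph{exact} offline algorithm for weighted cactus/cycle augmentation exists ``e.g.,~\citep{Gabow95} via the reduction to the directed problem in the proof of Theorem~\ref{thm:unweighted}.'' That reduction is not exact: replacing each undirected link $(u,v)$ by the two arcs $\vec{uv},\vec{vu}$ and solving the directed covering problem exactly with~\citep{Gabow95} yields only a $2$-approximation for the undirected cycle augmentation problem (this is precisely where the factor of $2$ in Theorem~\ref{thm:unweighted} and Theorem~\ref{thm:link-arrival-alg} comes from). Weighted $k$-CAP is \NP-hard, so no polynomial-time exact offline algorithm is available; composing your sketches with the $2$-approximate directed reduction would give a final ratio of $2(2t-1+\eps)$, not $(2t-1+\eps)$. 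The paper sidesteps this cleanly by invoking a brute-force exponential-time exact solver on the $O(nk+\eps^{-1}n^{1+1/t}\log n)$-edge instance $(E',L')$ --- streaming guarantees concern space, not time, so this is legitimate and preserves the stated ratio. If you replace your Gabow-based final step with brute-force exact solving (which you do allow as one of your listed options), your proof is correct and coincides with the paper's. The detour through the cactus representation of $H$ is also unnecessary --- one can run the exact solver directly on $(E',L')$ --- but it is not wrong.
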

\begin{proof}
    Let $V$ denote the vertex set. We receive edges from $E$ and weighted links from $L$ in an arbitrary order, and we deal with $E$ and $L$ with two separate streaming algorithms: we build a $k$-connectivity certificate $E'\subseteq E$ of the graph $(V,E)$ using Lemma~\ref{lem:kcert} in $O(nk)$ words of space, and we build a $(2t-1+\eps)$-spanner $L'\subseteq L$ of the graph $(V,L)$ in $O(\eps^{-1}n^{1+1/t}\log n)$ words of space (Theorem~\ref{thm:spanner-alg}). 
    Finally, we use an exact  algorithm (by brute force in exponential time)     to solve the $k$-connectivity augmentation problem on instance $(E',L')$. Clearly, the space complexity of this algorithm is $O(nk + \eps^{-1}n^{1+1/t}\log n)$ words. It remains to prove the correctness of the algorithm:
    \paragraph{Feasibility.} Since $E$ is $(k-1)$-edge-connected, its $k$-certificate  $E'\subseteq E$ is also $(k-1)$-edge-connected. Given a feasible solution $L''\subseteq L'$ of the $k$-CAP instance $(E',L')$, we know $L''$ covers all $(k-1)$-cuts of $E'$, which are precisely all $(k-1)$-cuts of $E$ by definition of $k$-certificate $E'$, so $L''$ is a feasible solution for the input $k$-CAP instance $(E,L)$ as well.
    \paragraph{Approximation.} Given an optimal solution $L''\subseteq L$ to the input instance $(E,L)$, we replace every link $(u,v)$ in $L''$ by a path from $u$ to $v$ of length $\le (2t-1+\eps)w_{(u,v)}$ using links from the spanner $L'$. Let the union of this replacing links be $\tilde L''\subseteq L'$. Then it is clear that $\tilde L''$ is a feasible solution for $k$-CAP with total weight $w(\tilde L'')\le (2t-1+\eps)w(L'')$ as it covers all the $(k-1)$-cuts.
\end{proof}

\section{Streaming Algorithm for Spanners on Weighted Graphs}\label{sec:spanner}
In this section, we prove the following theorem on computing spanners for weighted graphs in the streaming model.
\begin{theorem}\label{thm:spanner-alg}
For any integer $t\ge 1$, there is a one-pass streaming algorithm for computing a $(2t-1+\eps)$-spanner of size $O(\eps^{-1}n^{1+1/t}\log n)$ of a weighted graph,  with space complexity $O(\eps^{-1}n^{1+1/t}\log n)$ words.
\end{theorem}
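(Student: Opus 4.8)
The plan is to carry out the even--odd bucketing sketched in the introduction. Two preliminary reductions: if $\eps<1/n$, store the cheapest edge between every pair of vertices ($O(n^2)=O(\eps^{-1}n^{1+1/t}\log n)$ words, which is exact), so assume $\eps\ge 1/n$; and if $t>\log_2 n$, run the algorithm with $t':=\lceil\log_2 n\rceil$ instead (a $(2t'-1+\eps)$-spanner is a $(2t-1+\eps)$-spanner and $n^{1+1/t'}=O(n)=O(n^{1+1/t})$), so assume $t\le\log_2 n$. Handle zero-weight edges by a single one-pass spanning forest $Z$ ($\le n-1$ edges, via union--find) and contract its components once and for all; this only replaces $n$ by the number $n'\le n$ of those components. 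Fix $R:=\lceil n^3/\eps\rceil$ and weight buckets $I_k:=[R^k,R^{k+1})$, letting $E_k$ be the stream edges of weight in $I_k$; $R$ is chosen large enough that any $\le n$ stored edges of weight $<R^{k-1}$ have total weight $<\eps\,w(e)/n^2$ whenever $w(e)\in I_k$. Run two independent sub-algorithms, one over the \emph{even} buckets $E_0,E_2,\dots$ and one over the \emph{odd} buckets $E_1,E_3,\dots$; the output spanner is $Z$ together with the edges kept by the two sub-algorithms.

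I describe the even sub-algorithm (odd is identical). It maintains kept sets $\hat E_{2i}\subseteq E_{2i}$; let $\mathcal{C}_{2i}$ be the partition of the (contracted) vertex set into connected components of $\bigcup_{l\ge 1}\hat E_{2i-2l}$, recomputed on demand by a union--find pass over stored edges in $O(n)$ reused space. Subdivide each $I_{2i}$ into $O(\eps^{-1}\log R)=O(\eps^{-1}\log n)$ sub-buckets $J$ of multiplicative width $1+\eps$. When $e=(a,b)\in E_{2i}\cap J$ arrives, let $A,B$ be the $\mathcal{C}_{2i}$-components of $a,b$: if $A=B$, discard $e$; otherwise view $(A,B)$ as a super-edge and apply the ordinary one-pass greedy spanner within sub-bucket $J$ on the contracted multigraph, i.e.\ keep $e$ iff the currently kept bucket-$2i$ super-edges in $J$ do not connect $A$ and $B$ within $2t-1$ hops. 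Whenever a newly kept bucket-$2i$ edge merges components of a later partition $\mathcal{C}_{2i+2},\mathcal{C}_{2i+4},\dots$, identify the corresponding super-nodes in the affected sub-bucket spanners, delete resulting self-loops, and re-prune each affected sub-bucket spanner to minimality (delete any edge whose endpoints stay $(2t-1)$-connected without it). The purpose of this upkeep is to preserve the invariant that, after each stream step and for every $i$ and $J$, the stored bucket-$2i$ super-edges in $J$ form a \emph{valid greedy $(2t-1)$-spanner of the current contracted sub-bucket multigraph} (every present super-edge is $(2t-1)$-hop connected), so in particular they have girth $>2t$.

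\textbf{Size and stretch.} For the size, fix the final state: for each $i$ and $J$ the stored bucket-$2i$ super-edges have girth $>2t$, hence number $O(r_{2i}^{1+1/t})$ by the standard Moore/Bondy--Simonovits sparsity bound, where $r_{2i}$ counts $\mathcal{C}_{2i}$-components incident to a kept bucket-$2i$ edge; summing over sub-buckets, $|\hat E_{2i}|=O(\eps^{-1}r_{2i}^{1+1/t}\log n)$. Each component counted by $r_{2i}$ is joined to another inside $\mathcal{C}_{2i+2}$, so the ``merge graph'' on these $r_{2i}$ components has minimum degree $\ge 1$, hence $\le r_{2i}/2$ connected components, whence $|\mathcal{C}_{2i+2}|\le|\mathcal{C}_{2i}|-r_{2i}/2$; telescoping gives $\sum_i r_{2i}\le 2(|\mathcal{C}_0|-1)\le 2n$, and likewise for the odd sub-algorithm. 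Since $x\mapsto x^{1+1/t}$ is superadditive, $\sum_i r_{2i}^{1+1/t}\le\big(\sum_i r_{2i}\big)^{1+1/t}\le(2n)^{1+1/t}$, so the total number of kept edges, hence the space, is $O(\eps^{-1}\log n)(2n)^{1+1/t}+|Z|=O(\eps^{-1}n^{1+1/t}\log n)$ words (the same argument applied to the current partitions bounds the space throughout). For the stretch, take $e=(a,b)\in E_{2i}\cap J$ (odd/zero cases analogous): if $e$ was discarded with $a,b$ in a common final-$\mathcal{C}_{2i}$ component $C$, then $C$ is connected by $\le n-1$ stored edges of weight $<R^{2i-1}$, giving an $a$--$b$ path in the spanner of weight $<nR^{2i-1}\le\eps\,w(e)$; otherwise the invariant yields super-edges $A=P_0,\dots,P_\ell=B$ with $\ell\le 2t-1$, each realized by a stored edge of weight $<(1+\eps)w(e)$, and splicing them via short detours inside each $P_j$ ($\le n-1$ stored edges of weight $<R^{2i-1}$, contributing $<\eps w(e)/n^2$ each, hence $<\eps w(e)/n$ over the $\le 2t\le n$ components) gives an $a$--$b$ path of weight $<(2t-1)(1+\eps)w(e)+\eps w(e)/n\le(2t-1+\eps)w(e)$ after rescaling $\eps$ by a constant.

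\textbf{Main obstacle.} The delicate part is the incremental upkeep under merges: showing that identifying super-nodes and re-pruning preserves, at every stream step and for every bucket/sub-bucket, the invariant ``every current super-edge is $(2t-1)$-hop connected'' \emph{and} girth $>2t$ --- which is what makes the size bound valid throughout and exploits that the partitions $\mathcal{C}_{2i}$ only coarsen over time, so the one-pass implementation never needs to ``un-merge''. The remaining ingredients --- the telescoping charge $\sum_i r_{2i}=O(n)$, the superadditivity estimate, and the arithmetic of the detours --- are routine once $R$ is fixed to be a sufficiently large polynomial in $n/\eps$.
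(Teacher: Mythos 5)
Your proposal follows essentially the same blueprint as the paper's proof: even--odd bucketing into polynomially-wide weight classes, contracting lower-weight components into super-nodes, running a per-$(1+\eps)$-sub-bucket greedy $(2t-1)$-hop spanner on the contracted multigraph, bounding each sub-bucket by the Moore bound, and charging the number of active super-nodes to the decrease in component count across buckets (the paper bounds $\sum_k d_k^{1+1/t}\le n^{1/t}\sum_k d_k$ where you use superadditivity; both are fine). The one genuine difference, and the gap, is exactly what you flag as the ``main obstacle'' and then leave unresolved: the incremental upkeep after merges.

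Your re-pruning rule --- delete any stored super-edge $e'$ whose endpoints remain $(2t-1)$-hop connected without it --- does restore girth $>2t$ for the Moore bound, but it does \emph{not} preserve the invariant you rely on for the stretch bound. When a new edge $e$ is discarded because its super-endpoints had a $(2t-1)$-hop stored path $P$, a later re-prune may delete some $e'$ on $P$; although $e'$'s own endpoints are still $(2t-1)$-hop connected, splicing that $(2t-1)$-hop replacement into $P$ can push the $e$-path well past $2t-1$ hops. So ``every observed super-edge (including discarded ones) has a $(2t-1)$-hop stored path'' is not maintained by minimality pruning, and the final-state stretch argument you write (``the invariant yields super-edges $A=P_0,\dots,P_\ell=B$ with $\ell\le 2t-1$'') has no support. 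The paper's \textsc{Sparsify} avoids this by being strictly more conservative: it only deletes self-loops and the longer of two parallel super-edges. Every deleted edge therefore has a replacement that is \emph{cheaper in weight and no longer in hop count} (a zero-cost detour inside a lower-weight component, or the surviving shorter parallel edge), so replacement chains can be substituted in repeatedly without ever increasing the hop budget, and the weighted stretch telescopes cleanly. To close your gap you would either need to adopt this conservative deletion rule (and argue the Moore bound separately through the ``no self-loops, no parallel edges'' simple-graph structure as the paper does), or supply a direct argument that minimality pruning leaves a valid spanner of all edges seen so far --- which, as written, it does not.
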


Let $G=(V, E)$ be a weighted graph. We denote the weight function by $w:E\to \mathbb{R}^+$. Moreover, We normalize the weights so that $w(e)\in \{0\}\cup [1,W]$.
For each $j \in [0, \ceil{\log_{1+\eps} W}]$, our algorithm stores $E_j$, a subset of edges of $G$ that have weights in $[(1+\eps)^j, (1+\eps)^{j+1})$.
(These intervals do not contain zero, so we separately use a zero-weight class $E_{-1}$ to hold edges of zero weight.  But for notational simplicity, we will not specially mention this zero weight class in later description. One can check that this does not affect the correctness of the algorithm.)

Our algorithm is as follows (see Algorithm~\ref{alg:1}). As an edge $e$ arrives, round its weight to the nearest power of $(1+\eps)$ and place it in the corresponding weight class $E_j$.
As usual, we keep the edge $e$ iff it does not close a cycle of length at most $2t$ in $E_j$, for some given parameter $t$.
After processing the edge, we run the~\textsc{Sparsify} subroutine described below in Algorithm~\ref{alg:sparsify}.

\paragraph{\textsc{Sparsify} subroutine.} Let $C>0$ be a sufficiently large constant. Define intervals $I_k=[k\cdot (C/\eps)\log n, (k+1)\cdot (C/\eps)\log n]$. For all $k$ let $\tilde E_k:=\bigcup_{j\in I_k} E_j$. For each $k$, let 
\begin{align*}
    E^{\even}_{\leq k}=\bigcup_{j=-\infty}^{k} \tilde E_{2j} \quad\text{ and }\quad    E^{\odd}_{\leq k}=\bigcup_{j=-\infty}^{k} \tilde E_{2j+1}.
\end{align*}

Let $E^{\even}=\bigcup_{j} \tilde E_{2j}$, and we define $E^{\odd}$ similarly.
Our \textsc{Sparsify} procedure operates independently on these two sets. We will ensure that each set contains $O(\eps^{-1}n^{1+1/t}\log n)$ edges, independent of the weight bound $W$.

We now describe how \textsc{Sparsify} operates on $E^{\even}$ (the operations are the same for $E^{\odd}$).

\begin{claim}
Let the constant $C$ in the definition of the sets $\tilde E_k$ be chosen sufficiently large. Let $k$ be an integer. Let $H=(V, E^{\even}_{\leq k-1})$. Then for any edge $e=(u, v)\in \tilde E_{2k}$ such that $u$ and $v$ belong to the same connected component in $H$, one has $w_{e}\geq \text{dist}_{H}(u, v)$.
\end{claim}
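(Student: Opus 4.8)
The plan is to leverage the deliberately large multiplicative gap between the weight of $e$ and the weights of all edges present in $H=(V, E^{\even}_{\leq k-1})$; the bucket definition was engineered precisely so that this gap dominates the number of vertices. First I would pin down the relevant weight scales. By definition $E^{\even}_{\leq k-1}=\bigcup_{j\leq k-1}\tilde E_{2j}$, so every edge of $H$ lies in some weight class $E_i$ with index $i\leq \max I_{2k-2}=(2k-1)(C/\eps)\log n$ (this inequality holds even if the bucket endpoints are not integers); hence every edge $f$ of $H$ has $w_f < (1+\eps)^{(2k-1)(C/\eps)\log n+1}$. On the other hand, $e\in\tilde E_{2k}$ lies in some $E_i$ with $i\geq \min I_{2k}=2k(C/\eps)\log n$, so $w_{e}\geq (1+\eps)^{2k(C/\eps)\log n}$.

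Next I would bound $\text{dist}_{H}(u, v)$. Since $u$ and $v$ belong to the same connected component of $H$, a shortest $u$--$v$ path in $H$ exists and may be taken to be simple, so it uses at most $n-1$ edges; combined with the per-edge weight bound above this yields
\[
  \text{dist}_{H}(u,v)\ \leq\ (n-1)\cdot (1+\eps)^{(2k-1)(C/\eps)\log n+1}.
\]
Dividing the two estimates,
\[
  \frac{w_{e}}{\text{dist}_{H}(u,v)}\ \geq\ \frac{(1+\eps)^{2k(C/\eps)\log n}}{(n-1)\,(1+\eps)^{(2k-1)(C/\eps)\log n+1}}\ =\ \frac{(1+\eps)^{(C/\eps)\log n-1}}{n-1},
\]
so it remains only to verify that $(1+\eps)^{(C/\eps)\log n-1}\geq n-1$.

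Finally I would discharge this last inequality: assuming $\eps\leq 1$ without loss of generality, we have $(1+\eps)^{1/\eps}\geq 2$, so the left-hand side is at least $2^{C\log n}/(1+\eps)\geq 2^{C\log n-1}$, which exceeds $n-1$ as soon as $C$ is a sufficiently large absolute constant --- this is exactly where the hypothesis that $C$ is large is used. I do not expect a genuine obstacle here; the statement is essentially bookkeeping over the bucket index ranges. The only points requiring a little care are (i) invoking a simple shortest path so that the factor $n-1$ is legitimate, and (ii) correctly tracking the $+1$ in the weight-class exponent together with the unit-width index gap between the top of $I_{2k-2}$ and the bottom of $I_{2k}$ (the odd bucket $\tilde E_{2k-1}$ sits in between but does not belong to $E^{\even}$); both are absorbed by enlarging $C$.
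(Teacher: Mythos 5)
Your proof is correct and takes essentially the same approach as the paper: compare the minimum weight scale of $\tilde E_{2k}$ with the maximum weight scale of $E^{\even}_{\leq k-1}$, multiply the latter by the $\le n-1$ edges of a simple shortest path, and observe that the exponential gap $(1+\eps)^{(C/\eps)\log n}=n^{\Omega(C)}$ dominates for $C$ large enough. Your bookkeeping is marginally more careful than the paper's (tracking the $+1$ in the exponent, since $E_i$ contains weights up to $(1+\eps)^{i+1}$), but as you note this is absorbed by enlarging $C$.
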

\begin{proof}
Since $w_e\in \tilde E_{2k}$, one has $w_e\geq (1+\eps)^{2k (C/\eps)\log n}$ by definition of $\tilde E_{2k}$.

On the other hand, the longest 
edge in $E^{\even}_{\leq k-1}$ has length at most  $(1+\eps)^{(2(k-1)+1) (C/\eps)\log n}$ by our definitions.
Thus, the ratio of length of shortest edge in $\tilde E_{2k}$ and length of longest edge in $E^{\even}_{\leq k-1}$ is at least $(1+\eps)^{(C/\eps)\log n}=n^{\Omega(C)}>n$ for sufficiently large constant $C$. The shortest path in $H$ has at most $n$ edges, so the claim follows.
\end{proof}

Our procedure \textsc{Sparsify}$(k)$ performs the following step for each $k$ from $k_{\max}$ down to $k_{\min}$. Collapse the connected components induced by $E^{\even}_{\leq k-1}$ into supernodes, and consider the multigraph with edges $\tilde E_{2k}$ on this set of supernodes. We convert this multigraph into a simple graph in the following natural way. For each edge $e=(u, v)\in \tilde E_{2k}$, 
\begin{itemize}
\item delete $e$ if it is a self loop in this graph (i.e. $u, v$ belong to the same connected component) 
\item delete $e$ if there is a shorter edge that is parallel to $e$.
\end{itemize}
This is summarized in Algorithm~\ref{alg:sparsify}.
\begin{algorithm}[H]
\caption{\textsc{Sparsify}}
\label{alg:sparsify} 
\begin{algorithmic}[1] 
\Procedure{Sparsify}{}
    \For{$k=k_{\max}$ down to $k_{\min}$}
        \State  Let $H=(V, E^{\even}_{\leq k-1})$
        \State  Let $C_1,\ldots, C_r$ be the connected components of $H$.
    
        \For {$e=(u, v)\in \tilde E_{2k}$ }
            \If{$u, v\in C_i$ for some $i$}
                \State delete $e$ from $\tilde E_{2k}$
            \EndIf
            \If{$\exists(u', v')\in \tilde E_{2k}$ s.t.\ $w_{(u',v')}\le w_{e} $, and $u, u'\in C_i$, $v, v'\in C_j$ for some $i$, $j$}
                \State delete $e$ from $\tilde E_{2k}$
            \EndIf
        \EndFor
    \EndFor
    \State $\rhd$ \emph{The same procedure for the set $E^{\odd}$}
\EndProcedure 
\end{algorithmic}
\end{algorithm}
The algorithm is summarized in Algorithm~\ref{alg:1}:

\begin{algorithm}[H]
\caption{Overall algorithm}
\label{alg:1} 
\begin{algorithmic}[1] 
\Procedure{Spanner}{}
\For{each edge $e=(u, v)$ in the stream}
\State Round weight of $e$ to power of $1+\eps$. Let $j$ be the weight class of $e$.
\State Add $e$ to $E_j$ iff $\text{dist}_{E_j}(u, v)>(2t-1)\cdot w_e$.\label{line:4}
\State Call \textsc{Sparsify}
\EndFor
\EndProcedure 
\end{algorithmic}
\label{algo:spanner}
\end{algorithm}

\begin{lemma}
\label{lem:spannerstretch}
The edges stored by Algorithm~\ref{alg:1} form a $(2t-1)\cdot (1+\eps)$-spanner of $G$.
\end{lemma}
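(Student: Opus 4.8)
The plan is to show that for every edge $e=(u,v)\in E$, the stored subgraph $H^*$ contains a $uv$-path of weight at most $(2t-1)(1+\eps)w_e$. First I would reduce to the case that $e$ is not stored: if $e\in H^*$ there is nothing to prove. So suppose $e$ arrived, belongs to weight class $E_j$ with $(1+\eps)^j\le w_e<(1+\eps)^{j+1}$, and was either (a) dropped at Line~\ref{line:4} because $\mathrm{dist}_{E_j}(u,v)\le (2t-1)w_e$ at the time of arrival, or (b) added to $E_j$ but later deleted by \textsc{Sparsify}. I would handle these two deletion reasons, and then argue that distances only get distorted by a controlled amount afterwards.

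The core claim is a distance-preservation statement: at the end of the stream, $\mathrm{dist}_{H^*}(u,v)\le (1+\eps)\cdot d$, where $d$ is the weight of whatever short $uv$-path or parallel edge witnessed the deletion of $e$, with all weights measured against the rounded class representatives. For case (a): when $e$ was dropped there was a path $P$ in $E_j$ of total weight $\le (2t-1)w_e$, and each edge of $P$ lies in the same weight class $E_j$, hence has weight within a $(1+\eps)$ factor of $(1+\eps)^j$. I would then argue inductively (in order of increasing weight class, or by an appropriate potential) that each edge of $P$ that is later removed is itself spanned within its class up to factor $(1+\eps)$, so $\mathrm{dist}_{H^*}(u,v)\le (1+\eps)(2t-1)w_e$; the single-class structure is what prevents the $(1+\eps)$ losses from compounding more than once, since a replacement path for a class-$j$ edge uses only class-$j$ edges. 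For case (b): when \textsc{Sparsify}$(k)$ deleted $e\in \tilde E_{2k}$, either $u,v$ were already in the same component $C_i$ of $H=(V,E^{\even}_{\le k-1})$, in which case $\mathrm{dist}_H(u,v)\le w_e$ by the Claim proved in the excerpt, and $H\subseteq H^*$ up to the same within-class replacement; or there was a parallel edge $(u',v')$ with $w_{(u',v')}\le w_e$, $u,u'\in C_i$, $v,v'\in C_j$, so a $uv$-path can be routed as (path inside $C_i$) $+$ edge $(u',v')$ $+$ (path inside $C_j$), and the two ``inside-component'' pieces live in $E^{\even}_{\le k-1}$, whose total weight is negligible compared to $w_e$ by the same $n^{\Omega(C)}>n$ geometric-gap argument used in the Claim (the whole of $E^{\even}_{\le k-1}$ has at most $n-1+O(\cdots)$ relevant edges after collapsing, each of weight at most $w_e/n$, say, so their contribution is swallowed into the $\eps$ slack). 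One subtlety: \textsc{Sparsify} processes $k$ from $k_{\max}$ down to $k_{\min}$ and is re-invoked after each arrival, so I must make sure the invariant ``$E^{\even}_{\le k-1}$ has the same connected components as the full even-graph up to level $k-1$'' is maintained; this follows because the only edges deleted at level $2k$ are self-loops and dominated parallel edges relative to the level-$\le k-1$ components, neither of which changes connectivity at any level.

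Finally I would assemble these: walking along any short witness path and replacing each of its (possibly deleted) edges by its in-class detour, and replacing each collapsed-component traversal by a path in the retained lower-weight forest, yields a $uv$-walk in $H^*$ of weight at most $(2t-1)(1+\eps)w_e$ (after absorbing the lower-weight contributions and the rounding into the $(1+\eps)$ factor, possibly after rescaling $\eps$ by a constant). The main obstacle I anticipate is bounding the \emph{compounding} of the $(1+\eps)$ stretch: a naive induction over weight classes could blow up to $(1+\eps)^{\Theta(\log W)}$. The fix — and the key structural point of the even–odd bucketing — is that the spanner retained inside a single class $E_j$ is built by the within-class girth rule (Line~\ref{line:4}), so a dropped class-$j$ edge is spanned by class-$j$ edges that are actually \emph{kept}, incurring exactly one $(1+\eps)$ loss from rounding and no recursion into other classes; the cross-class detours introduced by \textsc{Sparsify} only ever go \emph{downward} in weight by a factor $>n$, so their cumulative weight is a $1/n$-fraction and contributes only an additive $\eps\cdot w_e$. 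Getting this bookkeeping airtight — in particular choosing $C$ large enough and tracking that ``kept'' class-$j$ edges are not subsequently removed by \textsc{Sparsify} either (they are not, since \textsc{Sparsify} only touches the $\tilde E_{2k}$/$\tilde E_{2k+1}$ aggregates via component-collapsing, and a non-self-loop non-dominated edge survives) — is the technical heart of the argument.
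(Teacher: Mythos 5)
Your overall decomposition is the same as the paper's: reduce to showing a short replacement path in the final stored graph $H^*$ for every non-stored edge $e=(u,v)$, handle the two deletion reasons (the Line~\ref{line:4} drop and a \textsc{Sparsify} removal) separately, and then argue the $(1+\eps)$ losses do not compound. The paper's own proof is very short and does exactly this, simply noting that \textsc{Sparsify} only deletes an edge when there is a shorter substitute path in lower-weight classes.

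However, your argument rests on a specific claim that is false: you assert that class-$j$ edges that survive Line~\ref{line:4} ``are not subsequently removed by \textsc{Sparsify} either\dots since \textsc{Sparsify} only touches the $\tilde E_{2k}$/$\tilde E_{2k+1}$ aggregates via component-collapsing, and a non-self-loop non-dominated edge survives.'' Being a self-loop or a dominated parallel edge is a time-dependent property: it is evaluated against the \emph{current} connected components of $E^{\even}_{\leq k-1}$ (resp.\ $E^{\odd}_{\leq k-1}$) and against the parallel edges \emph{currently} stored, and \textsc{Sparsify} is re-invoked after every arrival. As more edges arrive and lower-level components merge, an edge of your witness path $P\subseteq E_j$ that was not a self-loop when $e$ was dropped can later \emph{become} a self-loop (or become dominated by a newly arrived lighter parallel edge) and be deleted. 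So the in-class witness path is not preserved verbatim, which undermines the ``bookkeeping'' you identify as the technical heart. (Relatedly, a deleted edge is not ``spanned within its class'': the certifying substitute goes through strictly lower-weight blocks, possibly plus a lighter parallel edge from the same block.)

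The claim is not needed. What one should argue instead — and what the paper's one-sentence remark is gesturing at — is that whenever \textsc{Sparsify} deletes an edge $e'\in\tilde E_{2k}$, the deletion is certified \emph{at that moment} by a substitute path of total weight at most $w_{e'}$ (case~(i), via the Claim, since the detour through $E^{\even}_{\le k-1}$ is only a $1/n^{\Omega(C)}$ fraction of $w_{e'}$) or at most $w_{e'}$ plus such a negligible detour (case~(ii), the lighter parallel edge plus two component detours). Iterating these replacements terminates by strict weight descent, and because each step either drops to a strictly lower weight block or replaces an edge by a strictly lighter parallel edge, the detour contributions are geometrically dominated and sum to an additive $O(\eps)\cdot w_e$. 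This gives $\mathrm{dist}_{H^*}(u,v)\le (2t-1)(1+O(\eps))w_e$ without any claim that stored class-$j$ edges survive to the end. Your proposal has the right picture (the $(2t-1)$ comes once from the within-class girth rule, the $(1+\eps)$ from rounding, and an additive $\eps$ from all cross-class detours), but the justification you offer for non-compounding needs to be replaced by the descent argument above.
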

\begin{proof}
It suffices to show that for each $e=(u,v)$ of $G$ that is not stored by Algorithm~\ref{alg:1}, there is a path  of length at most $(2t-1)\cdot (1+\eps)\cdot w_e$ connecting $u,v$ using the stored edges.

Let's consider when an edge $(u, v)$ is not included in the graph. The edge may be ignored upon arrival if its endpoints are connected by a path of length at most $(2t-1)\cdot w_e$ in the graph $E_j$ (at Line~\ref{line:4} in Algorithm~\ref{algo:spanner}), where $j$ is the weight class of $e$.  So in this case we have a short path to substitute for that edge (where we lose a $(1+\eps)$ factor due to rounding). The worry is that deleting edges in the call to \textsc{Sparsify} may break this argument, but it does not: we only delete an edge $e'\in E_{j'}$ in \textsc{Sparsify} if we have an even shorter path in $\bigcup_{i<j'} E_i$ to connect its endpoints. 
\end{proof}

\begin{lemma}
\label{lem:spannerspacebound}
Throughout the algorithm, the total number of edges stored by Algorithm~\ref{alg:1} is always at most  
$O(\eps^{-1}n^{1+1/t}\log n)$.
\end{lemma}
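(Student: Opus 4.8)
The plan is to bound the stored edges separately for $E^{\even}$ and $E^{\odd}$, since \textsc{Sparsify} treats them independently; by symmetry it suffices to bound $|E^{\even}|$. The key structural observation is that the \textsc{Sparsify} procedure, after processing buckets $\tilde E_{2k}$ from largest to smallest, leaves $E^{\even}$ in the following shape: if we collapse the connected components induced by $E^{\even}_{\le k-1}$ into supernodes, then the edges of $\tilde E_{2k}$ that survive form a \emph{simple} graph on these supernodes (no self-loops, no parallel edges, and only the cheapest representative among parallel candidates is kept). Let $m_k$ denote the number of supernodes obtained from $E^{\even}_{\le k-1}$, i.e., the number of connected components of $(V, E^{\even}_{\le k-1})$. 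Since contracting the edges of $\tilde E_{2k}$ can only merge components, we have $m_{k} \ge m_{k+1}$, $m_{k_{\min}} \le n$, and the sequence telescopes. The first step of the proof is to make this precise: the number of surviving edges in $\tilde E_{2k}$ is at most the number of edges in a simple graph on $m_k$ vertices that additionally survived the girth test at Line~\ref{line:4}.

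The second step is the girth argument. Each surviving edge $e \in \tilde E_{2k}$ (viewed within its original weight class $E_j$, $j \in I_{2k}$) was kept only because its endpoints had $\text{dist}_{E_j}(u,v) > (2t-1)w_e$ at the time of arrival, so $E_j$ — and hence the contracted graph on the $m_k$ supernodes restricted to bucket $\tilde E_{2k}$ — contains no cycle of length at most $2t$ through $e$. More carefully, by the Claim proved above (with $C$ large), any edge of $\tilde E_{2k}$ whose endpoints lie in the same component of $H = (V, E^{\even}_{\le k-1})$ has been deleted, and the surviving edges within $\tilde E_{2k}$ on the supernode graph have high girth (girth $> 2t$), because a short cycle among them would correspond to a short cycle inside a single weight class $E_j$ — here we use that a fixed window $I_{2k}$ of $(C/\eps)\log n$ consecutive weight classes all lie within a factor $(1+\eps)^{(C/\eps)\log n} = n^{\Theta(C)}$ of each other, so the Line~\ref{line:4} girth test, which operated per-class $E_j$, effectively forbids short cycles across the whole bucket up to a $(1+\eps)$-factor distortion that can be absorbed. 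Thus the supernode graph restricted to $\tilde E_{2k}$ is a graph on $m_k$ vertices with girth $> 2t$, which by the standard Moore-bound argument has at most $O(m_k^{1+1/t})$ edges; summing over all the $O(\eps^{-1}\log n)$ nonempty buckets $k$ would already give $O(\eps^{-1} n^{1+1/t}\log n)$, but one can do better via the telescoping.

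The third step is the telescoping charge to reduce the $\log W$-many buckets to a clean $O(\eps^{-1}\log n)$ factor (recovering the $\log n$ rather than $\log W$ in the statement): Instead of bounding each bucket by $O(m_k^{1+1/t})$ independently, we charge the $O(m_k^{1+1/t})$ edges of bucket $\tilde E_{2k}$ to the decrease $m_k - m_{k+1}$ together with the $(C/\eps)\log n$ weight classes it spans. Within a single bucket $\tilde E_{2k}$, it is the girth test in each individual class $E_j$ (not the contraction) that caps the count: $E_j$ alone is a high-girth graph on at most $n$ vertices, so $|E_j| = O(n^{1+1/t})$, and $\tilde E_{2k}$ has $O(\eps^{-1}\log n)$ classes, giving $O(\eps^{-1} n^{1+1/t}\log n)$ for one bucket — but the supernode count across \emph{distinct} buckets telescopes, so the cheaper buckets contribute geometrically less. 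Combining: $|E^{\even}| \le \sum_k |\tilde E_{2k}| \le O(\eps^{-1}\log n)\sum_k O(m_k^{1+1/t} - m_{k+1}^{1+1/t})\cdot(\text{adjustment}) = O(\eps^{-1} n^{1+1/t}\log n)$, and the same for $E^{\odd}$.

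I expect the \textbf{main obstacle} to be making the girth argument tight across a whole bucket $\tilde E_{2k}$ on the contracted supernode graph, rather than merely per weight class $E_j$: the Line~\ref{line:4} test only forbids short cycles \emph{within} a single class, and after contraction an apparent short cycle on supernodes might mix edges from several classes inside $I_{2k}$; one must argue that since all such classes are within a $(1+\eps)^{(C/\eps)\log n}$ factor and a cycle has at most $n$ edges, any such configuration either was already killed by the Claim (endpoints in the same $E^{\even}_{\le k-1}$-component) or yields an actual short cycle in one class after accounting for the $(1+\eps)$-distortion — and then invoke the Moore bound. Handling this interaction carefully, and confirming the telescoping bookkeeping gives $\log n$ and not $\log W$, is where the real work lies.
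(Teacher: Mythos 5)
There is a genuine gap. Your proposal tries to apply the Moore bound either to the whole bucket $\tilde E_{2k}$ on the $m_k$ supernodes (claiming it has girth $>2t$), or else to each individual class $E_j$ on all $n$ vertices (claiming $|E_j|=O(n^{1+1/t})$). Both versions fail. The first is not just unproved but actually false in spirit: the weight classes inside a single bucket $I_{2k}$ span a multiplicative factor of $(1+\eps)^{(C/\eps)\log n}=n^{\Theta(C)}$, not $(1+\eps)$, so your ``absorb the distortion'' argument does not apply --- a short cycle on supernodes that mixes a cheap class and an expensive class inside the same bucket is in no way forbidden by the Line~4 girth test, which operates per class. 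The second version ($|E_j|=O(n^{1+1/t})$) is correct but far too weak: multiplying by the $O(\eps^{-1}\log n)$ classes per bucket and then summing over the $\Theta(\eps\log W/\log n)$ buckets recovers $O(n^{1+1/t}\log W)$, exactly the dependence on $W$ the lemma is designed to eliminate. Your ``telescoping'' step is where you try to close this, but it leaves an unspecified ``(adjustment)'' factor precisely because the per-bucket count you have is not in a form that telescopes.

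The missing ingredient is the quantity the paper denotes $d_k$: the number of supernodes (components of $E^{\even}_{\le k-1}$) that have nonzero degree in at least one class $E_j$ with $j\in I_{2k}$. The Moore bound should be invoked per class $E_j$ on the supernode graph, giving $|E_j|=O(d_k^{1+1/t})$ --- the exponent-$1+1/t$ growth should be charged to the \emph{active} supernodes, not to $n$ or $m_k$. Two further steps make this telescope: first, the combinatorial lemma $d_k \le 2(c_k - c_{k+1})$, proved by taking a spanning forest of $\tilde E_{2k}$ on the supernodes and charging each active supernode to an incident forest edge (each edge charged at most twice); second, the elementary inequality $d_k^{1+1/t}\le n^{1/t}\cdot d_k$, after which $\sum_k d_k \le 2\sum_k (c_k-c_{k+1}) \le 2n$ gives the bound $O(\eps^{-1}n^{1+1/t}\log n)$. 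Without $d_k$ and the spanning-forest charge, your bookkeeping cannot be made to close: both the girth claim across a bucket and the telescoping in terms of $m_k^{1+1/t}$ alone would need these facts to be made rigorous.
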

\begin{proof}
We analyze the even case, and the odd case can be analyzed similarly.

For each $k$, let $c_k$ denote the number of connected components in $E^{\even}_{\leq k-1}$.
For each $j\in I_{2k}$, we view $E_j$ as a graph on the supernodes formed by the $c_k$ connected components of $E^{\even}_{\leq k-1}$. 
Then, the \textsc{Sparsify} function maintains the property that each graph $E_j$ forbids cycles of at most $2t$ edges.
Let $0\le d_k \le c_k$ denote the number of supernodes that have non-zero degree in at least one of these graphs $E_j$. Then, the number of edges in each $E_j$ is at most $O(d_k^{1+1/t})$ edges by the Moore bound (see e.g., \citep{DBLP:journals/dcg/AlthoferDDJS93}). Summing up all $2(C/\eps)\log n$ many indices $j\in I_{2k}$, we know $\tilde E_{2k}$ has at most $O(\eps^{-1}d_k^{1+1/t}\log n)$ edges.

Note that $c_k$ is monotonically non-increasing in $k$, and we claim that $c_{k} - c_{k+1} \ge d_k/2$.
To show this claim, we take a spanning forest of the edges in $\tilde E_{2k}$, and suppose this spanning forest has $m'$ edges. Then $m' = c_{k}-c_{k+1}$. And $d_k$ equals the number of nodes connected to at least one of these $m'$ edges. If we charge each of these nodes to one of its incident edges, then each edge is charged at most twice, showing that $d_k\le 2m'$ as claimed.

This claim implies that $\sum_{k} d_k \le 2\sum_{k}(c_{k}-c_{k+1}) \le 2n$. Hence, the total number of stored edges is 
\[ \sum_{k}|\tilde E_{2k}|\le \sum_{k}O(\eps^{-1}d_k^{1+1/t}\log n)\le O(\eps^{-1}n^{1/t}\log n)\cdot \sum_{k}d_k \le O(\eps^{-1}n^{1+1/t}\log n).\]

Note that this lemma implies both the space complexity bound and the sparsity bound claimed and completes the proof of Theorem~\ref{thm:spanner-alg}. 
\end{proof}

\section{Further Applications of Streaming Algorithms for Connectivity Augmentation}
\label{sec:further}
In this section, we show applications of our streaming algorithms for $k$-CAP for following well-studied {\em network design} problems: STAP, SNDP and $k$-ECSS.

\subsection{Steiner Tree Augmentation Problem (STAP) in Streaming}
In STAP, we are given a set of vertices $V$ partitioned into {\em terminal} nodes $(R)$ and {\em Steiner} nodes $(V\setminus R)$, 
and a Steiner tree $T$ spanning the terminal set $R$. Then given a set of weighted links $L \subseteq \binom{V}{2}$, the goal is to find a minimum weight set of links $S \subseteq L$ such that $H = (V, E(T) \cup S)$ has $2$ edge-disjoint paths between any pair of terminals. The problem is a special case of SNDP and can be approximate within a factor of $2$ by iterative rounding method of~\citet{Jain01}. In light of recent developments for approximating tree augmentation and connectivity augmentation problems,~\citet{ravi2022new} provided a $(1.5+\eps)$-approximation for Steiner tree augmentation problem in polynomial time.

\paragraph{Algorithm in fully streaming setting.}
First, we observe that our results imply an algorithm for STAP in the fully streaming setting.

\begin{corollary}
\label{cor:steinerfully}
STAP in the fully streaming model can be solved by a single-pass streaming algorithm with approximation ratio $(2t-1+\eps)$ and space complexity $O(\eps^{-1}n^{1+1/t}\log n)$ words.
\end{corollary}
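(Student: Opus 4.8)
The plan is to reduce STAP to $k$-CAP (with $k=2$) in the fully streaming model and then invoke Theorem~\ref{thm:fullystreamingkconn}. The key observation is that a Steiner tree $T$ spanning the terminal set $R$, viewed as the base graph $E=E(T)$, is already $1$-edge-connected on the terminals, and augmenting connectivity between all pairs of terminals to $2$ is exactly the requirement that we must cover every cut of $T$ that separates two terminals. So STAP is a ``partial'' version of $2$-CAP where we only need to cover a subset of the min-cuts (those whose two sides each contain a terminal), rather than all of them.

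First I would argue that this restriction does not cause any difficulty. The cut-covering framework used for $k$-CAP works cut-by-cut: the $k$-connectivity certificate of Lemma~\ref{lem:kcert} preserves all small cuts, and in particular preserves exactly the information about which cuts of $E(T)$ separate two terminals (since whether the two sides each contain a terminal is determined by the cut itself, not by the certificate). Thus I would build a $2$-connectivity certificate $E'\subseteq E(T)$ of $(V,E(T))$ using $O(n)$ words via Lemma~\ref{lem:kcert}, and simultaneously build a $(2t-1+\eps)$-spanner $L'\subseteq L$ of $(V,L)$ using $O(\eps^{-1}n^{1+1/t}\log n)$ words via Theorem~\ref{thm:spanner-alg}. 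At the end, solve the STAP instance $(E',L')$ exactly by brute force. Since the relevant edge set here has only $O(n)$ edges (a Steiner tree has at most $n-1$ edges, so $k=2$ and the $nk$ term is just $O(n)$), the total space is $O(\eps^{-1}n^{1+1/t}\log n)$ words as claimed.

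The correctness argument mirrors the proof of Theorem~\ref{thm:fullystreamingkconn}. For feasibility: any solution $L''\subseteq L'$ that makes $(V, E'\cup L'')$ have $2$ edge-disjoint paths between all terminal pairs covers every cut of $E'$ separating two terminals; since $E'$ is a $2$-certificate of $E(T)$, these are exactly the cuts of $E(T)$ separating two terminals, so $L''$ is feasible for $(E(T),L)$. For approximation: given an optimal STAP solution $L^*\subseteq L$ for $(E(T),L)$, replace each link $(u,v)\in L^*$ by a $u$–$v$ path in the spanner $L'$ of weight at most $(2t-1+\eps)w_{(u,v)}$; the union $\tilde L^*\subseteq L'$ covers every cut that $L^*$ covered (replacing one crossing link by a path still crosses the cut an odd—hence nonzero—number of times), so $\tilde L^*$ is feasible for $(E',L')$ with weight at most $(2t-1+\eps)\,w(L^*)$. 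I do not anticipate a serious obstacle here; the only point requiring mild care is checking that the spanner-replacement preserves exactly the set of terminal-separating cuts that matter, which follows from the same parity argument as in the $k$-CAP case and the fact that the certificate construction is oblivious to which vertices are terminals.
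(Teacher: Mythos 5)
Your proof is correct and takes essentially the same route as the paper's: maintain a $2$-connectivity certificate of the base graph (trivially $O(n)$ words for a tree) and a $(2t-1+\eps)$-spanner of the links in parallel, then solve the reduced STAP instance offline, with feasibility and approximation carrying over verbatim from the argument of Theorem~\ref{thm:fullystreamingkconn}. Your extra observation that the certificate is oblivious to which vertices are terminals, and your ``odd crossing number'' justification for why spanner-path replacement still covers a cut, are just more explicit statements of what the paper leaves implicit.
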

\begin{proof}
It basically follows from the same proof of Theorem~\ref{thm:fullystreamingkconn}: we build a $2$-connectivity certificate $E'\subseteq E$, and a $(2t-1+\eps)$-spanner $L'\subseteq L$, and solve the STAP on $(E',L')$.
Since $E'$ preserves the $2$-cuts between every pair of vertices, the same argument for feasibility and approximation extends to the Steiner setting as well.
\end{proof}
Note that the same fully streaming algorithm from Corollary~\ref{cor:steinerfully} can also be used to solve STAP in the easier link arrival stream.

\paragraph{Lower bound in link arrival streams.}
Now we show that STAP has a lower bound nearly matching Corollary~\ref{cor:steinerfully} \emph{even in link arrival streams}. This shows a separation of STAP from the easier TAP: the latter problem has a better streaming algorithm in link arrival streams than in the fully streaming setting, whereas the former problem does not.

\begin{corollary}
For any constant integer $t\ge 1$, weighted STAP in link arrival streams requires space complexity $\Omega(n^{1+1/t})$ bits (assuming the Erd\H{o}s's girth conjecture) to approximate the solution cost to a factor better than $2t+1$.
\label{cor:lowerbound-steiner}
\end{corollary}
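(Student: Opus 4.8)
The plan is to reduce from the same INDEX-style construction used in the proof of Theorem~\ref{thm:lowerbound-t}, but embed the high-girth graph into a \emph{weighted} STAP instance whose base tree (Steiner tree) is entirely supplied by Alice-independent structure, so that all of Bob's queries can be answered using only link arrivals after Alice's message. Concretely, let $G$ be a fixed $n$-vertex graph with girth $>2t+1$ and $|E(G)|=\gamma(n,2t+1)$ edges. Alice holds a subgraph $G'\subseteq G$, encoded as a bit string, and her goal (via INDEX) is to let Bob decide whether a fixed query edge $(u,v)\in E(G)$ lies in $G'$. Alice feeds the edges of $G'$ as \emph{links} (each of weight $1$) into the streaming algorithm $\mathcal A$ for STAP, then sends the memory state to Bob. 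Since this is the \emph{link arrival} model, the base Steiner tree is presented \emph{first}, before any links; so Bob must fix the tree in advance, independent of his query. The trick is to make the tree a large ``universal gadget'' from which, for each possible query edge $(u,v)$, Bob can (at query time, using only zero-weight links he adds himself) carve out a chain between $u$ and $v$ whose cheap augmentation corresponds exactly to a short $uv$-path in $G'$.

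The key steps, in order: (1) Build the base Steiner tree as a star (or caterpillar) on Steiner center $c$ with leaves $\{z_x : x\in V(G)\}\cup\{\text{auxiliary leaves}\}$, where every $z_x$ is a terminal; actually, to force a \emph{chain} structure at query time we use the STAP flexibility: only terminals need $2$-edge-connectivity, Steiner nodes do not. So make all of $V(G)$ into Steiner nodes hanging off a backbone path of \emph{terminals}; the backbone path's tree edges are the ones that must be covered. (2) At query time for $(u,v)$, Bob picks an ordering $x_1=u,x_2,\dots,x_n=v$ of $V(G)$ exactly as in Theorem~\ref{thm:lowerbound-t} (sorted by distance from $u$ in $G$ with $(u,v)$ removed), and adds zero-weight links $(z_{x_i}, z_{x_{i+1}})$ — wait, this doesn't work in link arrival since Bob cannot change the tree. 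Instead: make the tree already contain a ``universal chain'': for \emph{every} ordered pair, no. The cleaner route: let the base tree be a star with center $r$ (a terminal) and leaves $z_x$ for all $x\in V(G)$ (all terminals). Then STAP on the star requires the augmentation to make all terminals $2$-edge-connected to each other, i.e. cover every tree edge $(r,z_x)$. Alice's links $(z_x,z_y)$ of weight $1$ correspond to edges of $G'$; Bob at query time adds zero-weight links $(r,z_x)$ for \emph{all} $x\notin\{u,v\}$, so all tree edges except $(r,z_u),(r,z_v)$ are covered for free, and covering the remaining two requires either the single weight-$1$ link $(z_u,z_v)$ (if $(u,v)\in G'$) or at least one weight-$1$ link on each side. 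But this only gives a $1$ vs.\ $2$ gap — like Theorem~\ref{thm:lb-link-arrival} — not the $t$-dependent gap. To get the $2t+1$ factor I need the high-girth argument, which requires the chain/path structure of Theorem~\ref{thm:lowerbound-t}.

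So the real plan: observe that Theorem~\ref{thm:lowerbound-t}'s proof already produces, for STAP, a valid link-arrival lower bound \emph{if} we can let the chain $E$ be part of a fixed structure. We exploit STAP's Steiner relaxation: let the base Steiner tree $T$ have exactly \emph{two} terminals, $u$ and $v$, connected by a long path of Steiner nodes $u = x_1 - x_2 - \cdots - x_N = v$ where $\{x_1,\dots,x_N\} = V(G)$. This tree is fixed \emph{independent of Alice's input} (it depends on the query $(u,v)$, but in link arrival Bob knows his query in advance and the \emph{tree comes first}, so Bob — not Alice — could in principle supply it; but the tree in link arrival is ``given to the algorithm'', which in a one-way communication reduction means it's part of the public setup chosen knowing the query). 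Then: $T$ has only $u,v$ as terminals, so a feasible STAP solution need only create two edge-disjoint $u$–$v$ paths, i.e.\ (since $T$ is already one path) add links forming a second $u$–$v$ path, equivalently cover all $N-1$ tree edges — which is exactly the TAP instance in Theorem~\ref{thm:lowerbound-t}. Alice feeds $L=E(G')$ as weight-$1$ links; the argument of Theorem~\ref{thm:lowerbound-t} shows the optimum is $1$ if $(u,v)\in G'$ and $\ge 2t+1$ otherwise, using girth $>2t+1$ and the distance-monotone ordering of $x_2,\dots,x_{N-1}$. Hence a better-than-$(2t+1)$-approximation for the \emph{cost} solves INDEX, giving the $\Omega(\gamma(n,2t+1)) = \Omega(n^{1+1/t})$ bound (under Erd\H{o}s' conjecture). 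The main obstacle — and the one subtlety to get right — is justifying that the distance-dependent base tree is legitimate in the link-arrival model: the model lets the adversary present $G$ (here $T$) first and only then the links; since the reduction is for a \emph{fixed} query and the tree is allowed to depend on $G$ but not on Alice's bits, and the link-arrival model explicitly allows an arbitrary base graph presented up front, this is fine, but I'd state it carefully. A secondary point is ensuring the Steiner structure genuinely reduces STAP-feasibility to covering the path edges: with only two terminals $u,v$, $2$-edge-connectivity between all terminal pairs is just $2$ edge-disjoint $u$–$v$ paths, and since the base tree already contributes one, we need exactly a $uv$-path among the added links — matching Theorem~\ref{thm:lowerbound-t} verbatim.
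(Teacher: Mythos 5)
Your final plan rests on a base Steiner tree that is the chain $u = x_1 - x_2 - \cdots - x_N = v$ with $x_2,\dots,x_{N-1}$ sorted by distance from $u$ in $G\setminus\{(u,v)\}$, exactly as in Theorem~\ref{thm:lowerbound-t}. This tree depends on Bob's query $(u,v)$, and therein lies a genuine gap. In the link arrival model the base tree is presented \emph{before} any link, hence before Alice's links; since Alice does not know Bob's query, she cannot supply this tree. Your attempt to treat the tree as ``public setup chosen knowing the query'' does not rescue the argument: if the algorithm is allowed to see a description of the query index before Alice streams her bits, it can simply remember the single relevant bit of Alice's input, so the INDEX lower bound of $\Omega(|E(G)|)$ no longer applies. (This is precisely why Theorem~\ref{thm:lowerbound-t} is stated for the fully streaming model, where Bob is free to feed the chain \emph{after} receiving Alice's memory state.)

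The paper's proof sidesteps this by making the base Steiner tree completely query-independent and trivial: two fresh terminals $s,t$ (disjoint from $V(G)$), a one-edge tree $\{(s,t)\}$, and \emph{all} of $V(G)$ as Steiner nodes that appear only in the link graph. Alice streams weight-$1$ links for her edges $E(H)$; then Bob streams two zero-weight links $(s,u)$ and $(t,v)$, which is legitimate because links may arrive in any order and Bob's links follow Alice's. With two terminals and a single tree edge, a minimum-weight STAP augmentation is exactly a shortest $s$--$t$ path in the link graph, i.e.\ $0 + d_H(u,v) + 0$. If $(u,v)\in E(H)$ this is $1$; otherwise any $u$--$v$ path of fewer than $2t+1$ hops in $H$ would close a cycle of length $\le 2t+1$ in $G$ with the edge $(u,v)$, contradicting girth $>2t+1$, so the cost is at least $2t+1$. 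Note this is also simpler than what you were attempting: no distance-monotone ordering of a chain is needed at all --- the girth bound is applied directly to the $u$--$v$ distance in $H$.
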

\begin{proof}
   We use a similar (and simpler) construction as in the proof of  Theorem~\ref{thm:lowerbound-t}.
   
   Let $G$ be a fixed graph on $|V|=n$ vertices with girth $>2t+1$ and $|E(G)| = \Omega(n^{1+1/t})$ edges (which exists assuming the Erd\H{o}s's girth conjecture). Again, consider the INDEX  problem where Alice has a subgraph $H\subseteq G$. Alice sends a message to Bob. Then, Bob needs to decide whether $(u,v)\in H$ for a given edge $(u,v) \in G$.
 This task requires message size $\Omega(|E(G)|)$ bits. 

Now we use the streaming algorithm $\mathcal{A}$ (in the link arrival model) for STAP to design a protocol for the INDEX problem. 
Consider the STAP instance on a graph $G'=(V',E')$, where $V':= V \cup \{s,t\}$, and the terminal set is $R=\{s,t\}$, and the Steiner tree $E'$ has a single edge $(s,t)$.  Now, Alice and Bob together define the link set $L$ of this STAP instance: Alice adds to $L$ all her edges $E(H)$ as links with weight $1$. Then, Bob adds to $L$ two links $(s,u)$ and $(t,v)$ with weight $0$.  Then, the optimal solution for augmenting the Steiner tree $(s,t)$ is the length of the shortest path  from $s$ to $t$ using links from $L$. If $(u,v)\in E(H)\subset L$ then the shortest path length is $w(s,u)+w(u,v)+w(v,t)=0+1+0=1$. Otherwise, since $H\subseteq G$ has girth $>2t+1$, the shortest path length is at least $2t+1$.
Hence, any better-than-$(2t+1)$ approximation to STAP can solve this INDEX problem and thus needs space $\Omega(n^{1+1/t})$ bits.

For the unweighted STAP, the argument above yields the same space lower bound for better-than-$\frac{2t+3}{3}$ approximation.
\end{proof}

\subsection{SNDP in Edge Arrival Streams}\label{sec:sndp}
In this section, using our results and techniques from $k$-CAP and weighted spanners, we present a streaming algorithm for the general SNDP problem in edge arrival streams. We remark that our result in this section provide coresets for {\em covering functions} defined on cuts. 
\begin{lemma}
\label{lem:coreset}
Consider a weighted graph $G=(V,E)$ in an edge arrival stream. For integer $k\ge 1$ there is a one-pass streaming algorithm that computes $k$ disjoint edge subsets $S_1\uplus S_2\uplus \dots \uplus S_k \subseteq E$ each of size $|S_i|\le O(\eps^{-1}n^{1+1/t}\log n)$, in total space $O(k\eps^{-1}n^{1+1/t}\log n)$ words such that, for every $i\in [k]$ and every $e=(u,v)\in E\setminus (S_1\cup S_2\cup \dots \cup S_i)$, there is be a path $P\subseteq S_i$ connecting $u,v$  with total length $w(P) \le (2t-1+\eps)w(e)$.
\end{lemma}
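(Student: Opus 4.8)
The plan is to iterate the weighted spanner construction of Theorem~\ref{thm:spanner-alg} in $k$ ``layers'', peeling off one spanner at a time. Concretely, I would run $k$ copies of Algorithm~\ref{alg:1} in parallel, but chained together: when an edge $e=(u,v)$ arrives, first offer it to the layer-$1$ spanner algorithm $\mathcal{A}_1$; if $\mathcal{A}_1$ decides to keep $e$ (i.e.\ $e$ does not already have a $(2t-1)$-stretch path in the current layer-$1$ store within its weight class), then $e$ joins $S_1$ and we stop. Otherwise, pass $e$ down to $\mathcal{A}_2$, and so on; if all $k$ layers reject $e$, discard it. This ensures the $S_i$ are automatically disjoint. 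Each $\mathcal{A}_i$ is an independent instance of the Theorem~\ref{thm:spanner-alg} algorithm run on the substream of edges forwarded to it, so by Lemma~\ref{lem:spannerspacebound} each $|S_i| \le O(\eps^{-1}n^{1+1/t}\log n)$ and the total space is $O(k\eps^{-1}n^{1+1/t}\log n)$ words.

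The key property to verify is the layered spanner guarantee: for every $i\in[k]$ and every $e=(u,v)\in E\setminus(S_1\cup\dots\cup S_i)$, there is a path $P\subseteq S_i$ with $w(P)\le (2t-1+\eps)w(e)$. The point is that if $e\notin S_1\cup\dots\cup S_i$, then in particular $e$ was offered to $\mathcal{A}_i$ and $\mathcal{A}_i$ rejected it (either at arrival time because its endpoints were already $(2t-1)$-close in layer $i$'s current weight class $E_j^{(i)}$, or later in a \textsc{Sparsify} call because an even shorter replacement path existed among lighter edges kept by $\mathcal{A}_i$). This is exactly the situation analyzed in the proof of Lemma~\ref{lem:spannerstretch}: the edges retained by $\mathcal{A}_i$ form a $(2t-1)(1+\eps)$-spanner of the graph $(V, \text{substream}_i)$ on its own input, and since $e$ itself is in $\mathcal{A}_i$'s input substream, $S_i$ contains a $u$--$v$ path of weight at most $(2t-1)(1+\eps)w(e)$. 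Rescaling $\eps$ absorbs the $(1+\eps)$ into the additive $+\eps$ slack stated in the lemma.

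The subtle point — and the one I expect to be the main obstacle — is that when $e$ is rejected by $\mathcal{A}_i$ and forwarded to $\mathcal{A}_{i+1}$, one must be sure this does not corrupt $\mathcal{A}_{i+1}$'s invariants or its sparsity bound. But this is fine: $\mathcal{A}_{i+1}$ is simply running the Theorem~\ref{thm:spanner-alg} algorithm on \emph{whatever} substream it receives, and Lemma~\ref{lem:spannerspacebound} bounds $|S_{i+1}|$ for \emph{any} input stream; the stretch guarantee of Lemma~\ref{lem:spannerstretch} likewise holds relative to $\mathcal{A}_{i+1}$'s own input. The only thing the chaining buys us is disjointness and the telescoping property that an edge reaching layer $i+1$ genuinely had a short path in layer $i$. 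One mild technical care: we must make sure each $\mathcal{A}_i$'s weight-class bucketing is consistent across all layers (use the same global rounding to powers of $1+\eps$ and the same interval structure $I_k$), so that ``$e\notin S_i$'' really does imply $\mathcal{A}_i$ saw $e$ and rejected it. With those points in place, combining the per-layer statements of Lemmas~\ref{lem:spannerstretch} and~\ref{lem:spannerspacebound} over $i=1,\dots,k$ yields the lemma.
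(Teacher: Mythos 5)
Your overall plan — chaining $k$ copies of the spanner algorithm to get disjoint layers, with per-layer stretch from Lemma~\ref{lem:spannerstretch} and per-layer size from Lemma~\ref{lem:spannerspacebound} — is the right one and matches the paper's construction. But your forwarding rule is incomplete, and this breaks the central implication you rely on: ``if $e\notin S_1\cup\dots\cup S_i$, then $e$ was offered to $\mathcal{A}_i$.''

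In your scheme the only trigger for passing $e$ down a layer is a rejection \emph{at arrival}. But an edge $e$ can be \emph{accepted} by $\mathcal{A}_j$ at arrival and then \emph{evicted later} by a subsequent \textsc{Sparsify} call (triggered by other, lighter edges arriving afterward). In that case $e\notin S_j$ at the end, yet under your rule $e$ was never passed to $\mathcal{A}_{j+1},\dots,\mathcal{A}_k$. So $e\notin S_1\cup\dots\cup S_i$ for every $i\ge j$, but $\mathcal{A}_{j+1},\dots,\mathcal{A}_k$ never saw $e$ and carry no guarantee of a $u$--$v$ path of stretch $(2t-1+\eps)$ in $S_{j+1},\dots,S_k$. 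The claimed ``telescoping'' does not telescope past layer $j$. The paper fixes exactly this: $\textsc{Insert}(\mathcal{A}_i,e')$ is defined to return \emph{all} edges evicted by $\mathcal{A}_i$ as a result of the insertion (including $e'$ if it is rejected at arrival, but also any previously stored edges that \textsc{Sparsify} now deletes), and all of these are fed into $\mathcal{A}_{i+1}$, cascading down the layers. With that change, an edge $e$ reaches $\mathcal{A}_i$ exactly when $e\notin S_1\cup\dots\cup S_{i-1}$, Lemma~\ref{lem:spannerstretch} applied to $\mathcal{A}_i$'s own input substream then gives the desired path in $S_i$, and the size/space bounds are unaffected since Lemma~\ref{lem:spannerspacebound} is stream-agnostic. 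Your disjointness argument and space accounting are otherwise fine, and the concern you flagged about consistent bucketing across layers is a valid but minor point.
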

\begin{proof}
We run $k$ instances $\mathcal{A}_1,\dots,\mathcal{A}_k$ of our streaming algorithm for spanner (Theorem~\ref{thm:spanner-alg}) in parallel. When an edge $e\in E$ arrives from the stream, we first feed it to $\mathcal{A}_1$.
This step may cause $\mathcal{A}_1$ to remove some edges from its memory (it could be that $e$ itself is not stored, or some other stored edges are evicted). 
We feed all these edges removed  by $\mathcal{A}_1$ into $\mathcal{A}_2$, and repeat the similar process, and so on.
Formally, let procedure $\textsc{INSERT}(\mathcal{A}_i,e)$ inserts $e$ to algorithm $\mathcal{A}_i$, and returns the set of edges evicted by $\mathcal{A}_i$. Starting with $F_0=\{e\}$, we iterate over $i=1,2,\dots,k$ and let $F_{i}$ be the union of the return values of $\textsc{INSERT}(\mathcal{A}_i,e')$ over all $e'\in F_{i-1}$. 

Finally return the edges stored by $\mathcal{A}_1,\dots,\mathcal{A}_k$ as $S_1,\dots,S_k$. Observe that they are disjoint subsets of $E$ by construction, and the size bound and space bound follow from Lemma~\ref{lem:spannerspacebound}.
By construction, if $e\in E \setminus (S_1\cup \dots \cup S_{i-1})$, then $e$ must have been fed into $\mathcal{A}_i$ at some point.
By the property of the spanner algorithm (see proof of Lemma~\ref{lem:spannerstretch}), if an edge $e$ is ever fed into $\mathcal{A}_i$ but eventually not stored in $S_i$, then there is a path $P\subseteq S_i$ that approximates $e$ as claimed.  This finishes the proof of the desired property of $S_1,\dots,S_k$.
\end{proof}

One of the main algorithmic approaches for SNDP is the augmentation framework pioneered by~\citet{williamson1993primal}. In this approach, the solution is constructed in $k$ phases and by the end of the phase $\ell$, the connectivity of every pair $u,v$ in the so-far-constructed solution is at least $\min\{\ell, r(st)\}$. So, the optimization problem of each phase is to increase connectivity of subset of pairs by one. More precisely, in each phase $\ell$, we need to pick a minimum-weight subgraph $H$ to cover a function $f_{\ell}: 2^{V} \rightarrow \{0,1\}$. We say that a subgraph $H$ covers $f$ iff for every $U\subset V$, $\delta_H(U) \ge f(s)$. In the case of SNDP, for every $\ell\le k$, $f_\ell$ is a skew-supermodular function and admits a $2$-approximation via a primal-dual algorithm~\citep{williamson1993primal}.   

Next, We use Lemma~\ref{lem:coreset} to show a coreset for covering $\{0,1\}$ functions $f: 2^V \rightarrow \{0,1\}$:
\begin{definition}
\label{defn:generalproblem}
Given a weighted graph $G = (V, E)$, and a function $f: 2^V \to \{0,1,\dots,k\}$, find an edge subset $H\subseteq E$ with minimum total weight such that for all $U\subseteq V$ it holds that $|\delta_H(U)| \ge f(U)$. Throughout this section, we consider the functions $f$ arising from an instance of SNDP on $G$ with connectivity requirement function $r$ with maximum requirement $k$. Then, for every $U\subset V$, $f(U) := \max_{s\in U, t\in V\setminus U} r(st)$.\footnote{All results hold for a more general class of {\em proper} functions too. The function $f$ is proper if $f(V) =0$, $f(U) = f(V\setminus U)$ for every $U\subset V$ (symmetry), and $f(U_1 \cup U_2)\le \max\{f(U_1), f(U_2)\}$ whenever $U_1$ and $U_2$ are disjoint (maximality).}
\end{definition}
\begin{lemma}
\label{lem:corsetapprox}
Given a weighted graph $G=(V,E)$, let $S=S_1\cup \dots \cup S_k$ be the set of edges returned by the algorithm of Lemma~\ref{lem:coreset}. 
Then, the optimal solution for covering a function $f:2^V \rightarrow \{0,1,\cdots, k\}$ (arising from a SNDP instance on $G$) on graph $G' = (V,S)$ is an $O(t\log k)$-approximation of the optimal solution for covering $f$ on $G=(V,E)$.
\end{lemma}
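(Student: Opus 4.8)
The plan is to use the augmentation framework of \citet{williamson1993primal,goemans1994improved}: decompose the optimal solution $\opt$ for covering $f$ on $G$ into $k$ ``layers,'' one for each connectivity threshold $\ell \in \{1,\dots,k\}$, and show that layer $\ell$ can be replaced by edges from the coreset piece $S_\ell$ at a multiplicative cost of $O(t)$. Summing over all $k$ layers and comparing against the best primal-dual-style analysis (which for proper functions of the form $f_\ell : 2^V \to \{0,1\}$ loses an $O(\log k)$ factor in aggregate, via the $H_k \approx \ln k$ bound in the layered primal-dual argument) will yield the claimed $O(t \log k)$ bound.

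First I would set up the layered view. For $\ell \in \{1,\dots,k\}$, define $f_\ell(U) := \min\{f(U),\ell\} - \min\{f(U),\ell-1\} \in \{0,1\}$, so that $f = \sum_{\ell=1}^k f_\ell$ pointwise; each $f_\ell$ is again a proper $\{0,1\}$-valued function (symmetry, maximality, $f_\ell(V)=0$ are all inherited). Then $\opt$, being feasible for $f$, is in particular feasible for each $f_\ell$. The key step is the \emph{replacement} step: fix $\ell$ and consider the layer-$\ell$ requirement $f_\ell$. I want to argue that there is a subgraph $H_\ell \subseteq S_\ell$ covering $f_\ell$ with $w(H_\ell) \le O(t)\cdot w(\opt)$. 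The mechanism is exactly the one used in Theorem~\ref{thm:fullystreamingkconn} and Corollary~\ref{cor:steinerfully}: take an edge $e=(u,v) \in \opt$; if $e \in S_\ell$ keep it, otherwise by Lemma~\ref{lem:coreset} — since $e \notin S_\ell$ implies $e$ was fed into $\mathcal{A}_\ell$ (as $\opt \subseteq E$ and $e \notin S_1 \cup \dots \cup S_{\ell-1}$ would be the clean case; I will need to handle the case $e \in S_j$ for some $j < \ell$ separately, see the obstacle below) — there is a path $P_e \subseteq S_\ell$ with $w(P_e) \le (2t-1+\eps)w(e)$ joining $u$ and $v$. Replacing each $e$ in $\opt$ by $e$ itself or by $P_e$ yields $\tilde\opt_\ell \subseteq S_\ell$ with $w(\tilde\opt_\ell) \le (2t-1+\eps)w(\opt)$, and $\tilde\opt_\ell$ covers every cut that $\opt$ covered, hence covers $f_\ell$ (covering is monotone under adding a path between the two endpoints of a crossing edge, as argued in the feasibility proof of Theorem~\ref{thm:fullystreamingkconn}). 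So the optimum for covering $f_\ell$ on $(V,S_\ell)$ is at most $(2t-1+\eps)\,w(\opt)$.

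Next I would invoke the primal-dual augmentation analysis on the graph $G'=(V,S)$: run the phases $\ell = 1,\dots,k$, where in phase $\ell$ we solve (approximately, via \citep{williamson1993primal,goemans1994improved}) the problem of covering the residual $\{0,1\}$ function after phases $1,\dots,\ell-1$ using edges of $G'$; because $S \supseteq S_\ell$ contains a feasible layer-$\ell$ subgraph of cost $\le (2t-1+\eps)w(\opt)$, phase $\ell$ costs $O(1)\cdot (2t-1+\eps)\,w(\opt)$, but the aggregate over all $k$ phases in the Williamson–Goemans–style analysis telescopes with an $O(\log k)$ overhead (the harmonic sum), giving a total of $O(t\log k)\,w(\opt)$; finally, the optimum for covering $f$ on $(V,S)$ is no larger than what this algorithm outputs. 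I would also note $w(\opt) = \opt(f,G) \le \opt(f,G')$ trivially since $S \subseteq E$ — wait, that inequality goes the wrong way; the statement being proved is that $\opt(f,G')$ is an $O(t\log k)$-approximation of $\opt(f,G)$, i.e. $\opt(f,G') \le O(t\log k)\,\opt(f,G)$, which is exactly what the above chain gives, while $\opt(f,G') \ge \opt(f,G)$ holds trivially; so the approximation is one-sided and the above suffices.

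The main obstacle is the bookkeeping around \emph{which} $S_j$ contains the replacement path for a given edge $e \in \opt$. Lemma~\ref{lem:coreset} only guarantees a path in $S_i$ when $e \notin S_1 \cup \dots \cup S_i$; if $e$ happens to land in $S_j$ for some $j \le \ell$, then for the layer-$\ell$ replacement we cannot directly use $e$ (it is not in $S_\ell$) nor is the lemma's guarantee for index $\ell$ applicable. The fix is to be careful about which coreset piece each layer uses: rather than insisting layer $\ell$ uses exactly $S_\ell$, I would argue layer $\ell$ can be covered using edges from $S_\ell \cup (\text{edges of } \opt \text{ that survive into some } S_j)$, and since the $S_j$ are disjoint and collectively every edge of $\opt$ either sits in some $S_j$ or has a path-replacement in each $S_i$ with $i$ exceeding all indices containing it, a counting/pigeonhole over the $k$ layers and $k$ coreset pieces shows each layer gets a disjoint ``budget.'' This is essentially the reason the coreset is built with $k$ \emph{disjoint} pieces in Lemma~\ref{lem:coreset} in the first place, and I expect the clean way to present it is: the edges $S = S_1 \cup \dots \cup S_k$ simultaneously contain, for \emph{every} $\ell \le k$, a feasible layer-$\ell$ augmentation of cost $O(t)\,w(\opt)$, because any edge of $\opt$ not in $S_\ell$ was fed to $\mathcal{A}_\ell$ once it was evicted from the lower-indexed algorithms (and an edge can only be stored in one $S_j$), so by the spanner property of $\mathcal{A}_\ell$ it has a path in $S_\ell$. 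Making this last sentence fully rigorous — tracking the eviction cascade in Lemma~\ref{lem:coreset} to conclude that every $e \in \opt$ with $e \notin S_\ell$ is indeed $(2t-1+\eps)$-approximable within $S_\ell$ — is the crux, and everything else is the standard augmentation-framework accounting.
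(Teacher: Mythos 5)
You correctly identify the right framework (layered augmentation, edge-by-path replacement using Lemma~\ref{lem:coreset}, invoking the Goemans--Williamson $\log k$ accounting), and you also correctly identify the bookkeeping obstacle: an edge $e\in\opt$ that lands in $S_j$ for some $j<\ell$ is \emph{not} in $S_\ell$, and Lemma~\ref{lem:coreset} only guarantees a replacement path in $S_\ell$ when $e\notin S_1\cup\cdots\cup S_\ell$. However, your proposed fix for this obstacle is incorrect: you assert that ``any edge of $\opt$ not in $S_\ell$ was fed to $\mathcal{A}_\ell$ once it was evicted from the lower-indexed algorithms.'' This is false. If $e$ survives in $S_j$ for $j<\ell$, then $e$ was \emph{never} evicted from $\mathcal{A}_j$ and consequently was \emph{never} fed to $\mathcal{A}_{j+1},\dots,\mathcal{A}_\ell$; the cascade in Lemma~\ref{lem:coreset} only propagates evicted edges. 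So for such an $e$ there is no guaranteed short path in $S_\ell$, and your per-layer replacement cannot be confined to $S_\ell$ alone.

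The paper resolves exactly this point by not confining phase $\ell$ to $S_\ell$. It builds $H_0\subseteq H_1\subseteq\cdots\subseteq H_k$ iteratively, with the invariant $H_{i-1}\subseteq S_1\cup\cdots\cup S_{i-1}$, and lets the phase-$i$ augmentation draw from $L_i := (S_1\cup\cdots\cup S_i)\setminus H_{i-1}$. Then for any $e\in\opt_i\subseteq E\setminus H_{i-1}$, either $e\in S_1\cup\cdots\cup S_i$ (hence $e\in L_i$, keep it directly), or $e\notin S_1\cup\cdots\cup S_i$ and Lemma~\ref{lem:coreset} hands over a path in $S_i\subseteq L_i$. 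This closes the case analysis cleanly and yields a $(2t-1+\eps)$-approximate augmentation in each phase, which the reverse-order result of~\citep{goemans1994improved} converts to $O(t\log k)$ overall. So: your plan is the right one in outline and your diagnosis of the crux is accurate, but the ``clean way to present it'' you sketch does not actually work, and the correct fix (letting phase $i$ reuse the union $S_1\cup\cdots\cup S_i$ modulo the already-committed $H_{i-1}$) is a genuinely different mechanism from what you describe.
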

\begin{proof}
We use the augmentation framework to show the existence of an $O(t\log k)$-approximation coreset $S$. Note that while the augmentation approach of~\citep{williamson1993primal} achieves a $(2k)$-approximation,~\citet{goemans1994improved} showed by doing the augmentation in the reverse order, the approximation guarantee improves to $2\log k$. More precisely, in the phase $\ell$ of the reverse order augmentation, the optimization problem is to cover $\overleftarrow{f}_\ell$ where for every $U \subset V$, $\overleftarrow{f}_\ell(U) =  \max(0, \max_{s\in U, t\in V\setminus U} r(st) - (k-\ell))$. In words, by the end of the first phase, all pairs with connectivity requirement becomes connected and by the end of phase $\ell$, all pairs with connectivity requirements at least $\ell + m$, where $m\ge 1$, has at least $m$ edge-disjoint paths in the so-far-constructed solution.~\cite{goemans1994improved} proved that if there exits an $\alpha$-approximation for the optimization problem arises in the reverse order augmentation, the overall algorithm returns an $O(\alpha\cdot \log k)$-approximation for covering $f$ arises from an instance of SNDP (or more generally, proper functions).

Now we show that $G'$ is a $O(t\log k)$-approximate coreset for covering $f$. Start with empty graph $H_0=(V,\emptyset)$. For each $i=1,2,\cdots, k$, consider the set of links 
$L_i:= S_1\cup \cdots \cup S_{i}\setminus H_{i-1}$, and let $F_i\subseteq L_i$ be the minimum weight subset of links such that the augmented solution $H_{i-1} \cup F_i$ satisfies $\delta_{H_{i-1}\cup F_i}(U)\ge \overleftarrow{f}_i(U)$ for all $U\subseteq V$.
Let $H_i:= H_{i-1} \cup F_i$. Finally, output $H_k$. Note the invariant $H_i\subseteq S_1\cup \dots \cup S_i$.
We first show that for each phase $i$, $F_i\subseteq L_i$ is a ``good'' augmentation set compared to the best augmentation set in $E\setminus H_{i-1}$.
\begin{claim}
For $i\le k$, consider the minimum weight $\opt_i\subseteq E\setminus H_{i-1}$ such that $\delta_{H_{i-1}\cup \opt_i}(U)\ge \overleftarrow{f}_i(U)$ for all $U\subseteq V$.
Then, $w(F_i) \le (2t-1+\eps) \cdot w(\opt_i)$.
\end{claim}
\begin{proof}
By definition of our algorithm, the current solution $H_{i-1}$ should already satisfies $|\delta_{H_{i-1}}(U)| \ge \overleftarrow{f}_{i-1}(U)$ for all $U\subseteq V$.

It suffices to construct a small-weight solution $\tilde F \subseteq L_i = S_1\cup \dots \cup S_i \setminus H_{i-1}$ such that $|\delta_{H_{i-1}\cup \tilde F}(U)| \ge \overleftarrow{f}_i(U)$ for all $U\subseteq V$. Note that since $H_{i-1}\subseteq S_1\cup \dots \cup S_{i-1}$, $S_i \subseteq L_i$. For each $e=(u,v)\in \opt_i$, note that $e\in \opt_i \subseteq E\setminus H_{i-1} = \big (E\setminus (S_1\cup \dots \cup S_{i})\big )\cup L_i$. \begin{itemize}
     \item If $e\in L_i$, then add $e$ to  our solution $\tilde F$.
     \item Otherwise, $e\in E\setminus (S_1\cup \dots \cup S_{i})$. By Lemma~\ref{lem:coreset},  there is a path $P\subseteq S_i \subseteq L_i$ whose total weight is not more than $(2t-1+\eps) \cdot w(e)$. We add all edges of $P$ to $\tilde F$.
 \end{itemize}    
We have $\tilde F\subseteq L_i$ by construction (note that we do not keep duplicates in $\tilde F$). It is clear that $w(\tilde F) \le (2t-1+\eps)\cdot w(\opt_i)$. It remains to show $\tilde F$ is feasible:

Consider any $U\subseteq V$ that is previously not satisfied, i.e., $|\delta_{H_{i-1}}(U)| < \overleftarrow{f}_{i}(U)$. This means $\delta_{H_{i-1}}(U) = \overleftarrow{f}_{i-1}(U) = \overleftarrow{f}_{i}(U)-1$. By the feasibility of $\opt_i$, there is an edge $e\in \opt_i$ such that $e\in \delta_{H_{i-1}}(U)$. Then by our construction there is also an edge $e'\in \tilde F$ such that $e'\in \delta_{H_{i-1}}(U)$, which implies $|\delta_{H_{i-1}\cup \tilde F}(U)|\ge \overleftarrow{f}_{i}(U)$ as desired.
\end{proof}

Finally, as we show that the set $F_i$ is a $(2t-1+\eps)$-approximate solution of the optimization problem of phase $\ell$ in the reverse augmentation framework, by the result of~\citep{goemans1994improved}, the set $\bigcup_{i\in [k]} F_i$ is an $O(t \log k)$-approximate solution for covering the function $f$. Hence, $G'$ is an $O(t\cdot \log k)$-approximate coreset for covering $f$ and the proof is complete.
\end{proof}

\begin{theorem}\label{thm:sndp}
SNDP with maximum connectivity requirement $k$ on a weighted graph $G=(V,E)$ admits a single-pass streaming algorithm with space complexity $O(kn^{1+1/t})$ words and approximation ratio $O(t\log k)$.
\end{theorem}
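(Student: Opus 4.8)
The plan is to assemble the streaming coreset of Lemma~\ref{lem:coreset} with an offline approximation algorithm for SNDP run as post-processing on the sparsified graph. During the single pass over the edge stream I would run the algorithm of Lemma~\ref{lem:coreset} with a fixed constant $\eps$ (say $\eps=1$) and with the parameter set to the maximum connectivity requirement $k$; this produces disjoint edge subsets $S_1\uplus S_2 \uplus \dots \uplus S_k \subseteq E$ and stores them in $O(k\eps^{-1}n^{1+1/t}\log n) = \tilde O(kn^{1+1/t})$ words, which matches the claimed space bound. Write $S = S_1\cup\dots\cup S_k$ and $G'=(V,S)$. The connectivity requirement function $r$ (which induces $f(U)=\max_{s\in U, t\in V\setminus U} r(st)$) is part of the input specification, so once the pass is over I would solve SNDP on the small graph $G'$ offline, in polynomial time, using any constant-factor approximation for SNDP — e.g.\ Jain's iterative-rounding $2$-approximation~\citep{Jain01}, or equivalently the reverse-order primal-dual augmentation framework of~\citep{goemans1994improved,williamson1993primal} applied within $G'$ — and output the resulting subgraph $H\subseteq S \subseteq E$.

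Feasibility is immediate: since $S\subseteq E$, any subgraph of $G'$ meeting the requirements $r$ also meets them in $G$, and $G'$ does admit a feasible solution whenever $G$ does (this is exactly the solution $H_k\subseteq S$ built explicitly in the proof of Lemma~\ref{lem:corsetapprox}). For the approximation ratio, let $\opt$ and $\opt'$ denote the optimal SNDP costs on $G$ and on $G'$ respectively. By Lemma~\ref{lem:corsetapprox}, for the proper function $f$ arising from the SNDP instance we have $\opt' \le O(t\log k)\cdot \opt$. The offline $2$-approximation on $G'$ therefore returns a solution of weight at most $2\,\opt' \le O(t\log k)\cdot \opt$, giving the claimed approximation ratio. (Alternatively, one can run the reverse-order augmentation framework directly on $G'$, using the $2$-approximate primal-dual augmentation restricted to the link set $L_i = S_1\cup\dots\cup S_i\setminus H_{i-1}$ in phase $i$; by the per-phase claim inside the proof of Lemma~\ref{lem:corsetapprox} this is an $O(t)$-approximate augmentation against any $F\subseteq E\setminus H_{i-1}$, and~\citep{goemans1994improved} then upgrades this to an $O(t\log k)$-approximation overall.)

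The substance of this theorem is already packaged into Lemma~\ref{lem:coreset} (that $k$ weighted-spanner sketches can be maintained in parallel in one pass in $\tilde O(kn^{1+1/t})$ words) and Lemma~\ref{lem:corsetapprox} (that these $k$ sketches form an $O(t\log k)$-approximate coreset for covering any SNDP-induced proper function), so I do not expect a real obstacle in the assembly step. The one point that genuinely needs care is that the coreset guarantee of Lemma~\ref{lem:corsetapprox} is stated relative to $\opt(G)$ rather than $\opt(G')$, so composing it with an offline $\beta$-approximation on $G'$ \emph{multiplies} the two ratios instead of adding them; since $\beta$ is a constant, the product is still $O(t\log k)$. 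Minor bookkeeping points are that $\eps$ must be fixed to a constant to discharge it from the final bounds, and that $k$ must be known during the pass, which is fine since it is given as the maximum connectivity requirement.
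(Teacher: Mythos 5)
Your proof is correct and takes essentially the same approach as the paper: both run Lemma~\ref{lem:coreset} to build the coreset $S$ in one pass and then solve SNDP offline on $(V,S)$, invoking Lemma~\ref{lem:corsetapprox} for the $O(t\log k)$ guarantee and setting $\eps=\Theta(1)$ to simplify the space bound. The only difference is cosmetic: the paper uses an exponential-time exact solver on $(V,S)$ whereas you substitute Jain's polynomial-time $2$-approximation, which is fine since, as you observe, the extra constant factor multiplies the coreset ratio and is absorbed into $O(t\log k)$.
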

\begin{proof}
    We first use Lemma~\ref{lem:coreset} to compute the edge set $S$, and solve the SNDP instance on the subgraph $(V,S)$ optimally by an exponential-time exact algorithm (with no space complexity overhead).  Then Lemma~\ref{lem:coreset} ensures that our solution achieves $O(t\log k)$ approximation. Note that the $\eps^{-1}$ and $\log n$ factor can be  omitted from the space complexity of Lemma~\ref{lem:coreset}, by setting $\eps = \Theta(1)$ and adjusting $t$ by a constant, which only affect the hidden constant factor in front of the approximation ratio $(2t-1+\eps)\cdot O(\log k)$.
\end{proof}
Note that SNDP generalizes STAP, so the same lower bound for STAP from Corollary~\ref{cor:lowerbound-steiner} also applies to SNDP. Specifically, for any constant integer $t\ge 1$, weighted SNDP requires space complexity $\Omega(n^{1+1/t})$ bits (assuming the Erd\H{o}s's girth conjecture) to approximate the solution cost to a factor better than $2t+1$.

\subsection{Min-Weight \texorpdfstring{$k$}{k}-ECSS in Edge Arrival Streams} 
As a corollary of Theorem~\ref{thm:link-arrival-alg} for connectivity augmentation in link arrival stream, we have the following guarantee for the problem of finding minimum-weight $k$-edge-connected spanning subgraph ($k$-ECSS). Formally, in weighted $k$-ECSS, given a graph $G=(V, E)$ with a weight function $w: E\rightarrow \mathbb{R}_{\ge 0}$, the goal is to find a minimum-weight $k$-edge-connected subgraph $H\subseteq G$.  

\begin{corollary}\label{thm:k-ecss-alg}
There exists a $k$-pass $O(\log k)$-approximation algorithm for weighted {$k$-ECSS} with total memory space $O(nk + n \log \min(n,W))$ where $W= \max_{e\in E} w(e)$.
\end{corollary}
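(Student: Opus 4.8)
## Proof Proposal for Corollary~\ref{thm:k-ecss-alg}

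<br>The plan is to run the \textbf{augmentation framework} over $k$ passes, using the link-arrival $k$-CAP algorithm of Theorem~\ref{thm:link-arrival-alg} as the per-phase subroutine. We maintain a partial solution $H$, initializing it in pass $1$ to a minimum-weight spanning connected subgraph of $G$ (i.e.\ a minimum spanning tree, computable in one pass and $O(n)$ words by the standard greedy streaming MST algorithm that keeps the current forest and, on each insertion that closes a cycle, drops the heaviest cycle edge). Call this $H_1$; it is $1$-edge-connected. Then for $\ell = 2, 3, \dots, k$, in pass $\ell$ we treat the current solution $H_{\ell-1}$ --- which is $(\ell-1)$-edge-connected and stored explicitly in memory with at most $(\ell-1)(n-1)\le kn$ edges --- as the base graph of a $k$-CAP instance with target connectivity $\ell$, and we treat the edges of $E$ arriving in the stream as links (discarding, via the in-memory copy of $H_{\ell-1}$, any edge already present in $H_{\ell-1}$). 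At the start of pass $\ell$ we compute offline, from the in-memory copy of $H_{\ell-1}$, its $O(n)$-size cactus representation (see the remark following Corollary~\ref{cor:cactus}). Running Theorem~\ref{thm:link-arrival-alg} with $\eps=1$ then produces, in one pass and $O(n\log\min(n,W))$ additional words, a set $F_\ell\subseteq E\setminus H_{\ell-1}$ with $w(F_\ell)\le 3\cdot \opt_\ell$, where $\opt_\ell$ is the minimum weight of an augmentation of $H_{\ell-1}$ to an $\ell$-edge-connected graph using edges of $E$. We set $H_\ell := H_{\ell-1}\cup F_\ell$ and output $H_k$.

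\textbf{Space and passes.} There are exactly $k$ passes. During pass $\ell$ we store $H_{\ell-1}$ and its cactus, using $O(n\ell)\le O(nk)$ words, plus $O(n\log\min(n,W))$ words for the invocation of Theorem~\ref{thm:link-arrival-alg}; older partial solutions are never needed again, so the peak memory is $O(nk + n\log\min(n,W))$ words, as claimed.

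\textbf{Approximation.} By construction $H_k$ is $k$-edge-connected (phase $\ell$ raises the edge connectivity from $\ell-1$ to $\ell$ by covering every min-cut of $H_{\ell-1}$), so it is a feasible $k$-ECSS solution. This is precisely the augmentation framework of~\citet{williamson1993primal,goemans1994improved} applied to the proper function $f\equiv k$ (for which the forward and reverse orderings coincide, since both give $f_\ell(U)=\ell$ on all relevant cuts), with the single-unit augmentation step in each phase solved to within a constant factor by Theorem~\ref{thm:link-arrival-alg} instead of by the primal-dual method. Exactly as invoked in the proof of Lemma~\ref{lem:corsetapprox}, the result of~\citet{goemans1994improved} guarantees that running the augmentation framework with an $\alpha$-approximate per-phase subroutine yields an $O(\alpha\log k)$-approximate solution for covering a proper function with maximum value $k$; with $\alpha=O(1)$ this gives $w(H_k)=O(\log k)\cdot \opt$, where $\opt$ is the optimal $k$-ECSS cost.

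\textbf{Main obstacle.} The only non-routine point is that the phase-$\ell$ benchmark $\opt_\ell$ --- the optimal augmentation cost \emph{relative to the approximately chosen} $H_{\ell-1}$, not relative to the optimal $(\ell-1)$-connected subgraph --- must be charged against roughly a $1/(k-\ell+1)$ fraction of $\opt$, so that the harmonic sum $\sum_\ell \opt_\ell = O(\log k)\,\opt$ goes through; this charging is exactly the content of the Goemans et al.\ analysis, which we use as a black box in the same manner as Lemma~\ref{lem:corsetapprox}. A secondary, purely technical point is that Theorem~\ref{thm:link-arrival-alg} expects a precomputed $O(n)$-size cactus of its base graph; this is not an issue here because $H_{\ell-1}$ resides entirely in memory at the start of pass $\ell$, so its cactus can be built in $\tilde O(nk)$ time and space before the pass begins.
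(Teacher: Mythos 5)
Your proposal is correct and uses the same $k$-pass augmentation framework as the paper. The only substantive difference is in the approximation analysis: you invoke the Goemans et al.\ harmonic-charging result as a black box, whereas the paper gives a short, self-contained argument. Concretely, the paper writes down the explicit fractional solution $x_\ell(e) = 1/(k-\ell+1)$ for $e\in E^*\setminus E_{\ell-1}$ (and $0$ otherwise) for the phase-$\ell$ augmentation LP $\auglp(H_{\ell-1}, L_\ell)$: since every still-deficient cut $S$ has $|\delta_{E_{\ell-1}}(S)| = \ell-1$ while $|\delta_{E^*}(S)|\ge k$, it is crossed by at least $k-\ell+1$ edges of $E^*\setminus E_{\ell-1}$, so $x_\ell$ is feasible with cost at most $w(H^*)/(k-\ell+1)$. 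Combining this with Jain's integrality gap of $2$ for $\auglp$ and the $(2+\eps)$-guarantee of Theorem~\ref{thm:link-arrival-alg} gives $w(F_\ell)\le O\bigl(w(H^*)/(k-\ell+1)\bigr)$, and summing over $\ell$ yields the $O(\log k)$ bound. This direct route is worth noting because Theorem~\ref{thm:link-arrival-alg} approximates the \emph{integral} phase optimum, so invoking Goemans et al.\ (whose per-phase guarantee is usually stated against the LP relaxation) implicitly requires passing through the integrality gap; the paper's argument surfaces that step explicitly, whereas your black-box invocation leaves it to the reader. Your observation that forward and reverse orders coincide for the uniform requirement $f\equiv k$ is also made by the paper in its closing remark, so the two proofs reach the same conclusion by essentially the same reasoning, differing only in the degree of detail supplied for the charging inequality.
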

\begin{proof}
We start with an empty graph $H_0$ and make $k$ passes over the stream of edges in $G=(V,E)$ where in each pass the goal is to augment the connectivity by one.  
More precisely, in the $\ell$th pass, we augment the edge-connectivity of $H_{\ell-1} = (V, E_{\ell-1})$ by one and construct $H_\ell = (v, E_\ell)$. Note that in each pass, the set of edges in available for augmentation is $L_\ell := E\setminus E_{\ell-1}$. The goal of $\ell$th pass is to solve $\ell$-connectivity augmentation problem on $(H_{\ell-1}, L_\ell)$ in a link arrival stream.

To analyze the space complexity note that by the end of the $\ell$th pass, the constructed graph $H_\ell = (v, E_\ell)$ is $\ell$-edge-connected and has $O(\ell\cdot n)$ edges. This follows since $H_{\ell-1}$ is a minimal $(\ell-1)$-edge-connected graph. 

Next, we bound the approximation factor. While $O(k)$-approximation is a trivial bound, via a more careful analysis we can show $O(\log k)$-approximation. 

Let $H^* = (V, E^*) \subseteq G$ be a minimum weight $k$-edge-connected subgraph. Consider the standard LP-relaxation of connectivity augmentation problem. 
\begin{align}
&\auglp(H = (V, E), L)\nonumber\\[1mm]
\text{minimize }& \rlap{$\sum_{e\in L} w(e) \cdot x(e)$} \nonumber\\[1mm]
\text{s.t.}\qquad &\sum_{e\in \delta_{L}(S)} x(e) \geq \ell - |\delta_{E}(S)| &&\forall S\subset V \label{cst:cover} \\ 
&0 \le x(e) \le 1  &&\forall e\in L
\end{align}
First we show that there exists a fractional solution $x_{\ell}$ for $\auglp(H_{\ell-1}, L_i)$ of cost at most $\frac{1}{k - \ell-1} \cdot w(H^*)$. Then, by the fact that the integrality gap of $\auglp$ is at most $2$ (e.g.,~\citep{Jain01}), an optimal integral solution for the connectivity augmentation instance in the $\ell$th pass has weight $O(\frac{1}{k - \ell-1} \cdot w(H^*))$. Hence, by the guarantee of our algorithm for connectivity augmentation (i.e., Theorem~\ref{thm:link-arrival-alg}), the total weight of the final solution of the algorithm at the end of $k$th pass, $H_k$, is
\begin{align*}
    w(H_k) = \sum_{\ell=1}^k O(\frac{1}{k - \ell-1} \cdot w(H^*)) = O(\log k \cdot w(H^*)).
\end{align*}
Now, we describe the fractional solution $x_\ell$. For every edge $e\in L_{\ell}$, if $e\in E^*\setminus E_{\ell-1}$ then $x_\ell(e) = \frac{1}{k-\ell+1}$; otherwise, $x_\ell(e) =0$. It is straightforward to check that the cost of the fractional solution $x_\ell$ is at most $\frac{1}{k-\ell+1} \cdot w(H^*)$. 
Next, we show that $x_\ell$ is a feasible solution for $\auglp$ on $(H_{\ell-1}, L_{\ell})$. Consider a subset $S\subset V$. If $|\delta_{E_{\ell-1}}(S)| \ge \ell$ then constraint~\eqref{cst:cover} trivially holds. Otherwise, since $|\delta_{E^*}(S)| \ge k$, $|\delta_{E^*\setminus E_{\ell-1}}(S)| \ge k-\ell+1$. So, 
\begin{align*}
    \sum_{e\in \delta_{L_i}(S)} x_\ell(e) \ge \sum_{e\in \delta_{E^*\setminus E_{\ell-1}}(S)} x_\ell(e) \ge 1 = \ell - |\delta_{E_{\ell-1}}(S)|.
\end{align*}
Hence, $x_\ell$ is a feasible solution for $\auglp(H_{\ell-1}, L_{\ell})$
\end{proof}

\begin{remark}
    The $O(\log k)$-approximation analysis in the proof of Corollary~\ref{thm:k-ecss-alg} is similar to the analysis of~\citep{goemans1994improved} for the augmentation framework of Survivable Network Design Problem in reverse order. Although, the reverse order augmentation problem in general is different from the standard (forward) augmentation framework, in the case of $k$-ECSS these two approaches solve the same ``augmentation'' problems and the analysis of~\citep{goemans1994improved} for reverse order augmentation framework proves $O(\log k)$-approximation for $k$-ECSS via the standard augmentation framework. 
\end{remark}

\bibliographystyle{abbrvnat}
\bibliography{streaming-network-design}
\end{document}